  \newcommand{\Lower}[1]{\smash{\lower 1.5ex \hbox{#1}}}
 \numberwithin{equation}{section}
  \def\vep{\varepsilon}
  \def\R{\mathbb{R}}
  \def\bD{\bm{D}}
  \def\bX{\bm{X}}
  \def\bY{\bm{Y}}
  \def\bZ{\bm{Z}}
  \def\bv{\bm{v}}
  \def\bs{\bm{s}}
  \def\bbt{\bm{\beta}}
  \def\balpha{\bm{\alpha}}
  \def\bphi{\bm{\phi}}
  \newenvironment{eqarray}{\arraycolsep 0.14em\begin{eqnarray}}{\end{eqnarray}}
  \newenvironment{eqarray*}{\arraycolsep 0.14em\begin{eqnarray*}}{\end{eqnarray*}}
  \newtheorem{thm}{Theorem}
  \numberwithin{thm}{section}
  \newtheorem{lem}[thm]{Lemma}
  \numberwithin{equation}{section}
 \journal{Submitted to TBA}
\begin{document}
\allowdisplaybreaks
\begin{frontmatter}

\title{Extrapolation Estimation for Nonparametric Regression\\ with Measurement Error}
\author{Weixing Song, Kanwal Ayub}
\address{Department of Statistics, Kansas State University, Manhattan, KS 66506}
\author{Jianhong Shi\corref{cor}}
\address{School of Mathematics and Computer Sciences, Shanxi Normal University, Linfen, China 041000}
\ead{weixing@ksu.edu}
\cortext[cor]{Corresponding author}

\begin{abstract}
  For the nonparametric regression models with covariates contaminated with normal measurement errors, this paper proposes an extrapolation algorithm to estimate the nonparametric regression functions. By applying the conditional expectation directly to the kernel-weighted least squares of the deviations between the local linear approximation and the observed responses, the proposed algorithm successfully bypasses the simulation step needed in the classical simulation extrapolation method, thus significantly reducing the computational time. It is noted that the proposed method also provides an exact form of the extrapolation function, but the extrapolation estimate generally cannot be obtained by simply setting the extrapolation variable to negative one in the fitted extrapolation function if the bandwidth is less than the standard deviation of the measurement error. Large sample properties of the proposed estimation procedure are discussed, as well as simulation studies and a real data example being conducted to illustrate its applications.
\end{abstract}

 \begin{keyword} Nonparametric regression \sep Measurement Error \sep Simulation and Extrapolation \sep Local linear smoothing \vskip 0.02in

 \MSC primary 62G05\sep secondary 62G08
\end{keyword}
\end{frontmatter}



 \section{Introduction}\label{sec1}

  Due to its conceptual simplicity and the capability to harness the modern computational power, the simulation extrapolation estimation (SIMEX) procedure has been attracting significant attention from practical data analysts as well as theoretical researchers. The simplicity of the SIMEX lies in the fact that it allows us to directly use any standard estimates based on the known data as the building block, and its simulation nature makes the estimation process computer-dependent only. To be specific, suppose we want to estimate a parameter $\theta$, possibly multidimensional, in a statistical population $X$ of dimension $p$, where $p\geq 1$. In certain situations where we cannot collect observations directly from $X$, what we observe is a surrogate value $Z$ of $X$. In measurement error literature, a classical assumption on the relationship between $X$ and $Z$ is $Z=X+U$, where $U$ is called the measurement error, which is often assumed to be independent of $X$, and has a normal distribution with mean $0$ and known covariance matrix $\Sigma_u$. If there is an estimator $T(\bX)$ of $\theta$ when a sample $\bX=\{X_1,\ldots, X_n\}$ of size $n$ from $X$ is available, then when only $Z$ can be observed, the classical SIMEX procedure estimates $\theta$, using sample $\bZ=\{Z_1,\ldots, Z_n\}$ from $Z$, by going through the following three steps. First, we generate $n$ i.i.d. random vectors $V_i$'s from $N(0,\Sigma_u)$, select a nonnegative number $\lambda$, calculate $\tilde Z_i(\lambda)=Z_i+\sqrt{\lambda}V_i$ for $i=1,2,\ldots, n$, and compute $T(\tilde\bZ(\lambda))$ based on $\tilde{\bZ}(\lambda)=\{\tilde Z_1(\lambda),\ldots, \tilde Z_n(\lambda)\}$. Second, we calculate the conditional expectation of $T(\tilde\bZ(\lambda))$ given $\bZ$. If the conditional expectation has a closed form, then it will be the estimate of $\theta$, otherwise, we repeat the previous step $B$ times to obtain $B$ values of $T_b(\tilde\bZ(\lambda))$, $b=1,2,\ldots, B$, and the average $\bar T(\lambda)$ of these $B$ values of $T_b(\tilde\bZ(\lambda))$'s is computed. Finally, we repeat the first step and second step for a sequence of nonnegative $\lambda$ values, for example, $0=\lambda_1<\ldots<\lambda_K$ for some $K$. We denote these $K$ averages as $\bar T(\lambda_1),\ldots, \bar T(\lambda_K)$. To conclude, the trend of $\bar T(\lambda)$ with respect to $\lambda$ will be formulated as a function of $\lambda$, and the extrapolated value of this function at $\lambda=-1$ is the desired SIMEX estimate of $\theta$. In real applications, $K$ is suggested to be less than $20$ and these $K$ $\lambda$-values are chosen equally spaced from $[0,2]$. The early development of the classical SIMEX estimation procedure can be found in \cite{cook1994}, \cite{stefan1995} and \cite{carroll1996}, with extensive applications in \cite{mallick2002} for cox regression, \cite{sevil2019} for Log-logistic accelerated failure time models, \cite{gould1999} for the catch-effort analysis, \cite{hwang2003}, \cite{stoklosa2016} for the capture-recapture models, \cite{lin1999} for the analysis of the Framingham heart disease data using the logistic regression, \cite{hardin2003} for generalized linear models, and \cite{ponzi2019} for some applications in ecology and evolution, to name a few.

  However, the discussion of the classical SIMEX estimation procedure in the nonparametric setup seems scant in the literature. \cite{stef1996} applied the simulation extrapolation procedure to estimate the cumulative distribution function of a finite population based on the Horvitz-Thompson estimator. Since the conditional expectation of the Horvitz-Thompson estimator with the true variable replaced by the pseudo-data given the observable surrogates has an explicit form, the simulation step can be bypassed. Also, the quadratic function of $\lambda$ is shown to be a reasonable extrapolation function. \cite{car1999} extended the classical SIMEX procedure to the nonparametric regression setup and it was implemented with the local linear estimator. In \cite{car1999}'s work, the three steps in the classical SIMEX procedure are strictly followed. To estimate the unknown variance function in a general one-way analysis of variance model, \cite{carr2008} proposed a permutation SIMEX estimation procedure to completely remove the bias after extrapolation.
  \cite{wang2010} generalized \cite{stef1996}'s method to estimate the smooth distribution function in the presence of heteroscedastic normal measurement errors. Aiming at improving the SIMEX local linear estimator in \cite{car1999}, \cite{stauden2004} introduced a new local polynomial estimator with the SIMEX algorithm. The improvement over the existing estimation procedure is made possible by using a bandwidth selection procedure. Again, \cite{stauden2004}'s method still strictly followed the three-steps in the classical SIMEX.

  Compared to various applications in both the parametric and nonparametric statistical models, the SIMEX procedure developed in \cite{stef1996} and \cite{wang2010} successfully dodged the simulation step, which is the most time-consuming part in the classical SIMEX algorithm. The very reason why their methods work is that the averaged naive estimator from the pseudo-data, conditioning on the observed surrogates, has an explicit limit ready for extrapolation, as the number of pseudo-data sets tends to infinity. Clearly, the strategy used in both references cannot be directly extended to other scenarios where such limits do not have user-friendly forms. In this paper, we will propose a new method, which in spirit is a variant of the classical SIMEX procedure, for estimating the nonparametric regression. The new method can also successfully circumvent the simulation step, and the applicable extrapolation functions can also be found, although still being approximated, based on the true but not usable extrapolation functions derived from the theory.

  \section{Motivating Examples}\label{sec2}

  In this section, we shall discuss two motivating examples which inspired our interest in searching for a more efficient bias reduction estimation procedure in the nonparametric setup. Our ambition is to keep the attractive feature of the extrapolation component in the classical SIMEX algorithm, while at the same time, significantly reducing the computational burden.

  \subsection{Simple linear regression model}

  Let $Y$ and $X$ be two univariate random variables, which obey a simple linear relationship $E(Y|X)=\alpha+\beta X$. Suppose we cannot observe $X$ but we have data on $Z=X+U$ and $U\sim N(0,\sigma_u^2)$ with $\sigma_u^2$ being known. As discussed in \cite{car1999}, for any fixed $\lambda>0$, after repeatedly adding the extra measurement errors, and computing the ordinary least squares slope, the averaged estimator consistently estimates $g(\lambda)=\beta\sigma_X^2/(\sigma_X^2+(1+\lambda)\sigma_u^2)$. Obviously, extrapolating $\lambda$ to $-1$, we have $g(-1)=\beta$. This clearly shows that SIMEX works very well for linear regression model. In fact, in the seminal paper \cite{cook1994}, the SIMEX estimators of $\alpha$ and $\beta$ can be derived without the simulation step. However, the derivation relies on a notion of NON-IID pseudo-errors. More details about the NON-IID pseudo-errors can be found in \cite{cook1994} and Section 5.3.4.1 in \cite{carr2006}. Here we would like to point out that the SIMEX estimators of $\alpha$, $\beta$ can be obtained without using the NON-IID pseudo-errors.

  Recall that the least squares (LS) estimator of $\alpha$ and $\beta$ can be obtained by minimizing the LS criterion
  $\sum_{i=1}^n(Y_i-\alpha-\beta X_i)^2$. Since $X_i$ are not available, following the SIMEX idea, we generate the pseudo-data $Z_i(\lambda)=Z_i+\sqrt{\lambda}V_i$, $i=1,2,\ldots,n$. However, instead of following the classical SIMEX road map to minimize the LS target function $\sum_{i=1}^n(Y_i-\alpha-\beta Z_i(\lambda))^2$, we minimize the conditional expectation $E\left[\sum_{i=1}^n(Y_i-\alpha-\beta Z_i(\lambda))^2\Big|\bD\right]$,
  where $\bD=(\bY,\bZ)$,  $\bY=(Y_1,\ldots, Y_n)$ and $\bZ=(Z_1,\ldots, Z_n)$. Since $V_i$'s are i.i.d. from $N(0,\sigma_u^2)$ and independent of other random variables in the model, so this conditional expectation equals
  $\sum_{i=1}^n(Y_i-\alpha-\beta^T Z_i)^2+n\lambda\bbt^T\Sigma_{U}\bbt$.
  The minimizer of the above expression is simply $
  \hat\beta(\lambda)=(S_{ZZ}+\lambda\Sigma_{U})^{-1}S_{YZ}$ and $\hat\alpha(\lambda)=\bar Y-\hat\beta^T(\lambda)\bar X$
  and by choosing $\lambda=-1$, we immediately have the commonly used bias-corrected estimators or the SIMEX estimators derived using NON-IID pseudo-errors. Note that here not only do we not need the simulation step, but also the extrapolation step is unnecessary.

  \subsection{Kernel density estimation}

   Suppose we want to estimate the density function $f_x(x)$ of $X$ in the measurement error model $Z=X+U$. When observations can be made directly on $X$, the kernel density estimation procedure is often called on for this purpose. Starting with the classical kernel estimator, \cite{wang2009} followed the classical SIMEX algorithm, constructed an average of the kernel estimator $\hat f_{B,n}(x)=B^{-1}\sum_{b=1}^B[n^{-1}\sum_{i=1}^nK_h(x-Z_i-\sqrt{\lambda}V_{i,b})]$ with $B$ pseudo-data sets $\{Z_i+\sqrt{\lambda}V_{i,b}\}_{i=1}^n$, $b=1,2,\ldots,B$, where $K_h(\cdot)=h^{-1}K(\cdot/h)$. By the law of large numbers, $\hat f_{B,n}(x)\to n^{-1}\sum_{i=1}^n\int K_h(x-Z_i-\sqrt{\lambda}\sigma_uu)\phi(u)du=\tilde f_n(x)$ in probability. After some algebra, \cite{wang2009} proposed to estimate $f_x(x)$ using $\hat f_n(x)=n^{-1}\sum_{i=1}^n (\sqrt{\lambda}\sigma_u)^{-1}\phi((x-Z_i)/\sqrt{\lambda}\sigma_u)$ which approximates the limit $\tilde f_n(x)$ for sufficiently large $n$. In fact, before initiating the simulation step, \cite{cook1994} suggested one should try to calculate the conditional expectation $E[f_{B,n}(x)|\bZ]$ first. If this conditional expectation has a tractable form, then it will be chosen as the SIMEX estimator. Clearly, the conditional expectation is simply $\tilde f_n(x)$. It is interesting to note that if we deliberately choose the kernel function $K$ to be standard norm, we can show that
   $\tilde f_n(x)=(n\sqrt{\lambda\sigma_u^2+h^2})^{-1}\sum_{i=1}^n \phi((x-Z_i)/\sqrt{\lambda\sigma_u^2+h^2})$ which can also be directly used for extrapolation. Because there is no approximation done here, $\tilde f_n(x)$ should potentially perform better than the estimator $\hat f_n(x)$ as proposed in \cite{wang2009}.

   It is easy to see that the technique used in the kernel density estimation cannot be extended to the regression setup, since the commonly used kernel regression estimators, either the Nadaraya-Watson estimator, or the local linear estimator, often appear as a fraction of kernel components, which fails to provide a tractable conditional expectation for direct extrapolation. However, the observation of recovering the commonly used bias-corrected estimators or the SIMEX estimators derived using NON-IID pseudo-errors in the linear errors-in-variables regression indicates that we could have some interesting findings if we can apply the conditional expectation argument directly on the target functions, instead of computing the conditional expectation of the resulting naive estimator. In the next section, we will implement this idea via estimating the nonparametric regression function using a local linear smoothing procedure.

  \section{Extrapolation Estimation Procedure via Local Linear Smoother}\label{sec3}

  For the sake of simplicity, we restrict ourselves to the univariate predictor cases. The proposed methodology can handle the multivariate predictor cases very well at the cost of introducing more complex notations. To be specific, suppose that the random pair $(X,Y)$ obeys the following nonparametric regression model
     \begin{equation}\label{eq3.1}
        Y=g(X)+\vep,\quad Z=X+U
     \end{equation}
  with the common assumption on $\vep$, $E(\vep|X)=0$ and $0<\tau^2(X)=E(\vep^2|X)<\infty$. $X$ and $U$ are independent and $U$ has a normal distribution $N(0,\sigma_u^2)$ with known $\sigma_u^2$. If $(X,Y)$ are available, the local linear estimator for $g(x)$ at a fixed $x$-value in the domain of $X$ is defined as
      $$
        \hat g_n(x)=\frac{S_{2n}(x)T_{0n}(x)-S_{1n}(x)T_{1n}(x)}{S_{2n}(x)S_{0n}(x)-S_{1n}^2(x)},
      $$
  where
    $
      S_{jn}(x)=n^{-1}\sum_{i=1}^n(X_i-x)^jK_h(X_i-x), T_{jn}(x)=n^{-1}\sum_{i=1}^n(X_i-x)^jY_iK_h(X_i-x),
    $
  and $j=0,1,2$ for $S_{jn}(x)$, $j=0,1$ for $T_{jn}(x)$, $K_h(\cdot)=h^{-1}K(\cdot/h)$, and $K$ is a kernel function, $h$ is a sequence of positive numbers often called bandwidths. In the measurement error setup, a classical SIMEX estimator of $g$ can be obtained through three steps: simulation, estimation and extrapolation. For the sake of completeness, the following algorithm provides a detailed guideline for implementing the three steps in estimating $g(x)$ from data on $Y, Z$.
   \begin{center}
   \underline{~~~~~~~~~~~~~~~~~~~~~~~~~~~~~~~~~~~~~~~~~~~~~~~~~~~~~~~~~~~~~~~~~~~~~~~~~~~~~~~~~~~~~~~~~~~~~~~~~~~~~~~~~~~~~~~~~~}
   \vskip 0.1in
   \begin{minipage}[c]{5.5in}
      \hskip -0.3in {\bf SIMEX Algorithm of Local Linear Smoother}\vskip 0.1in
       \begin{enumerate}
    \item[(1)] Pre-select a sequence of positive numbers $\lambda=\lambda_1,\ldots,\lambda_K$.
    \item[(2)] For $\lambda=\lambda_1$, repeat the following steps $B$ times. At the $b$-th repetition,
      \begin{enumerate}
         \item[(i)] Generate $n$ i.i.d. random observations $V_{i,b}$'s from $N(0,\Sigma)$, and calculate
                    $Z_{i,b}(\lambda)=Z_i+\sqrt{\lambda_1}V_i=X_i+U_i+\sqrt{\lambda_1}V_{i,b}$, $i=1,2,\ldots,n.$
         \item[(ii)] Compute
               $$
                \hat g_{n,b}(x;\lambda_1)=\frac{S_{2nb}(x)T_{0nb}(x)-S_{1nb}(x)T_{1nb}(x)}{S_{2nb}(x)S_{0nb}(x)-S_{1nb}^2(x)},
               $$
         where
          \begin{eqarray*}
             S_{jnb}(x)&=&\frac{1}{n}\sum_{i=1}^n(Z_{i,b}(\lambda)-x)^jK_h(Z_{i,b}(\lambda)-x),\quad j=0,1,2,\\
             T_{lnb}(x)&=&\frac{1}{n}\sum_{i=1}^n(Z_{i,b}(\lambda)-x)^lY_iK_h(Z_{i,b}(\lambda)-x), \quad l=0,1.
          \end{eqarray*}\\ \vskip -0.3in \noindent
      \end{enumerate}
    \item[(3)] Calculate
                   $
                      \hat g_{n,B}(x;\lambda_1)=B^{-1}\sum_{b=1}^B\hat g_{n,b}(x;\lambda_1).
                   $
    \item[(4)] Repeat (2)-(3) for $\lambda=\lambda_2, \ldots, \lambda_K$.
    \item[(5)] Identify a parametric trend of the pairs $(\lambda_k, \hat g_{n,B}(x;\lambda_k))$, $k=1,2,\ldots, K$ and denote the trend as a function $\Gamma(x; \lambda)$. The SIMEX estimator of $g$ is defined as  $\hat g_{\tiny\rm SIMEX}(x)=\Gamma(x; -1)$.
  \end{enumerate}
   \end{minipage}\vskip 0.1in
   \underline{~~~~~~~~~~~~~~~~~~~~~~~~~~~~~~~~~~~~~~~~~~~~~~~~~~~~~~~~~~~~~~~~~~~~~~~~~~~~~~~~~~~~~~~~~~~~~~~~~~~~~~~~~~~~~~~~~~}
   \end{center}
 As a rough guideline, the $\lambda$ values are often selected as a sequence of equally spaced grid points from $[0,2]$,  $K$ is a positive integer as small as $5$ or as large as $20$, and $B$ is often chosen to be $100$ or above. With such choices, one can see the classical SIMEX procedure for implementing the local linear smoother is computationally intensive.

 To introduce our estimation procedure, we start with the local linear procedure. If $X$ can be observed, then based on a sample $(X_i,Y_i), i=1,2,\ldots,n$ from model (\ref{eq3.1}), the local linear estimator of the regression function $g$, as well as its first order derivative at $x$, can be obtained by minimizing the following target function
   $L(\beta_0,\beta_1)=\sum_{i=1}^n(Y_i-\beta_0-\beta_1(X_i-x))^2K_h(X_i-x)$
 with respect to $\beta_0$ and $\beta_1$. In fact, the solution of $\beta_0$ is the local linear estimator of $g(x)$ and $\beta_1$ is the local linear estimator of $g'(x)$.

 For a positive constant $\lambda$, we replace $X_i$ with the pesudo-data $Z_i(\lambda)=Z_i+\sqrt{\lambda}V_i$ in the weighted least squares $L(\beta_0,\beta_1)$, and calculate its conditional expectation given $(Z_i,Y_i), i=1,2,\ldots,n$. A straightforward calculation shows that the minimizer of
      $$
         \sum_{i=1}^n E\left([Y_i-\beta_0-\beta_1(Z_{i}(\lambda)-x)]^2K_h(x-Z_i(\lambda))|(Y_i,Z_i)\right)
      $$
 with respect to $\beta_0, \beta_1$ is given by the solution of the following equations
       \begin{equation}\label{eq3.2}
        \left\{
        \begin{array}{l}
         \displaystyle\sum_{i=1}^n E\left([Y_i-\beta_0-\beta_1(Z_{i}(\lambda)-x)]K_h(Z_i(\lambda)-x)|(Y_i,Z_i)\right)=0,\\[0.13in]
         \displaystyle\sum_{i=1}^n E\left([Y_i-\beta_0-\beta_1(Z_{i}(\lambda)-x)](Z_{i}(\lambda)-x)K_h(Z_i(\lambda)-x)|(Y_i,Z_i)\right)=0.
        \end{array}
        \right.
       \end{equation}
 The choice of kernel function $K$ is not critical in theory, but for the ease of computation, choosing $K$ to be standard normal can bring us extra benefits. In fact, with such a choice, together with the normality of the measurement error, the conditional expectations in (\ref{eq3.2}) have explicit forms. Note that $V_i$'s are i.i.d. from $N(0,\sigma_u^2)$ and independent of $(Z_i, Y_i)$, routine calculation (see Appendix) shows that
   \begin{eqarray}
     E[K_h(Z(\lambda)-x)|Y,Z]&=&\phi(x; Z, h^2+\lambda\sigma_u^2),\label{eq3.3}\\
     E[(Z(\lambda)-x)K_h(Z(\lambda)-x)|Y,Z]&=& \frac{h^2}{h^2+\lambda\sigma_u^2}(Z-x)\phi(x; Z,h^2+\lambda\sigma_u^2),\label{eq3.4}\\
     E[(Z(\lambda)-x)^2K_h(Z(\lambda)-x)|Y,Z]&=& \frac{h^4}{(h^2+\lambda\sigma_u^2)^2}(Z-x)^2\phi(x; Z,h^2+\lambda\sigma_u^2)\nonumber\\
       &&+\frac{\lambda\sigma_u^2h^2}{h^2+\lambda\sigma_u^2}\phi(x; Z,h^2+\lambda\sigma_u^2),\label{eq3.5}
   \end{eqarray}
 here, also throughout this paper, $\phi(x; \mu,\sigma_u^2)$ denotes the normal density function with mean $\mu$ and variance $\sigma_u^2$. Denote
  $ A_{nj}(x)=n^{-1}\sum_{i=1}^n(Z_i-x)^j\phi(x; Z_i, h^2+\lambda\sigma_u^2)$
 for $j=0,1,2$, and
  $B_{nl}(x)=n^{-1}\sum_{i=1}^nY_i(Z_i-x)^l\phi(x; Z_i, h^2+\lambda\sigma_u^2)$
 for $l=0,1$.
 Then the solution of $(\beta_0, \beta_1)$ of equation (\ref{eq3.2}), or $(\hat g_n(x;\lambda), \hat g_n'(x;\lambda))$ has the form of
  \begin{equation}\label{eq3.6}
    \begin{pmatrix}
      \hat g_n(x;\lambda) \\ \hat g_n'(x;\lambda)
    \end{pmatrix}=
    \begin{pmatrix}
      A_{n0}(x)  & r(\lambda,h)A_{n1}(x)\\
      r(\lambda,h)A_{n1}(x) &
      r(\lambda,h)[A_{n2}(x)+\lambda\sigma_u^2A_{n0}(x)]
    \end{pmatrix}^{-1}
    \begin{pmatrix}
      B_{n0}(x) \\  r(\lambda,h)B_{n1}(x)
    \end{pmatrix},
  \end{equation}
 where $r(\lambda,h)=h^2/(h^2+\lambda\sigma_u^2)$.

 Note that (\ref{eq3.6}) itself can be used for extrapolation. However, unlike the estimator $\hat\beta(\lambda)$, $\hat\alpha(\lambda)$ derived in the example of the linear regression,  $\lambda=-1$ cannot be plugged directly into (\ref{eq3.6}) to get the SIMEX estimator. In fact, when the sample size $n$ gets bigger, the bandwidth $h$ decreases to $0$. As a result, when $\lambda=-1$, $h^2+\lambda\sigma_u^2$ is negative for large sample sizes. As the variance of a normal density function, $h^2-\sigma_u^2$ should not be negative, which implies the extrapolation step is necessary.

 Therefore, we propose the following two-step SIMEX procedure, or more appropriately, the extrapolation (EX) procedure, to find an estimate of the regression function $g$. \vskip 0.15in
   \begin{center}
   \underline{~~~~~~~~~~~~~~~~~~~~~~~~~~~~~~~~~~~~~~~~~~~~~~~~~~~~~~~~~~~~~~~~~~~~~~~~~~~~~~~~~~~~~~~~~~~~~~~~~~~~~~~~~~~~~~~~~~}
   \vskip 0.1in
   \begin{minipage}[c]{5.5in}
      \hskip -0.3in {\bf EX Algorithm of The Local Linear Smoother}\vskip 0.1in
       \begin{enumerate}
        \item[(1)] For each $\lambda$ from the pre-selected sequence $\lambda=\lambda_1,\ldots,\lambda_K$, calculate $\hat g_n(x;\lambda)$;
        \item[(2)] Identify a trend of the pairs $(\lambda_k, \hat g_n(x;\lambda_k))$ and $(\lambda_k, \hat g_n'(x;\lambda_k))$, $k=1,2,\ldots, K$. Denote the trend as a function $G(x; \lambda)$, respectively. Then, the EX estimator of $g$ and its derivative are defined by $\hat g_{\tiny\rm EX}(x)=G(x; -1)$.
  \end{enumerate}
   \end{minipage}\vskip 0.1in
   \underline{~~~~~~~~~~~~~~~~~~~~~~~~~~~~~~~~~~~~~~~~~~~~~~~~~~~~~~~~~~~~~~~~~~~~~~~~~~~~~~~~~~~~~~~~~~~~~~~~~~~~~~~~~~~~~~~~~~}
   \end{center}

 Obviously, the above EX algorithm is much more efficient than the classical three-step SIMEX algorithm. Also,it is also easy to see that $\hat g_n(x;\lambda)$ from the EX algorithm is not the limit of $\hat g_{n,B}(x;\lambda)$ in the SIMEX algorithm as $B\to\infty$. Given the observed data $(Z_i,Y_i)_{i=1}^n$, by the law of large numbers, for a fixed $\lambda$, as $B\to\infty$, $\hat g_{n,B}(x;\lambda)=B^{-1}\sum_{b=1}^B\hat g_{n,b}(x;\lambda)$ converges to $\tilde g_n(x;\lambda)$ in probability, where
   \begin{eqarray}\label{eq3.7}
     \tilde g_n(x;\lambda)=\int\frac{S_{2n}(x,\bv)T_{0n}(x,\bv)-S_{1n}(x,\bv)T_{1n}(x,\bv)}{S_{2n}(x,\bv)S_{0n}(x,\bv)-S_{1n}^2(x,\bv)}
     \bphi(\bv;0,\lambda\sigma_u^2)d\bv,
   \end{eqarray}
 where $\bv=(v_1,\ldots, v_n)^T$, $\bphi(\bv;0,\lambda\sigma_u^2)=\prod_{i=1}^n\phi(v_i;0,\lambda\sigma_u^2)$, and
          \begin{eqarray*}
             S_{jn}(x,\bv)&=&\frac{1}{n}\sum_{i=1}^n(Z_i+v_i-x)^jK_h(Z_i+v_i-x),\quad j=0,1,2,\\
             T_{ln}(x,\bv)&=&\frac{1}{n}\sum_{i=1}^n(Z_i+v_i-x)^lY_iK_h(Z_i+v_i-x), \quad l=0,1.
          \end{eqarray*}\\ \vskip -0.3in \noindent
 However, the estimator $\hat g_n(x;\lambda)$ defined in (\ref{eq3.6}) has the form of
    \begin{eqarray}\label{eq3.8}
     \hat g_n(x;\lambda)&=&\frac{\tilde S_{2n}(x)\tilde T_{0n}(x)-\tilde S_{1n}(x)\tilde T_{1n}(x)}{\tilde S_{2n}(x)\tilde S_{0n}(x)-\tilde S_{1n}^2(x)}\nonumber\\
     &=&\frac{A_{n2}(x)B_{n0}(x)+\lambda\sigma_u^2r^{-1}(\lambda,h)A_{n0}(x)B_{n0}(x)-A_{n1}(x)B_{n1}(x)}
     {A_{n2}(x)A_{n0}(x)+\lambda\sigma_u^2r^{-1}(\lambda,h)A_{n0}^2(x)-A_{n1}(x)},
    \end{eqarray}
 where
   \begin{equation}\label{eq3.9}
      \tilde S_{jn}(x)=\int S_{jn}(x,\bv)\bphi(\bv;0,\lambda\sigma_u^2)d\bv, \quad \tilde T_{ln}(x)=\int T_{ln}(x,\bv)\bphi(\bv;0,\lambda\sigma_u^2)d\bv
   \end{equation}
 for $j=0,1,2$ and $l=0,1$, respectively. Therefore, $\tilde g_n(x;\lambda)$ is different from $\hat g_n(x;\lambda)$, which indicates that $\hat g_n(x;\lambda)$ from the SIMEX algorithm is not the limit of the EX algorithm  as $B\to\infty$. In fact, $\hat g_n(x;\lambda)$ can be viewed as the limit of $\hat g_{n,b}(x;\lambda)$ with $S_{jnb}(x)$ and $T_{lnb}(x)$ replaced by $B^{-1}\sum_{b=1}^BS_{jnb}(x)$ and $B^{-1}\sum_{b=1}^BT_{lnb}(x)$, $j=0,1,2$, $l=0,1$, respectively, as $B\to\infty.$

 \section{Asymptotic Theory of EX Algorithm}\label{sec4}

  In this section, we shall investigate the large sample behaviours for the EX algorithm proposed in the previous section.
  We will show that as $n\to\infty$, $\hat g_n(x;\lambda)$ indeed converges to a function of both $x$ and $\lambda$, but the latter can approximate the true regression function $g(x)$ as $\lambda\to -1$, thus justifying the effectiveness of extrapolation. The asymptotic joint distribution of $\hat g_n(x;\lambda)$ at different $\lambda$ values, including $\lambda=0$ which corresponds to the naive estimator, will be also discussed.

  The following is a list of regularity conditions we need to justify all the theoretical derivations.
   \begin{itemize}
     \item[] C1. $f_x(x)$, $g(x)$ $\tau^2(x)=E(\vep^2|X=x)$, $\mu(x)=E(|\vep|^3|X=x)$ are twice continuously differentiable; also for each $x$ in the support of $X$, as a function of $t$, $\eta'(t+x), \eta''(t+x)\in L_2(\phi(t,0,\sigma_u^2))$, where $\eta=f_x, g, g^2, \tau^2, \tau^4$, $\mu$ and $\mu^2$.
     \item[] C2. The bandwidth $h$ satisfies $h\to 0$, $nh\to\infty$ as $n\to\infty$.
   \end{itemize}

  To proceed, for integers $j\geq 0$, we denote
   \begin{eqarray*}
    f_{j,\lambda}(x)&=&\int \phi(t;x,(\lambda+1)\sigma_u^2)t^jf_X(t)dt,\quad g_{j,\lambda}(x)=\int t^jg(t)f_X(t)\phi(t,x,(1+\lambda)\sigma_u^2)dt,\\
    G_{j,\lambda}(x)&=&\int t^jg^2(t)f_X(t)\phi(t,x,(1+\lambda)\sigma_u^2)dt,\quad
    H_{j,\lambda}(x)=\int t^j\tau^2(t)f_X(t)\phi(t,x,(1+\lambda)\sigma_u^2)dt.
  \end{eqarray*}\\ \vskip -0.3in \noindent

  By a routine and tedious calculation, we can show the following result from which the asymptotic bias of $\hat g_n(x; \lambda)$ can be derived as $n\to\infty$.

  \begin{thm}\label{thm0}
    Under conditions C1 and C2, for each $\lambda\geq 0$, we have
  \begin{eqarray}\label{eq4.1}
    &&\frac{E\tilde S_{n2}(x)\cdot E\tilde T_{n0}(x)-E\tilde S_{n1}(x)\cdot E\tilde T_{n1}(x)}{E\tilde S_{n2}(x)\cdot E\tilde S_{n0}(x)-[E\tilde S_{n1}(x)]^2}=\frac{g_{0,\lambda}(x)}{f_{0,\lambda}(x)}+h^2B(x;\lambda)+o(h^2),
  \end{eqarray}
  where $B(x;\lambda)$ equals
  $$
   \frac{f_{0,\lambda}(x)g''_{0,\lambda}(x)-f_{0,\lambda}''(x)g_{0,\lambda}(x)}{2f_{0,\lambda}^2(x)}
   +\frac{(f_{1,\lambda}(x)-xf_{0,\lambda}(x))(g_{0,\lambda}(x)f_{1,\lambda}(x)-f_{0,\lambda}(x)g_{1,\lambda}(x))}
   {(\lambda+1)^2\sigma_u^4f_{0,\lambda}^3(x)},
  $$
  where $\tilde S_{nj}(x)$ and $\tilde T_{nl}(x)$ for $j=0,1,2$ and $l=0,1$ are defined in (\ref{eq3.9}).
  \end{thm}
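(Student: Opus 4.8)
The plan is to reduce the left-hand side of (\ref{eq4.1}) to a ratio in the scalar expectations $a_j:=E A_{nj}(x)$ and $b_l:=E B_{nl}(x)$. By (\ref{eq3.3})--(\ref{eq3.5}) and linearity of expectation, this left-hand side coincides with the right-hand side of (\ref{eq3.8}) after replacing $A_{nj}(x),B_{nl}(x)$ by $a_j,b_l$. Writing $c:=\lambda^2\sigma_u^4/h^2$ and noting $\lambda\sigma_u^2 r^{-1}(\lambda,h)=\lambda\sigma_u^2+c$, the ratio becomes $(c\,a_0b_0+P)/(c\,a_0^2+Q)$ with $P:=a_2b_0+\lambda\sigma_u^2 a_0b_0-a_1b_1$ and $Q:=a_0a_2+\lambda\sigma_u^2 a_0^2-a_1^2$. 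The feature driving the whole analysis is that $c\sim h^{-2}$ diverges: the leading behaviour is set by the $c$-terms, which produce the term $g_{0,\lambda}/f_{0,\lambda}$, and extracting the $O(h^2)$ correction requires a careful two-scale expansion.

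First I would evaluate the expectations in closed form. Writing $Z=X+U$ and using $E(\vep\mid X,U)=0$ together with the Gaussian identity $\phi(z;x,s^2)\phi(z;t,\sigma_u^2)=\phi(x;t,s^2+\sigma_u^2)\,\phi(z;\mu^*,\tau^{*2})$, where $s^2=h^2+\lambda\sigma_u^2$, $\rho^2:=s^2+\sigma_u^2=h^2+(\lambda+1)\sigma_u^2$, $\tau^{*2}=s^2\sigma_u^2/\rho^2$ and $\mu^*-x=(s^2/\rho^2)(t-x)$, every $a_j,b_l$ reduces to an integral of a low-degree polynomial in $(t-x)$ against $f_X(t)\phi(x;t,\rho^2)$ (and $g(t)f_X(t)\phi(x;t,\rho^2)$). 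In particular $a_0=\int f_X(t)\phi(x;t,\rho^2)\,dt$ and $b_0=\int g(t)f_X(t)\phi(x;t,\rho^2)\,dt$ carry no extra $h$-dependent prefactor, while $a_1,b_1$ pick up a factor $s^2/\rho^2$ and $a_2$ a combination of $s^2\sigma_u^2/\rho^2$ and $s^4/\rho^4$. The heat-equation identity $\partial_v\phi(x;t,v)=\tfrac12\partial_x^2\phi(x;t,v)$ then yields
\begin{equation*}
a_0=f_{0,\lambda}(x)+\tfrac{h^2}{2}f_{0,\lambda}''(x)+o(h^2),\qquad b_0=g_{0,\lambda}(x)+\tfrac{h^2}{2}g_{0,\lambda}''(x)+o(h^2),
\end{equation*}
and, to leading order, $a_1\to\frac{\lambda}{\lambda+1}(f_{1,\lambda}-xf_{0,\lambda})$, $b_1\to\frac{\lambda}{\lambda+1}(g_{1,\lambda}-xg_{0,\lambda})$ and $a_2\to\frac{\lambda\sigma_u^2}{\lambda+1}f_{0,\lambda}+\frac{\lambda^2}{(\lambda+1)^2}(f_{2,\lambda}-2xf_{1,\lambda}+x^2f_{0,\lambda})$. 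Condition C1 justifies differentiating under the integral sign and bounding the remainders by $o(h^2)$.

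Next I would assemble the ratio. Since $1/c=O(h^2)$,
\begin{equation*}
\frac{c\,a_0b_0+P}{c\,a_0^2+Q}=\frac{b_0}{a_0}+\frac{1}{c}\cdot\frac{a_0P-b_0Q}{a_0^3}+o(h^2).
\end{equation*}
The first piece, expanded through $O(h^2)$, gives $g_{0,\lambda}/f_{0,\lambda}$ plus $\tfrac{h^2}{2}(f_{0,\lambda}g_{0,\lambda}''-f_{0,\lambda}''g_{0,\lambda})/f_{0,\lambda}^2$, the first summand of $B(x;\lambda)$. For the second piece only the $h\to0$ limits of $a_j,b_l$ are needed; substituting them, all $O(1)$ ``diagonal'' contributions in $a_0P-b_0Q$ cancel, leaving $a_0P-b_0Q\to\frac{\lambda^2}{(\lambda+1)^2}(f_{1,\lambda}-xf_{0,\lambda})(g_{0,\lambda}f_{1,\lambda}-f_{0,\lambda}g_{1,\lambda})$. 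Dividing by $c\,a_0^3\to c\,f_{0,\lambda}^3$ and using $1/c=h^2/(\lambda^2\sigma_u^4)$ cancels the $\lambda^2$ and produces exactly the second summand of $B(x;\lambda)$.

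I expect the delicate cancellation in $a_0P-b_0Q$ to be the main obstacle: because of the $h^{-2}$ rescaling one must compute the leading moments $a_1,a_2,b_1$ precisely and then verify that all $O(1)$ terms drop, collapsing to the compact product form above, since any error there corrupts the second bias term. A secondary point is the boundary case $\lambda=0$, where $c=0$ and the two-scale argument degenerates. There the matrix reduces to the ordinary local-linear (naive) system in the $Z$-data, and I would instead expand $(a_2b_0-a_1b_1)/(a_0a_2-a_1^2)$ directly; consistency with the stated $B(x;0)$ then follows from the convolution identities $f_{1,0}-xf_{0,0}=\sigma_u^2 f_{0,0}'$ and $g_{0,0}f_{1,0}-f_{0,0}g_{1,0}=\sigma_u^2(g_{0,0}f_{0,0}'-f_{0,0}g_{0,0}')$, which recast the formula as $\tfrac12 m''$ with $m=g_{0,0}/f_{0,0}$, the classical local-linear bias.
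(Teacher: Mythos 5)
Your proposal is correct and is essentially the paper's own route: the paper's Lemmas \ref{lem1}--\ref{lem7} carry out exactly your Gaussian product/heat-kernel computations to get $E\tilde S_{nj}(x)$ and $E\tilde T_{nl}(x)$ to order $h^2$ (e.g.\ $E\tilde S_{n1}=\frac{h^2}{(\lambda+1)\sigma_u^2}[f_{1,\lambda}(x)-xf_{0,\lambda}(x)]+o(h^2)$, $E\tilde S_{n2}=h^2f_{0,\lambda}(x)+o(h^2)$), after which the same ratio algebra and cancellation you describe yield $B(x;\lambda)$. The only cosmetic difference is bookkeeping: the paper expands the $\tilde S,\tilde T$ moments directly, which makes both numerator and denominator $\asymp h^2$ uniformly in $\lambda\ge 0$ and so covers $\lambda=0$ without the separate case your two-scale parameter $c=\lambda^2\sigma_u^4/h^2$ forces, though your $\lambda=0$ patch (reducing to the classical local-linear bias $\tfrac12(g_{0,0}/f_{0,0})''$ via $f_{1,0}-xf_{0,0}=\sigma_u^2f_{0,0}'$) is also correct.
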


 Note that, as $\lambda\to -1$, $g_{0,\lambda}(x)\to g(x)f_X(x)$, and $f_{0,\lambda}(x)\to f_X(x)$.
 For $B(x;\lambda)$, we have
   \begin{eqarray*}
     f_{1,\lambda}(x)-xf_{0,\lambda}(x)&=&\int (t-x)f_X(t)\phi(t; x,(\lambda+1)\sigma_u^2)dt=(\lambda+1)\sigma_u^2f_X'(x)+o((\lambda+1)\sigma_u^2),
   \end{eqarray*}\\ \vskip -0.3in \noindent
and $g_{1,\lambda}(x)-xg_{0,\lambda}(x)$ can be written as
   \begin{eqarray*}
    \int (t-x)g(t)f_X(t)\phi(t; x,(\lambda+1)\sigma_u^2)dt
    =(\lambda+1)\sigma_u^2[gf_X]'(x)+o((\lambda+1)\sigma_u^2)
   \end{eqarray*}\\ \vskip -0.3in \noindent
as $\lambda\to -1$. Then we can further show that
  \begin{eqarray}\label{eq4.2}
   \hskip -0.3in B(x;\lambda)
   =\frac{f_{0,\lambda}(x)g''_{0,\lambda}(x)-f_{0,\lambda}''(x)g_{0,\lambda}(x)}{2f_{0,\lambda}^2(x)}
   +\frac{g_{0,\lambda}(x)(f_X'(x))^2}{f_{0,\lambda}^3(x)}-\frac{f_X'(x)[gf_X]'(x)}{f_{0,\lambda}^2(x)}
   +o(1),
  \end{eqarray}\\ \vskip -0.3in \noindent
 where $o(1)$ denotes that the corresponding terms converge to $0$ as $\lambda\to -1$. Therefore, we have
  $
    \lim_{\lambda\to -1}B(x;\lambda)=g''(x)/2.
  $
 Thus, from Theorem \ref{thm0},
   $$
    \lim_{\lambda\to -1}\left[\frac{E\tilde S_{n2}(x)\cdot E\tilde T_{n0}(x)-E\tilde S_{n1}(x)\cdot E\tilde T_{n1}(x)}{E\tilde S_{n2}(x)\cdot E\tilde S_{n0}(x)-[E\tilde S_{n1}(x)]^2}\right]=g(x)+\frac{g''(x)h^2}{2}+o(h^2),
   $$
 and this immediately leads to
  $$
   \lim_{\lambda\to -1}\lim_{h\to 0}\left[\frac{E\tilde S_{n2}(x)\cdot E\tilde T_{n0}(x)-E\tilde S_{n1}(x)\cdot E\tilde T_{n1}(x)}{E\tilde S_{n2}(x)\cdot E\tilde S_{n0}(x)-[E\tilde S_{n1}(x)]^2}\right]=g(x).
  $$

 To investigate the asymptotic distribution of $\hat g_n(x;\lambda)$, denote
  \begin{eqarray*}
    D_{n}(x)&=&(\tilde S_{n2}(x)\tilde S_{n0}(x)-\tilde S_{n1}^2(x))(E\tilde S_{n2}(x)E\tilde S_{n0}(x)-(E\tilde S_{n1}(x))^2),\\
    C_{n0}(x)&=&E\tilde S_{n2}(x)[E\tilde S_{n1}(x)E\tilde T_{n1}(x)-E\tilde S_{n2}(x)E\tilde T_{n0}(x)],\\
    C_{n1}(x)&=&2E\tilde S_{n1}(x)E\tilde S_{n2}(x)E\tilde T_{n0}(x)-(E\tilde S_{n1}(x))^2E\tilde T_{n1}(x)-E\tilde T_{n1}(x)E\tilde S_{n2}(x)E\tilde S_{n0}(x),\\
    C_{n2}(x)&=&E\tilde S_{n1}(x)\cdot[E\tilde S_{n0}(x)E\tilde T_{n1}(x)-E\tilde T_{n0}(x)E\tilde S_{n1}(x)],\\
    D_{n0}(x)&=&E\tilde S_{n2}(x)[E\tilde S_{n2}(x)E\tilde S_{n0}(x)-(E\tilde S_{n1}(x))^2],\\
    D_{n1}(x)&=&E\tilde S_{n1}(x)[(E\tilde S_{n1}(x))^2-E\tilde S_{n2}(x)E\tilde S_{n0}(x)].
  \end{eqarray*}\\ \vskip -0.5in \noindent
 From Theorem \ref{thm0}, we can write $\hat g_n(x)$ as
  \begin{eqarray}\label{eq4.3}
     \hat g_n(x;\lambda)&=&\frac{g_{0,\lambda}(x)}{f_{0,\lambda}(x)}+h^2B(x;\lambda)+o(h^2)\nonumber\\
       &&+D_{n}^{-1}(x)\Big[
     \sum_{j=0}^2C_{nj}(x)(\tilde S_{nj}-E\tilde S_{nj})+
      \sum_{l=0}^1D_{nl}(x)(\tilde T_{nl}-E\tilde T_{nl})\Big].
  \end{eqarray}\\ \vskip -0.4in \noindent
 Denote
   \begin{eqarray*}
     c_{0\lambda}(x)&=&-\frac{g_{0,\lambda}(x)}{f_{0,\lambda}^2(x)},\quad c_{1\lambda}(x)=\frac{2[f_{1,\lambda}(x)-xf_{0,\lambda}(x)]g_{0,\lambda}(x)-
        [g_{1,\lambda}(x)-xg_{0,\lambda}(x)]f_{0,\lambda}(x)}{(\lambda+1)\sigma_u^2f_{0,\lambda}^3(x)},\\
     c_{2\lambda}(x)&=& \frac{[f_{1,\lambda}(x)-xf_{0,\lambda}(x)][g_{1,\lambda}(x)-xg_{0,\lambda}(x)]f_{0,\lambda}(x)-
        [f_{1,\lambda}(x)-xf_{0,\lambda}(x)]^2g_{0,\lambda}(x)}{(\lambda+1)^2\sigma_u^4f_{0,\lambda}^4(x)},\\
     d_{0\lambda}(x)&=&\frac{1}{f_{0,\lambda}(x)}, \quad d_{1\lambda}(x)=-\frac{f_{1,\lambda}(x)-xf_{0,\lambda}(x)}{(\lambda+1)\sigma_u^2f_{0,\lambda}^2(x)}.
   \end{eqarray*}\\ \vskip -0.3in \noindent
 Then, from Lemma \ref{lem3} - Lemma \ref{lem7} in Appendix, we can show that, for $\lambda\geq 0$, $     D_{n}^{-1}C_{n}=c_{j\lambda}(x)+o_p(1)$, for $j=0,1,2$ and $D_{n}^{-1}D_{nj}=d_{j\lambda}(x)+o_p(1)$
 for $j=0,1$. We further denote
  \begin{eqarray*}
    \xi_{0\lambda,i}(x)&=&\phi(x,Z_i,h^2+\lambda\sigma_u^2)-E\phi(x,Z,h^2+\lambda\sigma_u^2),\\
    \xi_{1\lambda,i}(x)&=& \frac{h^2}{h^2+\lambda\sigma_u^2}\left[(Z_i-x)\phi(x,Z_i,h^2+\lambda\sigma_u^2)-E(Z-x)\phi(x,Z,h^2+\lambda\sigma_u^2)\right],\\
    \xi_{2\lambda,i}(x)&=&\frac{h^4}{(h^2+\lambda\sigma_u^2)^2}\left[(Z_i-x)^2\phi(x,Z_i,h^2+\lambda\sigma_u^2)
    -E(Z-x)^2\phi(x,Z,h^2+\lambda\sigma_u^2)\right]\\
    &&+\frac{\lambda\sigma_u^2h^2}{h^2+\lambda\sigma_u^2}\left[\phi(x,Z_i,h^2+\lambda\sigma_u^2)-E\phi(x,,h^2+\lambda\sigma_u^2)\right],\\
    \eta_{0\lambda,i}(x)&=& Y_i\phi(x,Z_i,h^2+\lambda\sigma_u^2)-EY\phi(x,Z,h^2+\lambda\sigma_u^2)\\
    \eta_{1\lambda,i}(x)&=& \frac{h^2}{h^2+\lambda\sigma_u^2}\left[Y_i(Z_i-x)\phi(x,Z_i,h^2+\lambda\sigma_u^2)-EY(Z-x)\phi(x,Z,h^2+\lambda\sigma_u^2)\right].
  \end{eqarray*}\\ \vskip -0.3in \noindent
 Then, from (\ref{eq4.3}), we have
   \begin{eqarray*}
    \hat g_n(x;\lambda)&=&\frac{g_{0,\lambda}(x)}{f_{0,\lambda}(x)}+h^2B(x;\lambda)+o(h^2)\\
       &&\sum_{j=0}^2 [c_{j\lambda}(x)+o(1)](\tilde S_{nj}-E\tilde S_{nj})+\sum_{k=0}^1[d_{k\lambda}(x)+o(1)](\tilde T_{nk}-E\tilde T_{nk}).
   \end{eqarray*}\\ \vskip -0.3in \noindent
 Since the terms $o(1)$ in the above expression does not affect the asymptotic distribution of $\hat g_n(x;\lambda)$, so we can safely neglect the $o(1)$ terms from the sum, and therefore the two sums can be written as an i.i.d. average $n^{-1}\sum_{i=1}^nv_{i\lambda}(x)$, where $v_{i\lambda}(x)$ is defined by
  \begin{equation}\label{eq4.4}
    c_{0\lambda}(x)\xi_{0\lambda,i}(x)+
     c_{1\lambda}(x)\xi_{1\lambda,i}(x)+c_{2\lambda}(x)\xi_{2\lambda,i}(x)+
     d_{0\lambda}(x)\eta_{0\lambda,i}(x)+d_{1\lambda}(x)\eta_{1\lambda,i}(x).
  \end{equation}\\ \vskip -0.4in \noindent
 By verifying the Lyapunov condition, we can show that for each $\lambda>0$, $n^{-1}\sum_{i=1}^nv_{i\lambda}(x)$ is asymptotically normal. This asymptotic normality is summarized in the following theorem.

 \begin{thm}\label{thm1} Under conditions C1 and C2, for each $\lambda> 0$,
   $$
     \sqrt{n}\left\{ \hat g_n(x;\lambda)-\frac{g_{0,\lambda}(x)}{f_{0,\lambda}(x)}-h^2B(x;\lambda)+o(h^2)
     \right\}\Longrightarrow N(0, \Delta_{\lambda,\lambda}(x)),
   $$
 and for $\lambda=0$,
   $$
     \sqrt{nh}\left\{ \hat g_n(x;0)-\frac{g_{0,0}(x)}{f_{0,0}(x)}-h^2B(x;0)+o(h^2)
     \right\}\Longrightarrow N(0, \Delta_{0,0}(x)),
   $$
 where
  \begin{eqarray*}
    \Delta_{\lambda,\lambda}(x)&=&c_{0\lambda}^2\left[\frac{ f_{0,\lambda/2}(x)}{2\sqrt{\pi\lambda\sigma_u^2}}-f_{0,\lambda}^2(x)\right]
       +d_{0\lambda}^2\left[\frac{G_{0,\lambda/2}(x)+H_{0,\lambda/2}(x)}{2\sqrt{\pi\lambda\sigma_u^2}}-g_{0,\lambda}^2\right]\\
           && +2c_{0\lambda}d_{0\lambda}\left[\frac{ g_{0,\lambda/2}(x)}{2\sqrt{\pi\lambda\sigma_u^2}}-g_{0,\lambda}(x)f_{0,\lambda}(x)\right]
  \end{eqarray*}\\ \vskip -0.3in \noindent
 and
  \begin{eqarray*}
    \Delta_{0,0}(x)&=&
    \frac{1}{2\sqrt{\pi}}\left[\frac{G_{00}(x)+H_{00}(x)}{f_{00}^2(x)}-\frac{g_{00}^2(x)}{f_{00}^3(x)}\right].
  \end{eqarray*}\\ \vskip -0.3in \noindent
\end{thm}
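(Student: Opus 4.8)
The plan is to obtain both limits as applications of the Lyapunov central limit theorem to the i.i.d.\ triangular-array average that already appears just above the statement. Starting from the linearization (\ref{eq4.3}), the coefficient identities $D_n^{-1}C_{nj}=c_{j\lambda}(x)+o_p(1)$ and $D_n^{-1}D_{nk}=d_{k\lambda}(x)+o_p(1)$ (Lemmas \ref{lem3}--\ref{lem7}) reduce the stochastic part of $\hat g_n(x;\lambda)$ to $n^{-1}\sum_{i=1}^n v_{i\lambda}(x)$ with $v_{i\lambda}(x)$ as in (\ref{eq4.4}), the $o_p(1)$ factors acting on terms that are themselves $O_p$ of the relevant rate, hence asymptotically negligible by Slutsky once the sum is scaled. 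Because the pairs $(Z_i,Y_i)$ are i.i.d.\ and each $\xi_{j\lambda,i}$, $\eta_{k\lambda,i}$ is centered by construction, $\{v_{i\lambda}(x)\}$ is a row-wise i.i.d., mean-zero triangular array depending on $n$ only through $h$, so it suffices to (i) identify $\lim\mathrm{Var}(v_{1\lambda}(x))$ under the correct scaling and (ii) verify a Lyapunov moment bound.

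For the variance I would use the elementary Gaussian identity
$$
\phi(x;z,\sigma^2)^2=\frac{1}{2\sqrt{\pi}\,\sigma}\,\phi(x;z,\sigma^2/2),
$$
together with the convolution $\int\phi(x;z,s^2)f_Z(z)\,dz=\int\phi(x;t,s^2+\sigma_u^2)f_X(t)\,dt$ coming from $Z=X+U$. Taking $\sigma^2=h^2+\lambda\sigma_u^2$ turns second moments such as $E[\phi(x;Z,\sigma^2)^2]$, $E[Y\phi(\cdot)^2]$ and $E[Y^2\phi(\cdot)^2]$ into $(2\sqrt{\pi}\,\sigma)^{-1}$ times integrals that, as $h\to0$ with $\lambda>0$, converge to $f_{0,\lambda/2}(x)$, $g_{0,\lambda/2}(x)$ and $G_{0,\lambda/2}(x)+H_{0,\lambda/2}(x)$ respectively (using $E[Y^2\mid X]=g^2(X)+\tau^2(X)$). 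The key simplification is that for $\lambda>0$ the factor $r(\lambda,h)=h^2/(h^2+\lambda\sigma_u^2)\to0$ appearing in $\xi_{1\lambda,i},\xi_{2\lambda,i},\eta_{1\lambda,i}$ forces those contributions, and all cross terms involving them, to vanish, so only $\xi_{0\lambda,i}$ and $\eta_{0\lambda,i}$ survive. Assembling the three remaining pieces with weights $c_{0\lambda}^2$, $d_{0\lambda}^2$ and $2c_{0\lambda}d_{0\lambda}$ reproduces $\Delta_{\lambda,\lambda}(x)$; since this limit is strictly positive, the scaling is the parametric $\sqrt{n}$.

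The case $\lambda=0$ differs only because the density variance $h^2$ degenerates. Here $E[\phi(x;Z,h^2)^2]=(2\sqrt{\pi}\,h)^{-1}E[\phi(x;Z,h^2/2)]\sim f_{00}(x)/(2\sqrt{\pi}\,h)$ blows up like $1/h$, which dictates the $\sqrt{nh}$ rate. The terms $\xi_{1,0,i},\xi_{2,0,i},\eta_{1,0,i}$ now carry $r(0,h)=1$ but are weighted by $(Z_i-x)$ or $(Z_i-x)^2$, which are $O_p(h)$ and $O_p(h^2)$ on the effective support of the sharply peaked kernel; a change of variables $z=x+hu$ shows their variance contributions are $O(h)$ and $O(h^3)$ and so vanish after multiplication by $h$. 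Thus again only $\xi_{0,0,i}$ and $\eta_{0,0,i}$ matter, and after inserting $c_{00}=-g_{00}/f_{00}^2$, $d_{00}=1/f_{00}$ one gets $h\,\mathrm{Var}(v_{10}(x))\to\tfrac{1}{2\sqrt{\pi}}\big[(G_{00}+H_{00})/f_{00}^2-g_{00}^2/f_{00}^3\big]=\Delta_{0,0}(x)$.

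Finally I would verify the Lyapunov condition with $\delta=1$ using the analogous cube identity $\phi(x;z,\sigma^2)^3=(2\pi\sqrt{3}\,\sigma^2)^{-1}\phi(x;z,\sigma^2/3)$, which gives $E|v_{1\lambda}(x)|^3=O(1)$ for $\lambda>0$ and $E|v_{10}(x)|^3=O(h^{-2})$ at $\lambda=0$, the finite $\mu(x)=E(|\vep|^3\mid X=x)<\infty$ from C1 controlling the $Y$-part. The Lyapunov ratio $n\,E|v_{1\lambda}|^3/(n\,\mathrm{Var}\,v_{1\lambda})^{3/2}$ is then $O(n^{-1/2})$ for $\lambda>0$ and $O((nh)^{-1/2})$ for $\lambda=0$, both vanishing under $nh\to\infty$ (C2), and the CLT delivers the two stated limits. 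I expect the main obstacle to be the variance bookkeeping of the second and third paragraphs---tracking exactly which of the five components of $v_{i\lambda}(x)$ contribute at each rate and evaluating the limiting Gaussian integrals so they assemble precisely into $\Delta_{\lambda,\lambda}(x)$ and $\Delta_{0,0}(x)$---rather than the CLT step, which is routine once the moments are in hand.
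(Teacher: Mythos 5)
Your proposal is correct and follows essentially the same route as the paper: the linearization (\ref{eq4.3}) with the coefficient limits from Lemmas \ref{lem3}--\ref{lem7}, reduction to the i.i.d.\ average $n^{-1}\sum_{i=1}^n v_{i\lambda}(x)$ of (\ref{eq4.4}), variance evaluation via the Gaussian product/cube identities of Lemma \ref{lem2}, and the Lyapunov CLT with third moments yielding ratios $O(n^{-1/2})$ for $\lambda>0$ and $O((nh)^{-1/2})$ for $\lambda=0$. Your rate-based shortcut (the factor $r(\lambda,h)\to 0$ for $\lambda>0$, and kernel localization giving per-summand variances $O(h)$ and $O(h^3)$ at $\lambda=0$) reaches the same orders the paper obtains by the explicit expansions in Lemmas \ref{lem4}, \ref{lem5} and \ref{lem7}, so only $\xi_{0\lambda,i}$ and $\eta_{0\lambda,i}$ contribute in the limit, exactly as in the paper's proof.
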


Note that when $\sigma_u^2=0$, that is, no measurement error in $X$, then one can easily see that $\Delta_{0,0}(x)=\tau^2(x)/(2\sqrt{\pi}f_X(x))$, which is exactly the asymptotic variance in local linear estimator of the regression function in the error-free cases.
The theorem below states the asymptotic joint normality of $[\hat g_n(x;0), \hat g_n(x;\lambda_1), \cdots,\hat g_n(x;\lambda_K)]'$.

 \begin{thm}\label{thm2} Under conditions C1 and C2, for $0<\lambda_1<\cdots<\lambda_K<\infty$,
   $$
     \begin{pmatrix}
        \sqrt{nh} & 0 & \cdots & 0 \\
           0     & \sqrt{n}& \cdots & 0 \\
         \vdots  &  \vdots  & \ddots & \vdots\\
           0     &  0       &  \cdots & \sqrt{n}
     \end{pmatrix}
     \begin{pmatrix*}[l]
       \hat g_n(x;0)-g_{0,0}(x)/f_{0,0}(x)-h^2B(x;0)+o(h^2)\\
       \hat g_n(x;\lambda_1)-g_{0,\lambda_1}(x)/f_{0,\lambda_1}(x)-h^2B(x;\lambda_1)+o(h^2)\\
       \vdots \\
       \hat g_n(x;\lambda_K)-g_{0,\lambda_K}(x)/f_{0,\lambda_K}(x)-h^2B(x;\lambda_K)+o(h^2)
     \end{pmatrix*}
   $$
   $$
    \mbox{}\hskip -4in  \Longrightarrow N(0, \Delta(x)),
   $$
 where $B(x;\lambda)$ is defined in (\ref{eq4.2}),
   $$
   \Delta(x)=
    \begin{pmatrix}
      \Delta_{0,0}(x)  &  0  &  0  &  \cdots & 0\\
        0 & \Delta_{\lambda_1,\lambda_1}(x)  & \Delta_{\lambda_1\lambda_2}(x) & \cdots & \Delta_{\lambda_1\lambda_K}(x)\\
        0 & \Delta_{\lambda_1\lambda_2}(x) & \Delta_{\lambda_2,\lambda_2}(x) & \cdots & \Delta_{\lambda_2\lambda_K}(x)\\
        \vdots & \Delta_{\lambda_1\lambda_K}(x)  & \Delta_{\lambda_2\lambda_K}(x) & \cdots & \Delta_{\lambda_K,\lambda_K}(x)
    \end{pmatrix},
  $$
 and $\Delta_{\lambda_i\lambda_j}(x)$, $i=0, 1,\ldots,K$, $j=1,2,\ldots,K$, are given by
    \begin{eqarray*}
      &&\frac{c_{0\lambda_i}(x)c_{0\lambda_j}(x)}{\sqrt{2\pi(\lambda_i+\lambda_j)\sigma_u^2}}\int \phi\left(t, x, \left(\frac{\lambda_i\lambda_j}{\lambda_i+\lambda_j}+1\right)\sigma_u^2\right)f_X(t)dt-f_{0,\lambda_i}(x)f_{0,\lambda_j}(x)\\
      &&+\frac{c_{0\lambda_i}(x)d_{0\lambda_j}(x)}{\sqrt{2\pi(\lambda_i+\lambda_j)\sigma_u^2}}\int g(t)\phi\left(t, x, \left(\frac{\lambda_i\lambda_j}{\lambda_i+\lambda_j}+1\right)\sigma_u^2\right)f_X(t)dt
      -f_{0,\lambda_i}(x)g_{0,\lambda_j}(x)\\
      &&+\frac{c_{0\lambda_j}(x)d_{0\lambda_i}(x)}{\sqrt{2\pi(\lambda_i+\lambda_j)\sigma_u^2}}\int g(t)\phi\left(t, x, \left(\frac{\lambda_i\lambda_j}{\lambda_i+\lambda_j}+1\right)\sigma_u^2\right)f_X(t)dt
      -f_{0,\lambda_j}(x)g_{0,\lambda_i}(x)\\
      &&+\frac{d_{0\lambda_i}(x)d_{0\lambda_j}(x)}{\sqrt{2\pi(\lambda_i+\lambda_j)\sigma_u^2}}\int g^2(t)\phi\left(t, x, \left(\frac{\lambda_i\lambda_j}{\lambda_i+\lambda_j}+1\right)\sigma_u^2\right)f_X(t)dt
      -g_{0,\lambda_i}(x)g_{0,\lambda_j}(x).
    \end{eqarray*}\\ \vskip -0.3in \noindent
\end{thm}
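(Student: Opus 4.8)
The plan is to reduce the joint statement to one-dimensional central limit theorems, already assembled for Theorem \ref{thm1}, by means of the Cram\'er--Wold device, and then to identify the entries of $\Delta(x)$ as limiting covariances of the linearized summands. First I would invoke the linearization (\ref{eq4.3})--(\ref{eq4.4}): for each fixed $\lambda$ the centered estimator is, up to $o_p$-negligible pieces, an i.i.d. average $n^{-1}\sum_{i=1}^n v_{i\lambda}(x)$ with $v_{i\lambda}(x)$ as in (\ref{eq4.4}). Since the coefficients $c_{j\lambda}(x), d_{j\lambda}(x)$ are deterministic and, by Lemma \ref{lem3}--Lemma \ref{lem7}, the ratios $D_n^{-1}C_{nj}$ and $D_n^{-1}D_{nl}$ converge to them, the joint behaviour of the normalized vector is governed by the joint behaviour of the averages of $v_{i\lambda_k}(x)$, $k=0,1,\dots,K$. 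The crucial preliminary observation is that only the $\xi_{0\lambda,i}$ and $\eta_{0\lambda,i}$ summands carry leading-order variance: for every $\lambda>0$ the weights attached to $\xi_{1\lambda,i},\xi_{2\lambda,i},\eta_{1\lambda,i}$ contain the factor $r(\lambda,h)=h^2/(h^2+\lambda\sigma_u^2)=O(h^2)$, so these contributions drop out; for $\lambda=0$ the same holds because the extra $(Z-x)$ and $(Z-x)^2$ weights supply additional powers of $h$. This is exactly why only $c_{0\lambda}$ and $d_{0\lambda}$ appear in $\Delta_{\lambda,\lambda}$, $\Delta_{0,0}$ and the stated $\Delta_{\lambda_i\lambda_j}$.

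Next I would fix arbitrary constants $a_0,a_1,\dots,a_K$ and form the scalar
$$
 W_n=a_0\sqrt{nh}\,R_n(0)+\sum_{k=1}^K a_k\sqrt{n}\,R_n(\lambda_k),
$$
where $R_n(\lambda)=\hat g_n(x;\lambda)-g_{0,\lambda}(x)/f_{0,\lambda}(x)-h^2B(x;\lambda)+o(h^2)$ is the centered quantity from Theorem \ref{thm1}. After the reductions above, $W_n$ is an average of a triangular array of i.i.d. summands, so its asymptotic normality follows from the Lyapunov condition verified exactly as in Theorem \ref{thm1}, and joint normality of the vector then follows by Cram\'er--Wold. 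What remains is to compute $\lim_n\mathrm{Var}(W_n)=a'\Delta(x)a$ by evaluating the pairwise covariances of the leading summands $c_{0\lambda}(x)\xi_{0\lambda,i}(x)+d_{0\lambda}(x)\eta_{0\lambda,i}(x)$.

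Three types of covariance must be matched. The diagonal terms reproduce $\Delta_{\lambda_k,\lambda_k}$ and $\Delta_{0,0}$ already recorded in Theorem \ref{thm1}. For two indices $\lambda_i,\lambda_j>0$, both normalized by $\sqrt n$, the limit of $n\,\mathrm{Cov}(n^{-1}\sum v_{i\lambda_i},n^{-1}\sum v_{i\lambda_j})$ requires expectations of products $\phi(x,Z,h^2+\lambda_i\sigma_u^2)\phi(x,Z,h^2+\lambda_j\sigma_u^2)$ and their $Y$-weighted analogues; applying the Gaussian product identity collapses the two densities into the constant $1/\sqrt{2\pi(\lambda_i+\lambda_j)\sigma_u^2}$ times a single normal density of variance $(\lambda_i\lambda_j/(\lambda_i+\lambda_j)+1)\sigma_u^2$ integrated against $f_X$, producing precisely the stated $\Delta_{\lambda_i\lambda_j}(x)$. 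For a pair involving $\lambda=0$, the raw covariance is $O(1)$, but the mismatched normalization multiplies it by $\sqrt{nh}\cdot\sqrt n/n=\sqrt h\to 0$; hence every cross-covariance between the $\lambda=0$ coordinate and a $\lambda_k>0$ coordinate vanishes, which is exactly the block-diagonal form of $\Delta(x)$.

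The main obstacle is the covariance bookkeeping for $\lambda_i,\lambda_j>0$: the repeated use of the Gaussian product formula to merge two normal densities of distinct bandwidths and the careful tracking of constants needed to confirm both the normalizer $\sqrt{2\pi(\lambda_i+\lambda_j)\sigma_u^2}$ and the effective variance $(\lambda_i\lambda_j/(\lambda_i+\lambda_j)+1)\sigma_u^2$. By contrast, the decoupling of the $\lambda=0$ block, while conceptually the key to the matrix structure, is technically immediate once the rate mismatch is noted.
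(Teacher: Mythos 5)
Your proposal is correct and takes essentially the same route as the paper, which omits the proof but describes it as exactly this: a multivariate CLT applied to the vector $n^{-1/2}\bigl(\sqrt{h}\sum_i v_{i0}(x),\ \sum_i v_{i\lambda_1}(x),\ \ldots,\ \sum_i v_{i\lambda_K}(x)\bigr)^T$, with your Cram\'er--Wold/Lyapunov reduction being the standard implementation, your observation that only the $\xi_{0\lambda,i}$ and $\eta_{0\lambda,i}$ pieces carry leading-order variance matching the paper's treatment in Theorem \ref{thm1}, and your rate-mismatch argument ($\sqrt{h}\cdot O(1)\to 0$) correctly producing the zero block between the $\lambda=0$ coordinate and the $\lambda_k>0$ coordinates. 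One caution: an honest evaluation of $E[\eta_{0\lambda_i}(x)\eta_{0\lambda_j}(x)]$ passes through $E(Y^2\mid X)=g^2(X)+\tau^2(X)$ and hence yields $\int[g^2(t)+\tau^2(t)]\phi\bigl(t,x,(\lambda_i\lambda_j/(\lambda_i+\lambda_j)+1)\sigma_u^2\bigr)f_X(t)\,dt$ in the last term---consistent with the $H_{0,\lambda/2}$ term appearing in the diagonal $\Delta_{\lambda,\lambda}(x)$---so your computation will not reproduce ``precisely'' the displayed $\Delta_{\lambda_i\lambda_j}(x)$, which appears to omit the $\tau^2$ contribution; this is a defect in the paper's stated formula rather than in your argument.
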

 The proof of the joint normality is a straightforward application of the multivariate CLT on the following random vector
   $$
     \begin{pmatrix*}[c]
       \sqrt{nh}\left[\hat g_n(x;0)-g_{0,0}(x)/f_{0,0}(x)-h^2B(x;0)+o(h^2)\right]\\
       \sqrt{n}\left[\hat g_n(x;\lambda_1)-g_{0,\lambda_1}(x)/f_{0,\lambda_1}(x)-h^2B(x;\lambda_1)+o(h^2)\right]\\
        \vdots \\
       \sqrt{n}\left[\hat g_n(x;\lambda_K)-g_{0,\lambda_K}(x)/f_{0,\lambda_K}(x)-h^2B(x;\lambda_K)+o(h^2)\right]
     \end{pmatrix*}=\frac{1}{\sqrt{n}}\begin{pmatrix*}[l]
     \sqrt{h}\sum_{i=1}^n v_{i0}(x)\\
     \sum_{i=1}^n v_{i\lambda_1}(x)\\
     \hfill\vdots\hfill\\
     \sum_{i=1}^n v_{i\lambda_K}(x)
   \end{pmatrix*}.
   $$
 For the sake of brevity, the proof will be omitted. In addition to the condition C2, if we further assume that $nh^4\to 0$, then the asymptotic bias can be removed.

\section{Extrapolation Function}\label{sec5}

 From the discussion in the previous section, the extrapolation function can be derived from $g_{0,\lambda}(x)/f_{0,\lambda}(x)$. From the definitions of $g_{0,\lambda}$ and $f_{0,\lambda}(x)$, we know that
   \begin{equation}\label{eq5.1}
     \Gamma(\lambda):=\frac{g_{0,\lambda}(x)}{f_{0,\lambda}(x)}=\frac{\int g(t)f_X(t)\phi(t; x,(\lambda+1)\sigma_u^2)dt}{\int f_X(t)\phi(t; x,(\lambda+1)\sigma_u^2)dt}.
   \end{equation}
 As a function of $\lambda$, $\Gamma(\lambda)$ does not have a tractable form, and some approximation is needed for extrapolating.
 By change of variable, we have
   $$
    \int g(t)f_X(t)\phi(t; x,(\lambda+1)\sigma_u^2)dt
    =\int g(x+\sqrt{\lambda+1}\sigma_uv)f_X(x+\sqrt{\lambda+1}\sigma_uv)\phi(v)dv.
   $$
 Denote $\alpha=(\lambda+1)\sigma_u^2$, and assume that $g$ and $f_X$ are four times continuously differentiable. Then we have
   $$
    \int g(t)f_X(t)\phi(t; x,(\lambda+1)\sigma_u^2)dt=g(x)f_X(x)+\frac{[f_X(x)g(x)]''}{2}\alpha+
    \frac{[f_X(x)g(x)]^{(4)}}{4!}\alpha^2+o(\alpha^2),
   $$
 where $o(\cdot)$ is understood as a negligible quantity when $\lambda\to -1$.
 Similarly, we have
   $$
    \int f_X(t)\phi(x; t,(\lambda+1)\sigma_u^2)dt=f_X(x)+\frac{f_X''(x)}{2}\alpha+\frac{f_X^{(4)}(x)}{4!}\alpha^2+o(\alpha^2).
   $$
 Therefore, after neglecting the $o(\lambda+1)$ term, from (\ref{eq5.1}), we obtain
  $$
    \Gamma(\lambda)\approx \frac{g(x)f_X(x)+\sigma_u^2(\lambda+1)[f_X(x)g(x)]''/2}{f_X(x)+\sigma_u^2(\lambda+1)f_X''(x)/2}.
  $$
 It is easy to see that the right hand side approaches $g(x)$ as $\lambda\to -1$, and indeed, for fixed $x$-value, it has the form of $a+b/(c+\lambda)$, the nonlinear extrapolation function often used in the classical SIMEX estimation procedure. If we further apply the approximation
   $$
      \frac{1}{f_X(x)+\sigma_u^2(\lambda+1)f_X''(x)/2}
      =\frac{1}{f_X(x)}\left[1-\frac{\sigma_u^2(\lambda+1)f_X''(x)}{2f_X(x)}+o((\lambda+1))\right]
   $$
 or the approximation with higher order expansions, then we can obtain the commonly used quadratic extrapolation function $a+b\lambda+c\lambda^2$ and the polynomial extrapolation functions.

 Almost all literature involving the classical SIMEX method, mostly in the parametric setups, assumes that the true extrapolation function has a known nonlinear form when discussing the asymptotic distributions of the SIMEX estimators. However, based on the above discussion, the true extrapolation function is never known. To see this point clearly, we further assume that $X\sim N(0,\sigma_x^2)$. Then from (\ref{eq5.1}), for any $x\in\R$,
   $$
      \Gamma(\lambda)=\int g(t)\phi\left(t,\frac{x\sigma_x^2}{(\lambda+1)\sigma_u^2+\sigma_x^2},
      \frac{(\lambda+1)\sigma_u^2\sigma_x^2}{(\lambda+1)\sigma_u^2+\sigma_x^2}
      \right)dt.
   $$
 Since the normal distribution family is complete, so the above expression implies that $\Gamma(\lambda)$ and $g(t)$ are uniquely determined by each other. Since $g$ is unknown, so neither is $\Gamma$. This discouraging finding really invalidates all the potential theoretical developments based on known extrapolation functions.

\section{Numerical Study}\label{sec6}

 In this section, we conduct some simulation studies to evaluate the finite sample performance of the proposed SIMEX procedure. We also analyze a dataset from the National Health and Nutrition Examination Survey (NHANES) to illustrate the application of the proposed estimation procedure.  \vskip 0.1in

\subsection{Simulation Study}

  In this simulation study, the simulated data are generated from the regression model $Y=g(X)+\vep, Z=X+U$, where $X$ has a standard normal distribution, $U$ is generated from $N(0,\sigma_u^2)$. Three choices of regression function $g(x)$ were considered, namely $g(x)=x^2$, $\exp(x)$ and $x\sin(x)$. To see the effect of the measurement error variance on the resulting estimate, we choose $\sigma_u^2=0.1$ and $0.25$. The sample sizes are chosen to be $n=100, 200, 500$. In each scenario, the estimates are calculated for $200$ equally spaced $x$-values $x_j$, $j=1,2,\ldots, 200$, are chosen from $[-3,3]$. To implement the extrapolation step, the grid of $\lambda$ is taken from 0 to 2 separated by 0.2. The mean squared errors (MSE) are used to evaluate the finite sample performance of the proposed SIMEX procedure. The bandwidth $h$ is chosen to be $n^{-1/5}$, a theoretical optimal order when estimating the regression function based on the error-free data. To get a stable result, all simulations were performed for 10 independent datasets, and the average of the 10 estimates was taken to be the final estimate at each of 200 $x$-values, and the MSE defined by $200^{-1}\sum_{j=1}^{200}[\hat g_n(x_j)-g(x_j)]^2$ is used for evaluate the finite sample performance of the proposed EX estimate. For comparison, we also apply the classical SIMEX algorithm and the naive method to estimate these three regression functions with $B=50, 100$. Besides the report on the MSEs from the three algorithms, we also record the computation time in seconds from each procedure to evaluate the algorithm efficiency. The simulation results are summarized in Tables \ref{tab1} - \ref{tab3}. In all the tables, we use the EX to denote the proposed Extrapolation algorithm, SIMEX to denote the classical SIMEX method, and Naive for naive method.
	
 \begin{table}[h!]
  \begin{center}
    \caption{$g(x)=x\sin(x)$, $X\sim N(0,1)$}
    \label{tab1}
    \begin{tabular}{lclrrrrrr}
      \toprule 
      \textbf{$\sigma_u^2$} & Method &  & \multicolumn{2}{c}{\textbf{$n=100$}} &  \multicolumn{2}{c}{\textbf{$n=200$}} & \multicolumn{2}{c}{\textbf{$n=500$}}\\
      \cmidrule(lr){4-5} \cmidrule(lr){6-7} \cmidrule(lr){8-9}
      &     & & MSE & Time(s) & MSE & Time(s) & MSE & Time(s) \\
      \midrule 
          & SIMEX & $B=50$  & 0.066 &  71.001 & 0.065 & 122.568 & 0.023 & 285.839\\
      \Lower{0.1}&& $B=100$ & 0.142 & 139.862 & 0.137 & 243.531 & 0.032 & 569.185\\
          & EX    &         & 0.367 &   2.264 & 0.079 &   3.120 & 0.051 &   5.853\\
          & Naive &         & 0.197 &   0.180 & 0.101 &   0.274 & 0.066 &   0.569\\      \midrule
          & SIMEX & $B=50$  & 0.075 &  71.053 & 0.047 & 123.463 & 0.077 & 287.537\\
     \Lower{0.25}&& $B=100$ & 0.084 & 141.015 & 0.138 & 245.227 & 0.067 & 572.512\\
          & EX    &         & 0.335 &   2.202 & 0.118 &   2.948 & 0.065 &   5.334\\
          & Naive &         & 0.221 &   0.176 & 0.139 &   0.273 & 0.119 &   0.574\\
      \bottomrule 
    \end{tabular}
  \end{center}
\end{table}

 \begin{table}[h!]
  \begin{center}
    \caption{$g(x)=x^2$, $X\sim N(0,1)$}
    \label{tab2}
    \begin{tabular}{lclrrrrrr}
      \toprule 
      \textbf{$\sigma_u^2$} & Method &  & \multicolumn{2}{c}{\textbf{$n=100$}} &  \multicolumn{2}{c}{\textbf{$n=200$}} & \multicolumn{2}{c}{\textbf{$n=500$}}\\
      \cmidrule(lr){4-5} \cmidrule(lr){6-7} \cmidrule(lr){8-9}
      &     & & MSE & Time(s) & MSE & Time(s) & MSE & Time(s) \\
      \midrule 
          & SIMEX & $B=50$  & 0.420 &  70.813 & 0.346 & 122.444 & 0.055 & 285.444\\
      \Lower{0.1}&& $B=100$ & 0.165 & 139.648 & 0.192 & 243.118 & 0.154 & 568.681\\
          & EX    &         & 0.688 &   2.263 & 0.201 &   3.129 & 0.067 &   5.884\\
          & Naive &         & 0.318 &   0.179 & 0.450 &   0.270 & 0.385 &   0.572\\      \midrule
          & SIMEX & $B=50$  & 0.996 &  71.103 & 0.336 & 123.314 & 0.164 & 287.267\\
     \Lower{0.25}&& $B=100$ & 0.711 & 140.884 & 0.070 & 245.532 & 0.196 & 573.299\\
          & EX    &         & 0.069 &   2.176 & 0.372 &   2.949 & 0.029 &   5.340\\
          & Naive &         & 1.820 &   0.180 & 2.282 &   0.274 & 1.456 &   0.573\\
      \bottomrule 
    \end{tabular}
  \end{center}
\end{table}

 \begin{table}[h!]
  \begin{center}
    \caption{$g(x)=\exp(x)$, $X\sim N(0,1)$}
    \label{tab3}
    \begin{tabular}{lclrrrrrr}
      \toprule 
      \textbf{$\sigma_u^2$} & Method &  & \multicolumn{2}{c}{\textbf{$n=100$}} &  \multicolumn{2}{c}{\textbf{$n=200$}} & \multicolumn{2}{c}{\textbf{$n=500$}}\\
      \cmidrule(lr){4-5} \cmidrule(lr){6-7} \cmidrule(lr){8-9}
      &     & & MSE & Time(s) & MSE & Time(s) & MSE & Time(s) \\
      \midrule 
          & SIMEX & $B=50$  & 1.542 &  70.906 & 1.654 & 122.614 & 0.060 & 285.777\\
      \Lower{0.1}&& $B=100$ & 0.351 & 139.888 & 0.069 & 243.476 & 0.252 & 568.903\\
          & EX    &         & 1.144 &   2.250 & 0.125 &   3.157 & 0.149 &   5.826\\
          & Naive &         & 0.637 &   0.175 & 1.282 &   0.273 & 0.798 &   0.557\\ \midrule
          & SIMEX & $B=50$  & 2.241 &  71.761 & 1.707 & 123.365 & 0.898 & 287.329\\
     \Lower{0.25}&& $B=100$ & 0.544 & 140.633 & 0.150 & 245.134 & 0.217 & 572.378\\
          & EX    &         & 0.208 &   2.176 & 0.436 &   2.950 & 0.070 &   5.342\\
          & Naive &         & 3.772 &   0.176 & 3.785 &   0.274 & 2.749 &   0.570\\
      \bottomrule 
    \end{tabular}
  \end{center}
\end{table}
  The simulation results clearly show that the proposed EX algorithm is more efficient than the classical SIMEX method in terms of computational speed. The finite sample performance of both methods SIMEX and EX, as measured by the MSE, are comparable. When sample sizes get larger, and the measurement error variances get smaller, both procedures performs better, as expected. The advantage of using EX or SIMEX over the naive method may not be obvious when the sample size or the noise level $\sigma_u^2$ is small, but both methods outperform the naive method when either the sample size or the noise level is increased. It is well known in measurement error literature that the performance of the estimation procedure heavily depends on the signal to noise ratio, or the ratio of $\sigma_x^2$ and $\sigma_u^2$. The signal to noise ratios in the previous simulation studies are $10$ and $4$. We also conducted some simulation studies with signal to noise ratio changed to $40$ and $16$ This resulted in improved performance of all three methods, with the naive method sometimes providing better results than the SIMEX and EX methods, which was not unexpected, since such high signal to noise ratios imply the effect of measurement error is nearly negligible.

  As mentioned in the beginning, we used the average of the estimates from $10$ independent data sets as the final estimate of the regression function. For illustration purposes, in Figure \ref{fig1} to Figure \ref{fig9}, for each simulation setup, we present the fitted the regression curves for $n=200$, $\sigma_u^2=0.25$, and $B=50$ for SIMEX, from all three methods, with the true regression function as the reference. For completeness, in each figure, the four small plots show the fitted regression curves from four different data sets, and the large plot shows the fitted regression curve based on the averages.

  \begin{figure}[h!]
   \centering
   \caption{Naive Estimate: $g(x)=x\sin(x)$}
   \label{fig1}
   \includegraphics[width=5.5in,height=2.7in]{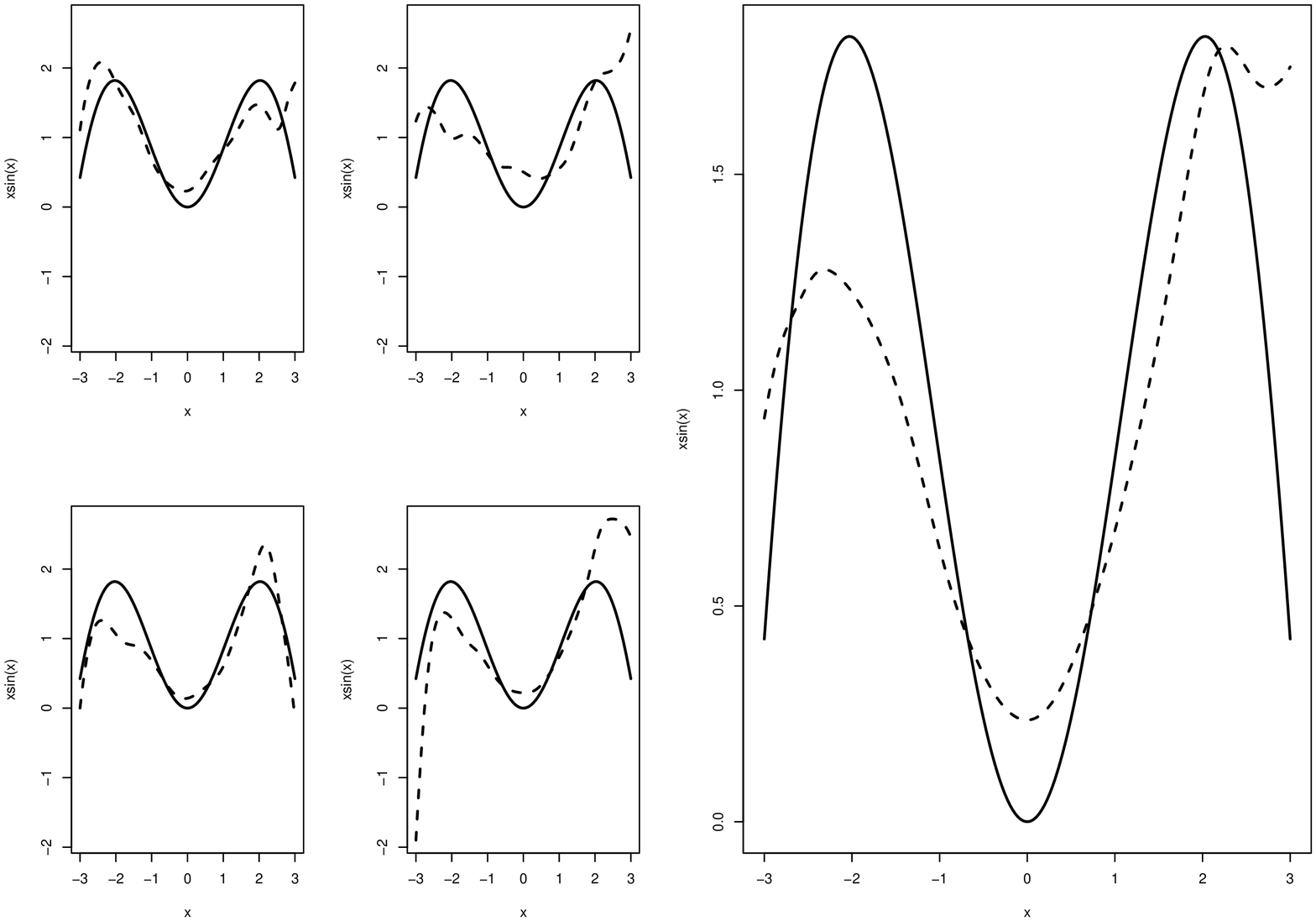}
  \end{figure}

  \begin{figure}[h!]
   \centering
   \caption{Naive Estimate: $g(x)=x^2$}
   \label{fig2}
   \includegraphics[width=5.5in,height=2.7in]{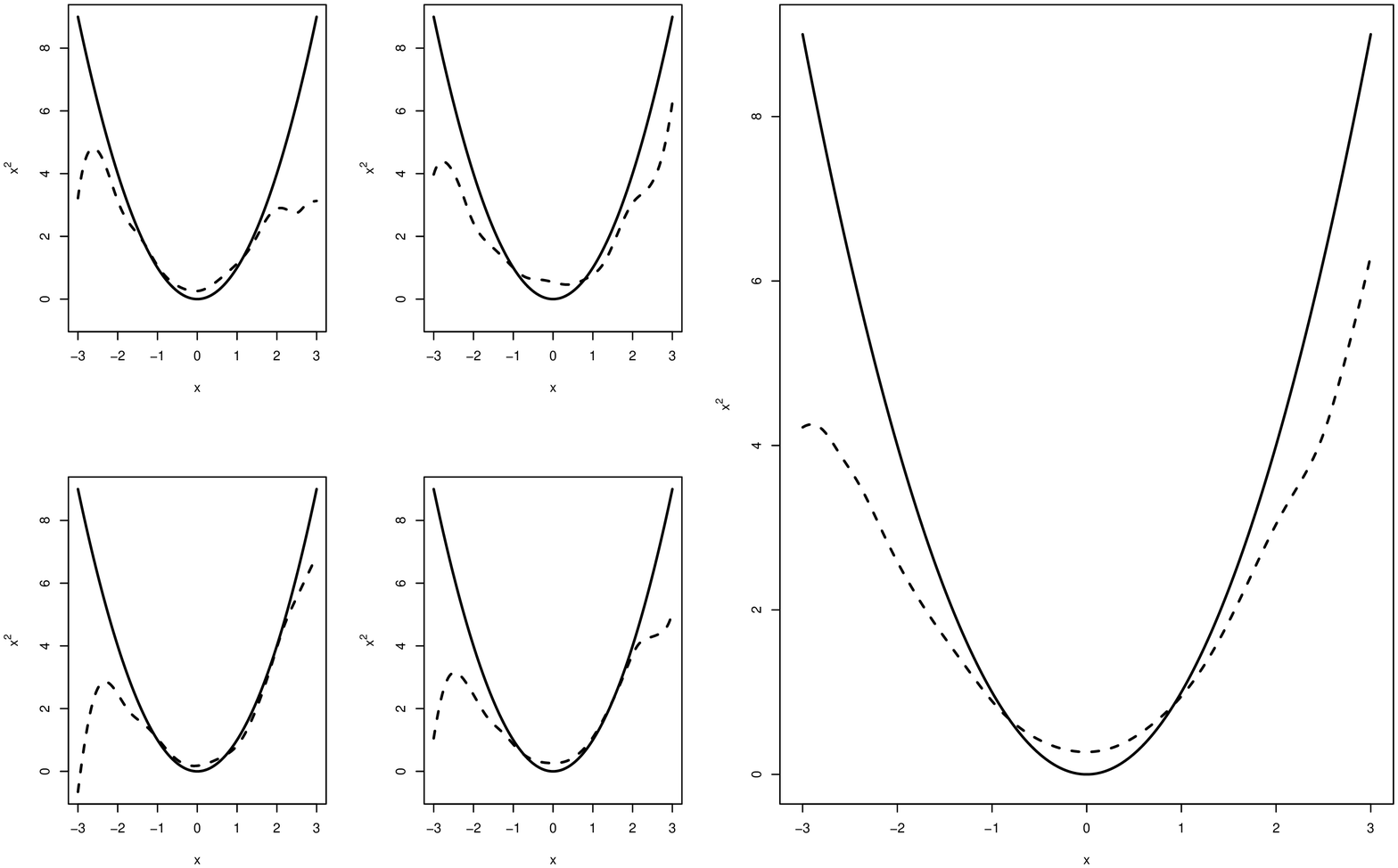}
  \end{figure}

  \begin{figure}[h!]
   \centering
   \caption{Naive Estimate: $g(x)=\exp(x)$}
     \label{fig3}
   \includegraphics[width=5.5in,height=2.7in]{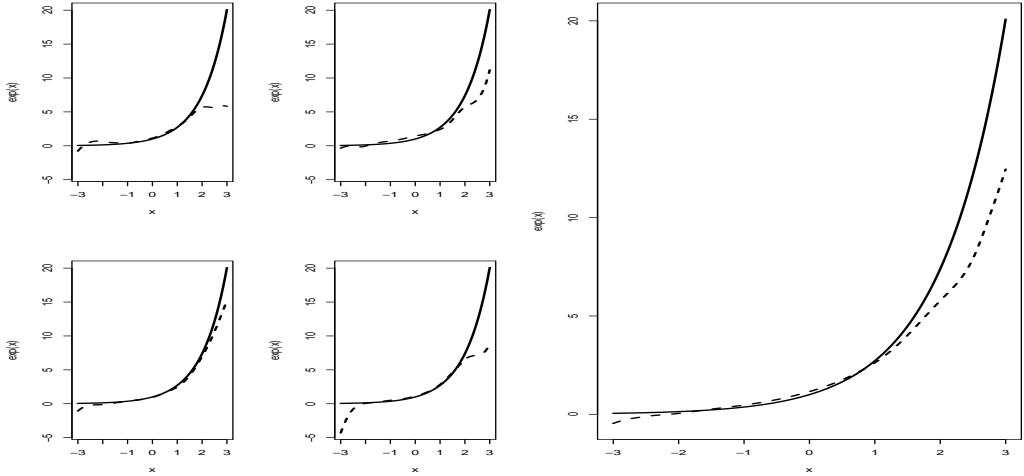}
  \end{figure}

  \begin{figure}[h!]
   \centering
   \caption{SIMEX Estimate: $g(x)=x\sin(x)$}  \label{fig4}
   \includegraphics[width=5.5in,height=2.7in]{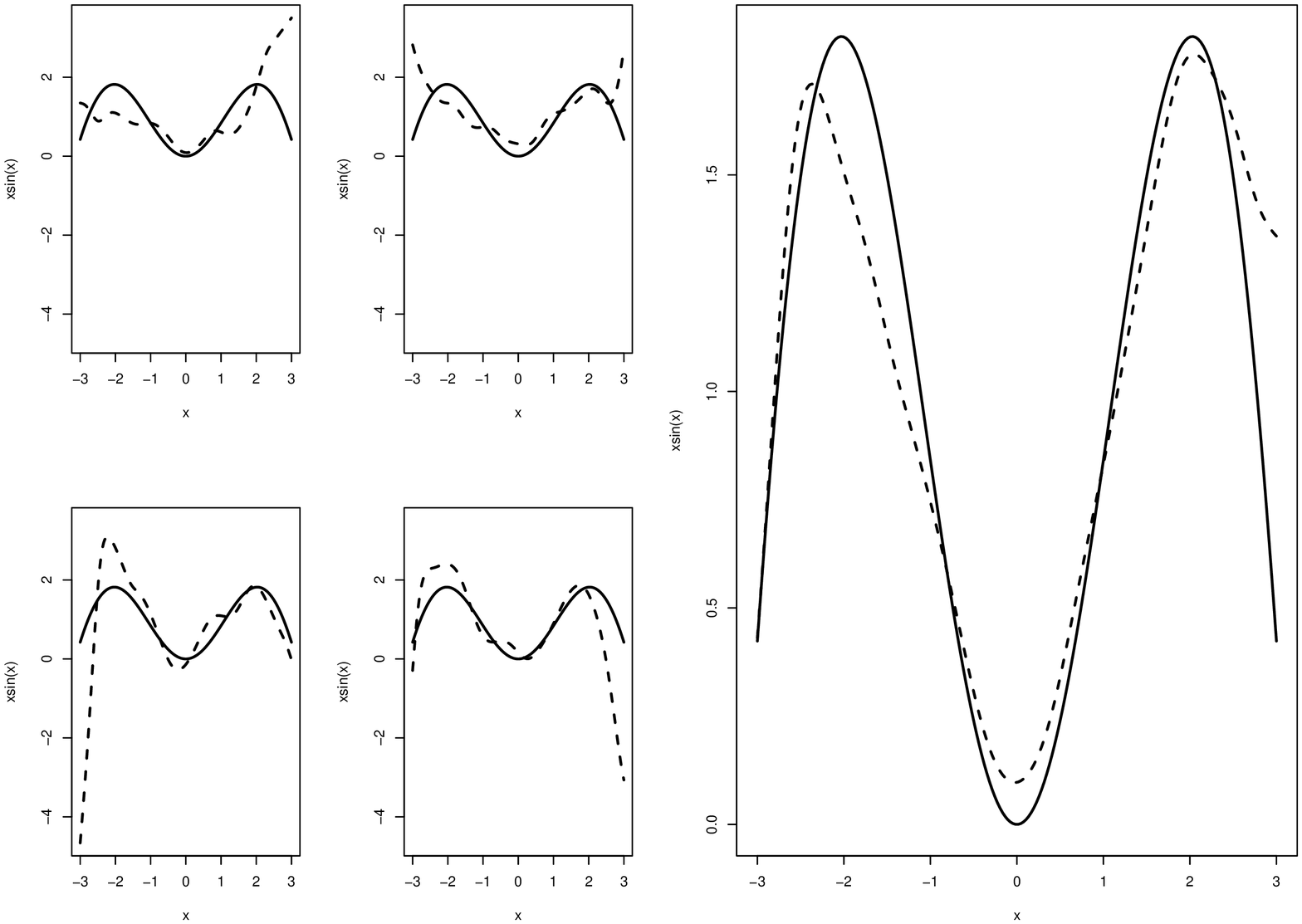}
  \end{figure}

  \begin{figure}[h!]
   \centering
   \caption{SIMEX Estimate: $g(x)=x^2$}
     \label{fig5}
   \includegraphics[width=5.5in,height=2.7in]{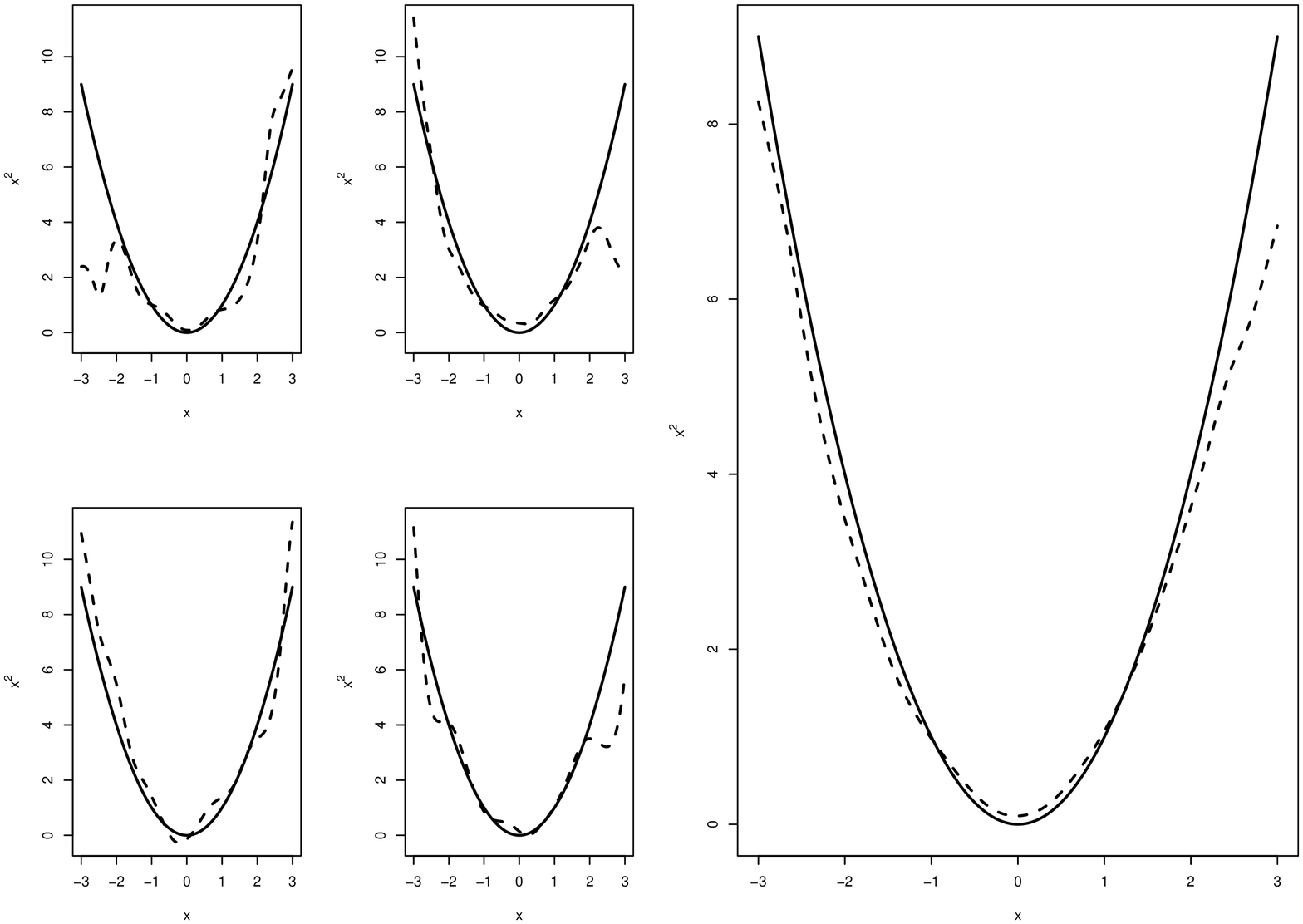}
  \end{figure}

  \begin{figure}[h!]
   \centering
   \caption{SIMEX Estimate: $g(x)=\exp(x)$}
  \label{fig6}
   \includegraphics[width=5.5in,height=2.7in]{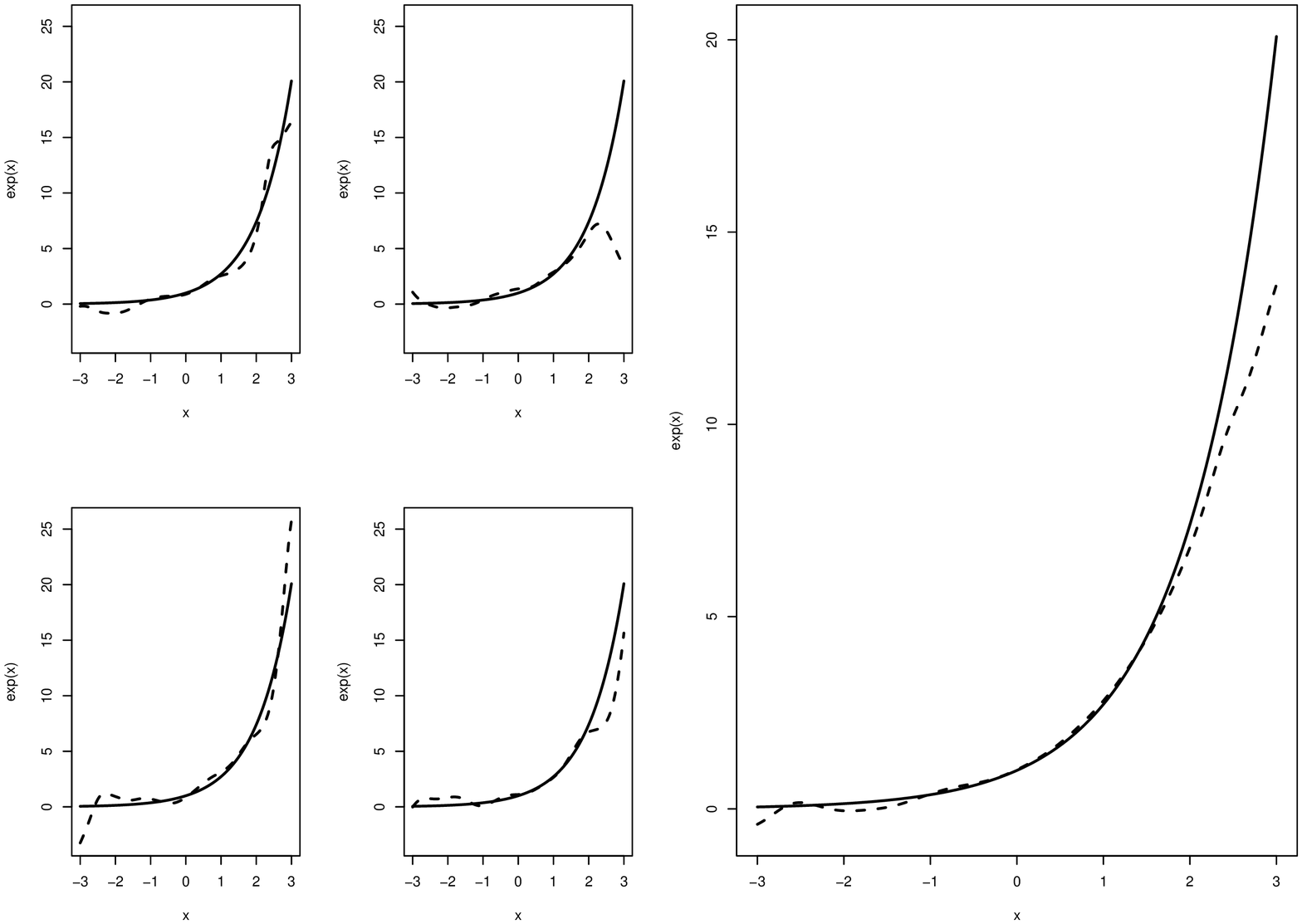}
  \end{figure}

  \begin{figure}[h!]
   \centering
   \caption{EX Estimate: $g(x)=x\sin(x)$}
  \label{fig7}
   \includegraphics[width=5.5in,height=2.7in]{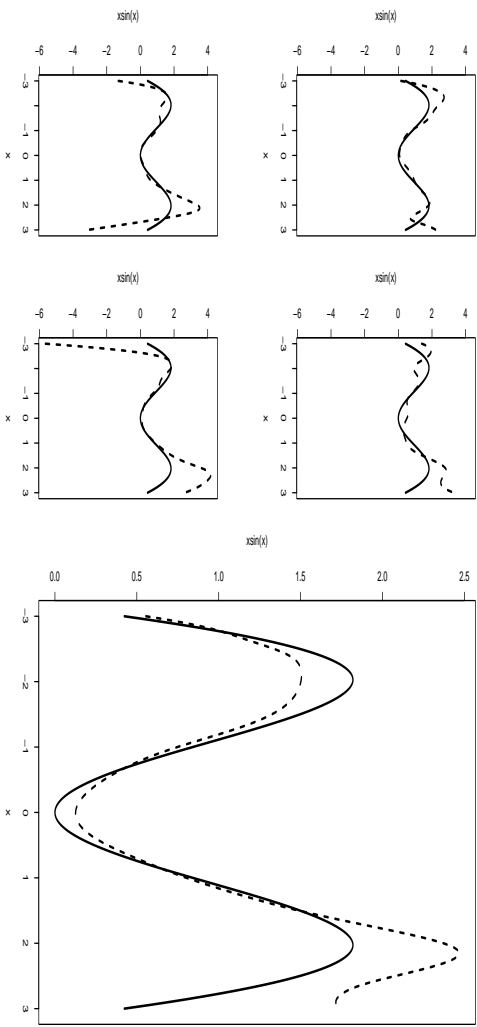}
  \end{figure}

  \begin{figure}[h!]
   \centering
   \caption{EX Estimate: $g(x)=x^2$}
     \label{fig8}
   \includegraphics[width=5.5in,height=2.7in]{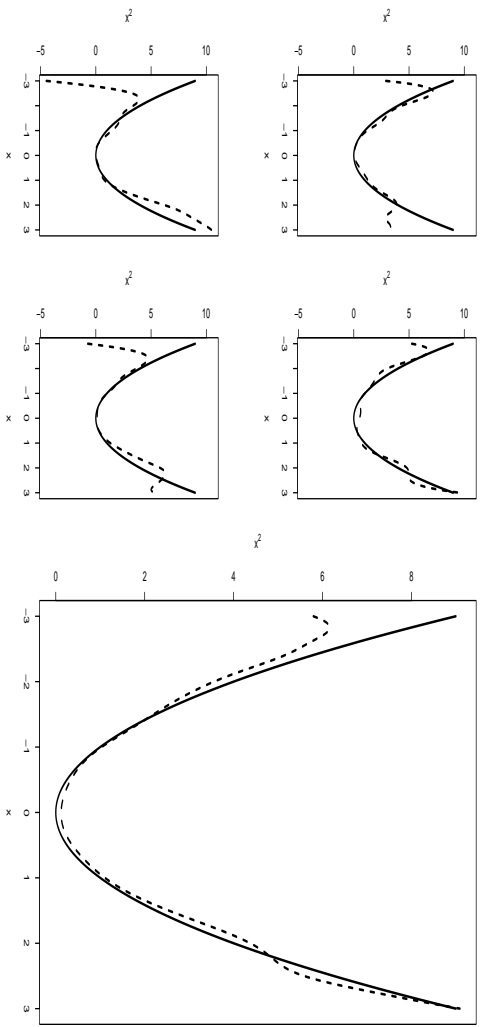}
  \end{figure}

  \begin{figure}[h!]
   \centering
   \caption{EX Estimate: $g(x)=\exp(x)$}
  \label{fig9}
   \includegraphics[width=5.5in,height=2.7in]{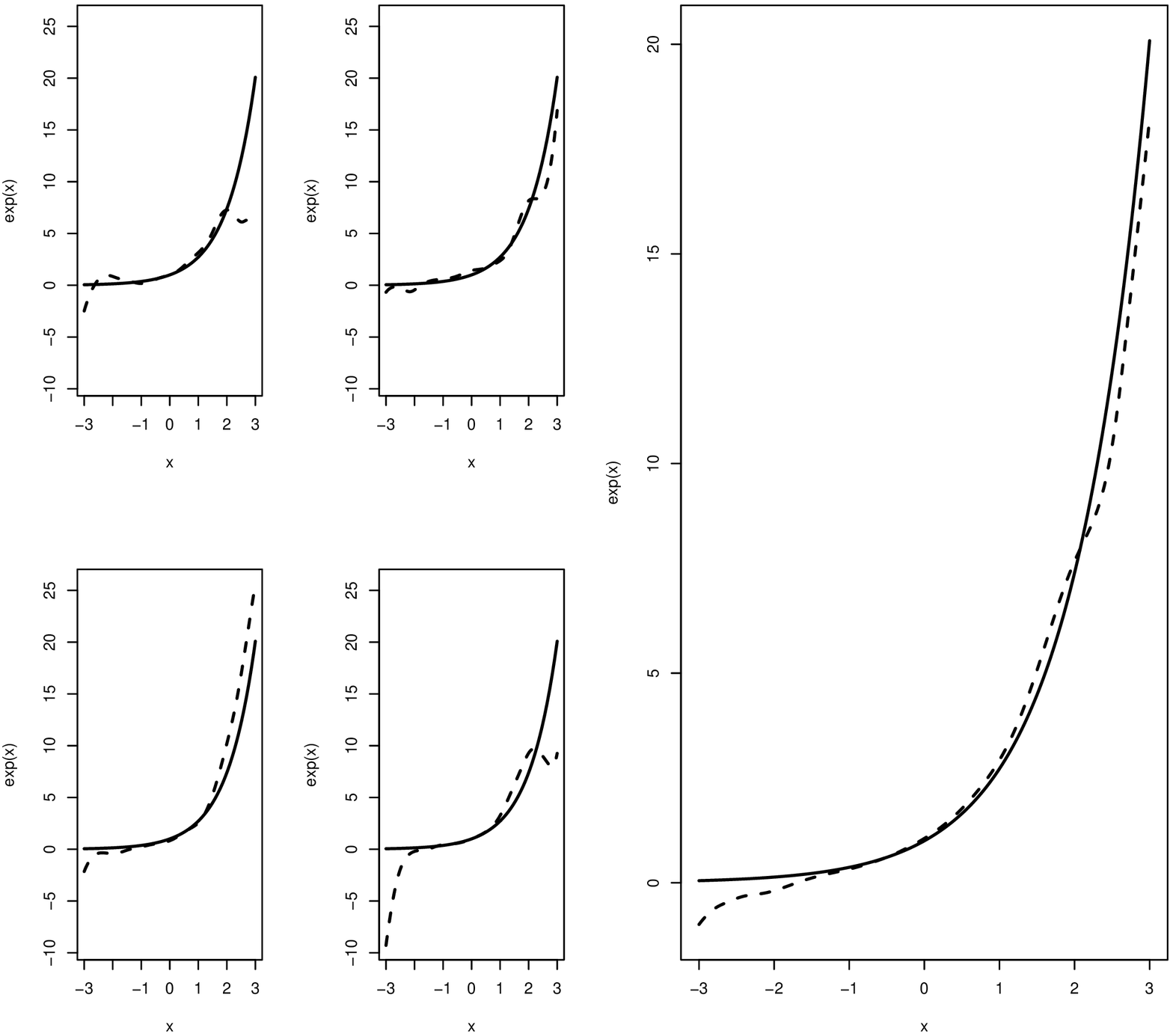}
  \end{figure}

\subsection{Real Data Application: NHANES Data Set}

   To determine the relationship between the serum 25-hydroxyvitamin D (25(OH)D) and the long term vitamin D average intake,  \cite{brenna2017} analyzed a data set from the National Health and Nutrition Examination Survey (NHANES), and used a nonlinear function for modeling the regression mean of 25(OH)D on the long term vitamin D average intake. In this section, we apply the proposed estimation procedure on a subset of the 2009-2010 NHANES study. The selected data set contains dietary records of $806$ Mexican-American females. The long term vitamin $D$ average intake ($X$) is not measured directly, instead, two independent daily observations of vitamin D intake are collected. Let $W_{ji}$ be the vitamin D intake from the $i$-th subject on the $j$-th time, and we assume that the additive structures $W_{ji}=X_i+U_{ji}$ hold for all $i=1,2,\ldots,806$, $j=1,2$. We use $W_i=(W_{1i}+W_{2i})/2$ to represent the observed vitamin intake, and by assuming that $U_{1i}$ and $U_{2i}$ are independently and identically distributed, we can estimate the standard deviation of the measurement error $U$ by the sample standard deviation of the differences $(W_{1i}-W_{2i})/2$, $i=1,2,\ldots,n$. As in \cite{brenna2017}, we also apply a square root transformation on the 25(OH)D which results in a more symmetric structure, but the Shapiro normal test reports a $p$-value of $0.04$, indicating that the transformed 25(OH)D values, denoted as $Y$, is still not normal.

   We adopt the local linear estimator to fit the regression function of $Y$ against $X$ to capture the mean regression function using the Naive, SIMEX ($B=200$) and the proposed EX methods. Three fitted regression functions with the bandwidth $h=n^{-1/5}$, together with the scatter plots of $Y$ against $W$, are plotted in Figure \ref{fig10}. In Figure \ref{fig10}, the solid line is the fitted EX regression function, the dashed line is the fitted regression function using the classical SIMEX, and the dotted line is the fitted regression curve using the naive method. Clearly the naive estimator captures the central structure of the raw data, as expected. The fitted regression function from the classical SIMEX nearly overlaps the proposed EX estimator. Compared with the naive regression, the SIMEX and the EX procedures provide relatively conservative fitted 25(OH)D values when the vitamin D intake values are small, which might be interpreted as an evidence of the subjects under-reporting their vitamin D intakes. Because fewer data points on the upper end, so we truncated the graph when the observed vitamin D intake is bigger than 15, therefore more caution should be paid when interpreting the trend on the right. More scientific explanations from the analysis need to consult with experts on nutrition studies. The computation times for each of the three methods are, $0.209$ seconds for Naive, $0.839$ seconds for the EX, and $209.58$ seconds for the classical SIMEX. Again, one can see that the proposed EX method is more efficient than the classical SIMEX.

  \begin{figure}[h!]
  \centering
  \caption{Naive, SIMEX and EX estimates}
  \label{fig10}
   \includegraphics[width=5.5in,height=4in]{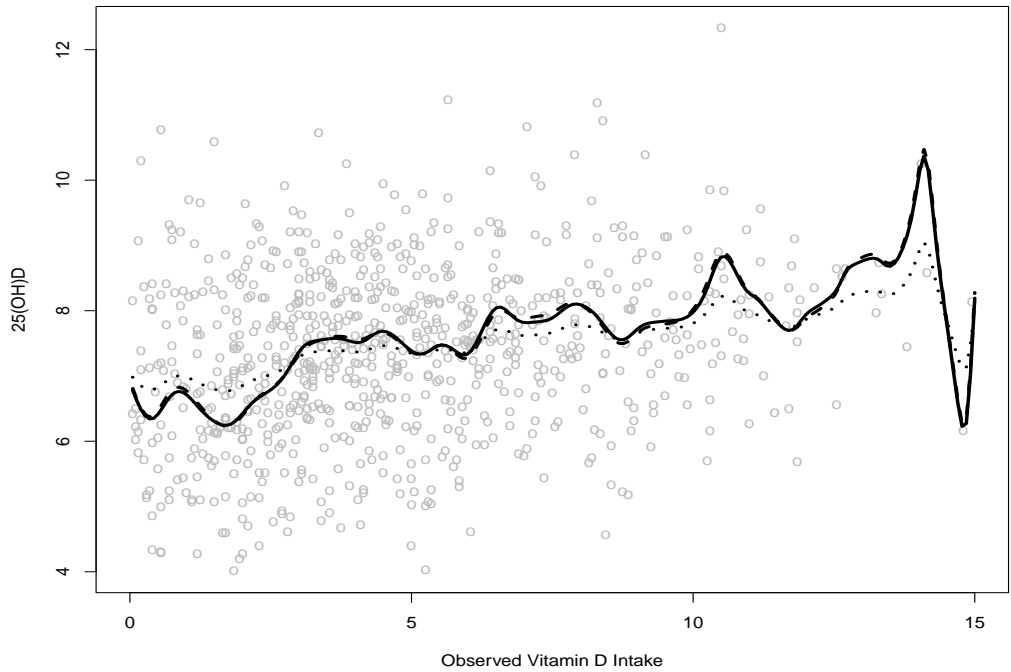}
  \end{figure}

 \section{Discussion}\label{sec7}

  Instead of taking the conditional expectation of the estimator based on the pseudo-data or following the three steps in the classical SIMEX algorithm, the proposed EX method applies the conditional expectation directly to the target function to be optimized based on the pseudo-data, thus successfully bypassing the simulation step. Both the simulation studies and the real data applications indicate the EX algorithm is more effective than the classical SIMEX, as evidenced by less computation time and smaller MSEs. In Section \ref{sec3}, we discussed the main difference between $\hat g_n(x;\lambda)$ and $\tilde g_n(x;\lambda)$, but a more detailed comparison can be made by a heuristic argument as shown below. Define $\bs=(s_0,s_1,s_2,t_0,t_1)^T$, and a function
    $$F(\bs)=F(s_0,s_1,s_2,t_0,t_1)=\frac{s_2t_0-s_1t_1}{s_2s_0-s_1^2},$$
 then a Taylor expansion at $\tilde\bs=(\tilde s_0, \tilde s_1, \tilde s_2, \tilde t_0, \tilde t_1)^T$ up to order 2 leads to
   $$
    F(\bs)\dot=F(\tilde\bs)+\frac{\partial F(\bs)}{\partial\bs}\Big|_{\bs=\tilde\bs}^T(\bs-\tilde\bs)+\frac{1}{2}
    (\bs-\tilde\bs)^T\frac{\partial^2 F(\bs)}{\partial\bs\partial\bs^T}\Big|_{\bs=\tilde\bs}(\bs-\tilde\bs).
   $$
 Let $s_j=S_{jn}(x,\bv)$, $\tilde s_j=\tilde S_{jn}(x)$ for $j=0,1,2$, and $t_j=T_{jn}(x,\bv)$, $\tilde t_j=\tilde T_{jn}(x)$ for $j=0,1$. Then it is easy to see that $\tilde g_n(x;\lambda)=E[F(\bs)|\bD]$, and $\hat g_n(x;\lambda)=F(\tilde\bs)$. Therefore, from the above Taylor expansion, we can see that $\tilde g_n(x;\lambda)-\hat g_n(x;\lambda)$ approximately equals
   \begin{eqarray*}
    E\left((\bs-\tilde\bs)^T\frac{\partial^2 F(\bs)}{\partial\bs\partial\bs^T}\Big|_{\bs=\tilde\bs}(\bs-\tilde\bs)\Bigg|\bD\right)
    =\mbox{trace}\left(\frac{\partial^2 F(\bs)}{\partial\bs\partial\bs^T}\Big|_{\bs=\tilde\bs}E\left((\bs-\tilde\bs)(\bs-\tilde\bs)^T|\bD\right)\right).
   \end{eqarray*}\\ \vskip -0.3in \noindent
 Note that the matrix $E\left((\bs-\tilde\bs)(\bs-\tilde\bs)^T|\bD\right)$ is nonnegative definite, so it is sufficient to consider the expectation of each entry only in the matrix to determine its order. We can show that all $25$ terms are of the order $O(1/nh)$ for $\lambda\geq 0$. This implies that, for $\lambda>0$, the estimators $\hat g_n(x;\lambda)$ and $\tilde g_n(x;\lambda)$ are equivalent, in the sense of having the same asymptotic distribution, if $nh^4\to 0$, and for $\lambda=0$, they are equivalent if $nh^5\to 0$. All the necessary computations supporting these claims can be found in the supplement materials.

 \subsection{On the extrapolation function}

  As we discussed in Section 5, theoretically it is impossible to specify the true form of the extrapolation function in the nonparametric regression setups. However, if the regression function $g$ has a parametric form, then according to the above discussion, we can indeed nail down the extrapolation function. To see this, consider the power function $x^p$ with some $p\geq 1$, with $X$ still assumed to be $N(0,\sigma_x^2)$. Then some algebra leads to
   $$
     \int t^p\phi\left(t,\frac{x\sigma_x^2}{(\lambda+1)\sigma_u^2+\sigma_x^2},
      \frac{(\lambda+1)\sigma_u^2\sigma_x^2}{(\lambda+1)\sigma_u^2+\sigma_x^2}
      \right)dt=\sum_{j=0}^{[p/2]}\binom{p}{2j}(2j-1)!!x^{p-2j}\frac{\sigma_x^{2p}\sigma_u^{4j}(\lambda+1)^{2j}}
      {[(\lambda+1)\sigma_u^2+\sigma_x^2]^{p}}.
   $$
 This implies that for a polynomial regression function $g$ of order $p$, if $X$ is normal, then the extrapolation function can be taken as a polynomial function of order $p$. Without loss of generality, assume that $H(\lambda)=s^T(\lambda)\balpha$, where $s(\lambda)=(1,\lambda,\lambda^2,\cdots,\lambda^p)^T$. Then $\balpha$ can be estimated by the minimizer of $L(\balpha)=\sum_{j=0}^K[\hat g_{\lambda_j}(x)-s^T(\lambda_j)\balpha]^2$. In fact, the minimizer 
   $
     \hat\balpha=\left[\sum_{j=0}^Ks(\lambda_j)s^T(\lambda_j)\right]^{-1}\sum_{j=0}^K\hat g_{\lambda_j}(x)s(\lambda_j).
   $
 Similar to \cite{car1999}, we have, for $nh^{5}\to 0$, $\sqrt{nh}(\hat\alpha-\alpha)$ is asymptotically normal with mean
   $
     \mu(x,h) =h^2B(x,0)\left[\sum_{j=0}^Ks(\lambda_j)s^T(\lambda_j)\right]^{-1}s(0),
   $
 and covariance matrix
   $$
     \tau^2(x)=\Delta_0(x)\left[\sum_{j=0}^Ks(\lambda_j)s^T(\lambda_j)\right]^{-1}s(0)s^T(0)\left[\sum_{j=0}^Ks(\lambda_j)s^T(\lambda_j)\right]^{-1}.
   $$
 Thus, for the SIMEX estimator $s^T(-1)\hat\balpha$, we have
 $\sqrt{nh}(s^T(-1)\hat\balpha-g(x)-s^T(-1)\mu(x,h))$ converges to $N\left(0, \tau^2(x)\right)$ in distribution.
 Of course, the discussion only has some theoretical significance. If we knew in advance that $g$ has a parametric form, we would not estimate it using the nonparametric methods.

 \section{Appendix}

   To prove Theorem \ref{thm0}, we need find out the expectations and variances of each component appearing in $\hat g_n(x;\lambda)$ defined in (\ref{eq3.8}). The calculation is facilitated by the following lemmas.

  \begin{lem}\label{lem1} Let $a, c$ be any positive constants. Suppose that for any $x$ in the support of $f_X$, $m'(t+x), m''(t+x)\in L_1(\phi(\cdot; 0,c))$ and are continuous as functions of $t$. Then, as $h\to 0$, we have
      $$
        \int \phi(t; x, ah^2+c)m(t)dt=\int\phi(t; x,c)m(t)dt+\frac{ah^2}{2}\int m''(t)\phi(t;x,c)dt+o(h^2).
      $$
  Furthermore, if
      $$
        \frac{\partial^j{\int m(t+x)\phi(t;0,c)dt}}{{\partial x^j}}=\int \frac{\partial^j{m(t+x)}}{{\partial x^j}}\phi(t;0,c)dt,\quad j=1,2,
      $$
  then we have
      $$
        \int \phi(t; x, ah^2+c)m(t)dt=\int\phi(t-x;0,c)m(t)dt+\frac{ah^2}{2}\cdot\frac{\partial^2{\int m(t)\phi(t-x;0,c)dt}}{{\partial x^2}}+o(h^2).
      $$
  \end{lem}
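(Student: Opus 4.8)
The plan is to regard the left-hand side as a function of the variance parameter alone and to perform a one-term Taylor expansion in that parameter around $c$, the enabling algebraic fact being that the Gaussian kernel solves the heat equation in its variance. Concretely, I would fix $x$ in the support of $f_X$ and set $u(\sigma^2)=\int \phi(t;x,\sigma^2)m(t)\,dt$, so that the quantity of interest is $u(ah^2+c)$ and the claimed leading term is $u(c)$. A direct differentiation of the normal density gives the identity
$$
  \frac{\partial}{\partial\sigma^2}\phi(t;x,\sigma^2)=\frac12\frac{\partial^2}{\partial t^2}\phi(t;x,\sigma^2),
$$
which is exactly what converts a derivative in the variance into two derivatives that can be shifted onto $m$.

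Next I would differentiate $u$ under the integral sign and integrate by parts twice. Using the heat identity, $u'(\sigma^2)=\tfrac12\int \partial_t^2\phi(t;x,\sigma^2)\,m(t)\,dt$, and transferring the two $t$-derivatives onto $m$ yields
$$
  u'(\sigma^2)=\frac12\int \phi(t;x,\sigma^2)\,m''(t)\,dt,
$$
provided the boundary terms vanish and the integral converges. The hypotheses $m'(\cdot+x),m''(\cdot+x)\in L_1(\phi(\cdot;0,c))$, combined with the super-exponential decay of $\phi$ and $\partial_t\phi$, are precisely what kill the boundary contributions and keep the resulting integral finite, while the continuity of $m'$ and $m''$ licenses the differentiation inside the integral through a dominated-convergence argument.

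With $u'$ in hand I would conclude by the mean value theorem applied to $u$ on $[c,c+ah^2]$: there is $\xi_h\in(c,c+ah^2)$ with $u(ah^2+c)-u(c)=ah^2\,u'(\xi_h)$. Since $\xi_h\to c$ as $h\to0$ and the map $\sigma^2\mapsto u'(\sigma^2)=\tfrac12\int\phi(t;x,\sigma^2)m''(t)\,dt$ is continuous at $c$ (again by dominated convergence, using $m''(\cdot+x)\in L_1(\phi(\cdot;0,c))$), I obtain $u'(\xi_h)=u'(c)+o(1)$, hence
$$
  u(ah^2+c)=u(c)+\frac{ah^2}{2}\int\phi(t;x,c)m''(t)\,dt+o(h^2),
$$
which is the first displayed identity. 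For the ``furthermore'' statement I would only rewrite the $m''$-term: under the stated interchange of differentiation and integration, the substitution $t\mapsto t+x$ gives
$$
  \int m''(t)\phi(t;x,c)\,dt=\int m''(t+x)\phi(t;0,c)\,dt=\frac{\partial^2}{\partial x^2}\int m(t+x)\phi(t;0,c)\,dt=\frac{\partial^2}{\partial x^2}\int m(t)\phi(t-x;0,c)\,dt,
$$
and since the leading term equals $\int\phi(t-x;0,c)m(t)\,dt$, substituting this into the first identity yields the second.

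I expect the main obstacle to be analytic bookkeeping rather than the underlying idea: rigorously justifying the interchange of derivative and integral and the vanishing of the boundary terms in the two integrations by parts, which is exactly where the $L_1(\phi)$-membership and the continuity of $m'$ and $m''$ are consumed. Once those interchanges are secured, the remainder of the argument is just the heat-kernel identity together with a single-term Taylor expansion in the variance.
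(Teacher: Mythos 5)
Your proof is correct in substance but takes a genuinely different route from the paper. The paper substitutes $t=x+u\sqrt{ah^2+c}$ to move the bandwidth into the argument of $m$, Taylor-expands $m(x+u\sqrt{ah^2+c})$ in $h$ about $h=0$ (the first-order term vanishes because $\partial_h\sqrt{ah^2+c}=0$ at $h=0$), and then performs a \emph{single} Stein-type integration by parts,
$$
  \frac{1}{c}\int (t-x)\,\phi(t-x;0,c)\,m'(t)\,dt=\int m''(t)\,\phi(t-x;0,c)\,dt,
$$
to convert the resulting $m'$-term into the stated $m''$-term. You instead expand directly in the variance parameter, via the heat-equation identity $\partial_{\sigma^2}\phi=\tfrac12\,\partial_t^2\phi$, two integrations by parts, and the mean value theorem on $[c,c+ah^2]$. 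What your route buys is a more transparent remainder: the MVT together with continuity of $u'$ at $c$ yields the $o(h^2)$ term by a clean limit argument, whereas the paper simply asserts $o(h^2)$ after its Taylor step. What the paper's route buys is lighter bookkeeping: only one integration by parts, hence only one boundary term (involving $m'$), while you must additionally kill a boundary term involving $m$ itself. The handling of the ``furthermore'' clause is identical in both arguments.

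One caveat worth recording: your argument consumes slightly more than the hypotheses literally provide. Evaluating $u'(\xi_h)$ at $\xi_h\in(c,c+ah^2)$, differentiating under the integral there, and passing to the limit $u'(\xi_h)\to u'(c)$ all require $m''$ to be integrable against Gaussians of variance \emph{slightly exceeding} $c$ (a natural dominating function for $\sigma^2\in[c,c+\delta]$ is a constant multiple of $\phi(\cdot;x,c+\delta)$), and membership in $L_1(\phi(\cdot;0,c))$ does not imply membership in $L_1(\phi(\cdot;0,c+\delta))$, since larger variance means heavier tails. In fairness, the paper's proof carries the same latent issue --- a rigorous Lagrange-remainder bound in its expansion would evaluate $m'$ at intermediate variances $a h^{*2}+c>c$ as well --- so neither argument is fully airtight under the stated conditions; but you should flag this domination step explicitly rather than fold it into ``analytic bookkeeping.''
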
\vskip 0.1in

 In Lemma \ref{lem1}, if we take $c=0$, then
   \begin{equation*}
      \int \phi(t; x, ah^2)m(t)dt=m(x)+\frac{ah^2}{2}m''(x)+o(h^2).
   \end{equation*}

  \begin{proof}[The proof of Lemma \ref{lem1}.]
   Note that%
   \begin{eqarray*}
   \frac{\partial m(x+u\sqrt{ah^2+c})}{\partial h}\Big|_{h=0}=0,\quad\quad \frac{\partial^2 m(x+u\sqrt{ah^2+c})}{\partial h^2}\Big|_{h=0}=m'(x+u\sqrt{c})\frac{au}{\sqrt{c}}.
  \end{eqarray*}\\ \vskip -0.3in \noindent
 Therefore, using Taylor expansion,
  \begin{eqarray}
    &&\int \phi(t; x, ah^2+c)m(t)dt=\int \phi(u;0,1)m(x+u\sqrt{ah^2+c})du\nonumber\\
    &=&\int \phi(u;0,1)\left[m(x+u\sqrt{c})+m'(x+u\sqrt{c})\frac{auh^2}{2\sqrt{c}}\right]du+o(h^2)\nonumber\\
    &=&\int \phi(t-x;0,c)\left[m(t)+m'(t)\frac{ah^2(t-x)}{2c}\right]dt+o(h^2).\label{eq8.1}
  \end{eqarray}
 Note that under the condition of $m'(t+x)\in L_1(\phi(\cdot; 0,c))$ for any $x\in\R$, we can get
  $$
   \frac{1}{c}\int\phi(t-x;0,c)m'(t)(t-x)dt=\int m''(t)\phi(t-x;0,c)dt.
  $$
 This, together with (\ref{eq8.1}), implies the first expansion.

 For the second expansion, notice that under the derivative-integration exchangeability condition, we have
   $$
    \int m''(t)\phi(t-x;0,c)dt=\int \frac{\partial^2m(t+x)}{\partial t^2}\phi(t;0,c)dt
    =\int \frac{\partial^2m(t+x)}{\partial x^2}\phi(t;0,c)dt.
   $$
 \end{proof}

 The following lemma lists some facts about normal density functions which are used often in the proofs of our main results. For the sake of brevity, the proofs of these facts are omitted since they can be found in standard statistics books.

 \begin{lem}\label{lem2} For normal density function $\phi(u; \mu,\sigma^2)$ with mean $\mu$ and variance $\sigma^2$, we have
   \begin{eqarray*}
     &&\phi^2(u;\mu,\sigma^2)=\frac{1}{2\sqrt{\pi\sigma^2}}\phi\left(u;\mu,\frac{\sigma^2}{2}\right),\quad
     \phi^3(u;\mu,\sigma^2)=\frac{1}{2\sqrt{3}\pi\sigma^2}\phi\left(u;\mu,\frac{\sigma^2}{3}\right),\\
     &&\phi(u; \mu_1,\sigma_1^2)\phi(u,\mu_2,\sigma_2^2)=\phi(\mu_1-\mu_2;0,\sigma_1^2+\sigma_2^2)
               \phi\left(u;\frac{\sigma_1^2\mu_2+\sigma_2^2\mu_1}{\sigma_1^2
                   +\sigma_2^2},\frac{\sigma_1^2\sigma_2^2}{\sigma_1^2+\sigma_2^2}\right),\\
     &&\int \phi(u;\mu_1,\sigma_1^2)\phi(u;\mu_2,\sigma_2^2)du=
                \phi(\mu_1-\mu_2;0,\sigma_1^2+\sigma_2^2),\\
     &&\int u\phi(u;\mu_1,\sigma_1^2)\phi(u;\mu_2,\sigma_2^2)du=
                \frac{\sigma_1^2\mu_2+\sigma_2^2\mu_1}{\sigma_1^2+\sigma_2^2}\phi(\mu_1-\mu_2;0,\sigma_1^2+\sigma_2^2),\\
    &&\int u^2\phi(u;\mu_1,\sigma_1^2)\phi(u;\mu_2,\sigma_2^2)du=\left[
                \frac{\sigma_1^2\sigma_2^2}{\sigma_1^2+\sigma_2^2}+
                \left(\frac{\sigma_1^2\mu_2+\sigma_2^2\mu_1}{\sigma_1^2+\sigma_2^2}\right)^2\right]\phi(\mu_1-\mu_2;0,\sigma_1^2+\sigma_2^2).
   \end{eqarray*}\\ \vskip -0.3in \noindent
 \end{lem}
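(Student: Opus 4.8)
The plan is to establish the product identity (the third line of Lemma \ref{lem2}) first and then derive every other statement as a corollary, since the product rule for two Gaussians is the structural backbone of the entire list. I would write both densities in exponential form and focus on the sum of the two quadratic exponents, $(u-\mu_1)^2/(2\sigma_1^2)+(u-\mu_2)^2/(2\sigma_2^2)$. Completing the square in $u$ splits this into a quadratic centered at the posterior mean $\bar\mu=(\sigma_2^2\mu_1+\sigma_1^2\mu_2)/(\sigma_1^2+\sigma_2^2)$ with variance $\sigma^2=\sigma_1^2\sigma_2^2/(\sigma_1^2+\sigma_2^2)$, plus a residual constant $(\mu_1-\mu_2)^2/(2(\sigma_1^2+\sigma_2^2))$ that no longer depends on $u$. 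Matching the leading normalization constant $1/(2\pi\sigma_1\sigma_2)$ against the constants produced by the two target densities gives exactly the claimed factorization into a marginal constant times a single Gaussian in $u$.

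With the product rule in hand, the integral identities (lines four through six) reduce to reading off moments of the single Gaussian factor $\phi(u;\bar\mu,\sigma^2)$. Integrating the product over $u$ annihilates that factor and leaves $\phi(\mu_1-\mu_2;0,\sigma_1^2+\sigma_2^2)$, which is line four; multiplying by $u$ or $u^2$ before integrating returns the first and second moments $\bar\mu$ and $\sigma^2+\bar\mu^2$ of that factor, yielding lines five and six after substituting the explicit expressions for $\bar\mu$ and $\sigma^2$.

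The two self-product identities follow by specialization. For $\phi^2$ I would set $\mu_1=\mu_2=\mu$ and $\sigma_1^2=\sigma_2^2=\sigma^2$ in the product rule, so that $\bar\mu=\mu$, the new variance is $\sigma^2/2$, and the constant factor becomes $\phi(0;0,2\sigma^2)=1/(2\sqrt{\pi\sigma^2})$. For $\phi^3$ I would write $\phi^3=\phi\cdot\phi^2$, insert the $\phi^2$ identity just obtained, and apply the product rule once more with variances $\sigma^2$ and $\sigma^2/2$; the resulting variance is $\sigma^2/3$, the marginal constant is $\phi(0;0,3\sigma^2/2)=1/\sqrt{3\pi\sigma^2}$, and the accumulated constants collapse to $1/(2\sqrt{3}\pi\sigma^2)$.

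There is no genuine obstacle here, as these are classical Gaussian facts; the only place demanding care is the completing-the-square bookkeeping in the product rule, where I must track the three pieces (the overall constant, the $u$-dependent quadratic, and the $u$-free residual) without dropping factors of $2\pi$ or mislabeling which mean and variance belong to the posterior Gaussian versus the marginal constant.
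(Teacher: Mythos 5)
Your proposal is correct: the paper itself omits the proof of Lemma \ref{lem2} entirely, stating only that these facts can be found in standard statistics books, and your derivation is exactly the standard one those references give. The completing-the-square bookkeeping checks out at every step --- the posterior mean $\bar\mu=(\sigma_2^2\mu_1+\sigma_1^2\mu_2)/(\sigma_1^2+\sigma_2^2)$, variance $\sigma_1^2\sigma_2^2/(\sigma_1^2+\sigma_2^2)$, the normalization match $1/(2\pi\sigma_1\sigma_2)$, the specializations $\phi(0;0,2\sigma^2)=1/(2\sqrt{\pi\sigma^2})$ and $\phi(0;0,3\sigma^2/2)=1/\sqrt{3\pi\sigma^2}$ collapsing to $1/(2\sqrt{3}\pi\sigma^2)$, and the moment identities $\int u\,\phi(u;\bar\mu,\sigma^2)\,du=\bar\mu$ and $\int u^2\phi(u;\bar\mu,\sigma^2)\,du=\sigma^2+\bar\mu^2$ --- so deriving everything from the single product rule is both valid and the natural organization.
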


 Then using the above two lemmas, for the components $\tilde S_{jn}(x)$ or $\tilde s_j$ for $j=0,1,2$, $\tilde T_{ln}(x)$ for $l=0,1$, in the definition of $\hat g_n(x)$ given in (\ref{eq3.8}), we can get the following series of results on the asymptotic expansions of their expectations and variances. For brevity, in the proof, we denote
 $\delta_{jh}^2=h^2+(\lambda+j)\sigma_u^2$ for $j=0,1,2$.

 \begin{lem}\label{lem3}
   For $\tilde S_{n0}(x)$, we have
    \begin{eqarray*}
       E(\tilde S_{n0}(x))&=&f_{0,\lambda}(x)+h^2f_{0,\lambda}''(x)/2+o(h^2),\quad \lambda\geq 0,\\[0.1in]
       \mbox{Var}(\tilde S_{n0}(x))&=&\left\{
            \begin{array}{l}
             \displaystyle\frac{f_{0,\lambda/2}(x)}{2n\sqrt{\pi\lambda\sigma_u^2}}
            -\frac{f_{0,\lambda}^2(x)}{n}+O\left(\frac{h^2}{n}\right),\quad \lambda>0,\\[0.15in]
           \displaystyle\frac{f_{0,0}(x)}{2nh\sqrt{\pi}}
            -\frac{f_{0,0}^2(x)}{n}+O\left(\frac{h}{n}\right), \quad \lambda=0.
           \end{array}
           \right.
    \end{eqarray*}
 \end{lem}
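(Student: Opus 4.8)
The plan is to exploit the fact that $\tilde S_{n0}(x)$ is an i.i.d. sample average. By (\ref{eq3.3}) together with the definition of $A_{n0}$, one has $\tilde S_{n0}(x)=n^{-1}\sum_{i=1}^n\phi(x;Z_i,h^2+\lambda\sigma_u^2)$, so both the mean and the variance reduce to moment computations for the single summand $\phi(x;Z,h^2+\lambda\sigma_u^2)$, where $Z=X+U$ with $X\sim f_X$ and $U\sim N(0,\sigma_u^2)$ independent. Once this reduction is in place, every remaining step is an application of the two Gaussian identities from Lemma \ref{lem2} and the expansion in Lemma \ref{lem1}.

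For the expectation, first I would write $E[\tilde S_{n0}(x)]=E[\phi(x;Z,h^2+\lambda\sigma_u^2)]$ and condition on $X$, integrating out $U$. The inner integral is a convolution of two Gaussians, so the fourth identity of Lemma \ref{lem2} collapses it to $E[\tilde S_{n0}(x)]=\int f_X(t)\phi(t;x,h^2+(\lambda+1)\sigma_u^2)\,dt$. Applying Lemma \ref{lem1} with $a=1$, $c=(\lambda+1)\sigma_u^2$ and $m=f_X$ (whose regularity follows from C1, since $L_2(\phi)\subset L_1(\phi)$ as $\phi$ is a probability density) then yields the stated expansion $f_{0,\lambda}(x)+h^2 f_{0,\lambda}''(x)/2+o(h^2)$, where the derivative-integration exchange in the second part of Lemma \ref{lem1} identifies the $h^2$ coefficient as $f_{0,\lambda}''(x)/2$.

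For the variance, I would use $\mathrm{Var}(\tilde S_{n0}(x))=n^{-1}\{E[\phi^2(x;Z,h^2+\lambda\sigma_u^2)]-(E[\phi(x;Z,h^2+\lambda\sigma_u^2)])^2\}$. The squared-mean term contributes $-f_{0,\lambda}^2(x)/n$ up to order $O(h^2/n)$ by the expectation just obtained. For the second moment I would invoke the squaring identity $\phi^2(u;\mu,\sigma^2)=(2\sqrt{\pi\sigma^2})^{-1}\phi(u;\mu,\sigma^2/2)$ from Lemma \ref{lem2}, which rewrites $E[\phi^2(x;Z,h^2+\lambda\sigma_u^2)]$ as $(2\sqrt{\pi(h^2+\lambda\sigma_u^2)})^{-1}E[\phi(x;Z,(h^2+\lambda\sigma_u^2)/2)]$. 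Repeating the convolution-plus-Lemma \ref{lem1} argument (now with half-variance) shows the residual expectation equals $f_{0,\lambda/2}(x)+O(h^2)$.

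The last step, and the main source of subtlety, is the case split on the prefactor $(2\sqrt{\pi(h^2+\lambda\sigma_u^2)})^{-1}$. For $\lambda>0$ the pseudo-error supplies an effective variance bounded below by $\lambda\sigma_u^2>0$, so this prefactor has the finite limit $(2\sqrt{\pi\lambda\sigma_u^2})^{-1}$ with an $O(h^2)$ Taylor correction, producing the $O(1/n)$ parametric-rate variance with $O(h^2/n)$ remainder. For $\lambda=0$, however, there is no extra smoothing and the prefactor equals $(2h\sqrt{\pi})^{-1}$, which diverges as $h\to0$; this manufactures the familiar kernel-type $1/(nh)$ rate, and because the diverging prefactor multiplies the $O(h^2)$ remainder from Lemma \ref{lem1}, it degrades the error term from $O(h^2/n)$ to $O(h/n)$. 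Carefully tracking how the bounded versus diverging prefactor interacts with the remainder in Lemma \ref{lem1} — and thereby correctly reporting $O(h^2/n)$ for $\lambda>0$ but $O(h/n)$ for $\lambda=0$ — is the one place where the bookkeeping demands care; the rest is a routine substitution of the two Gaussian identities.
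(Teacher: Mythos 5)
Your proposal is correct and takes essentially the same route as the paper's proof: reduce to the moments of the single summand $\phi(x;Z,h^2+\lambda\sigma_u^2)$, apply the Gaussian convolution and squaring identities of Lemma \ref{lem2} together with the expansion of Lemma \ref{lem1} (the paper uses $a=1$, $c=(\lambda+1)\sigma_u^2$ for the mean and $a=1/2$, $c=(\lambda+2)\sigma_u^2/2$ for the second moment, which is exactly your ``half-variance'' step), and then case-split on the prefactor $\bigl(2\sqrt{\pi(h^2+\lambda\sigma_u^2)}\bigr)^{-1}$. Your remainder bookkeeping, including the degradation from $O(h^2/n)$ to $O(h/n)$ at $\lambda=0$ caused by the $(2h\sqrt{\pi})^{-1}$ factor multiplying the $O(h^2)$ term, matches the paper's computation exactly.
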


 \begin{proof}[Proof of Lemma \ref{lem3}.] 
 By the independence of $X$ and $U$, and applying Lemma \ref{lem1} with $a=1$, $c=(\lambda+1)\sigma_u^2$ and $m(t)=f_X(t)$, for $\tilde S_{0n}$, we have
  \begin{eqarray*}
    &&E\phi(x,Z,\delta_{0h}^2)=\iint \phi(x;t+u,\delta_{0h}^2)\phi(u;0,\sigma_u^2)f_X(t)dudt\nonumber\\
     &=&\int\phi(t-x;0,(\lambda+1)\sigma_u^2)f_X(t)dt
        +\frac{h^2}{2}\int f_X''(t)\phi(t-x;0,(\lambda+1)\sigma_u^2)dt+o(h^2)\nonumber\\
     &=&f_{0,\lambda}(x)+h^2f_{0,\lambda}''(x)/2+o(h^2).   \label{eq2.4}
   \end{eqarray*}\\ \vskip -0.3in \noindent
 Also, applying Lemma (\ref{lem1}) with $a=1/2$, $c=(\lambda+2)\sigma_u^2/2$, and $m(t)=f_X(t)$, we have
   \begin{eqarray}
     &&E\phi^2(x;Z,\delta_{0h}^2)=\iint \phi^2(x;t+u,\delta_{0h}^2)\phi(u;0,\sigma_u^2)f_X(t)dudt\nonumber\\
       &=&\frac{1}{2\sqrt{\pi\delta_{0h}^2}}\int \phi\left(t;x,\frac{\delta_{2h}^2}{2}\right)f_X(t)dt
     =\frac{f_{0,\lambda/2}(x)}{2\sqrt{\pi\delta_{0h}^2}}
     +\frac{h^2f_{0,\lambda/2}''(x)}{8\sqrt{\pi\delta_{0h}^2}}+o\left(\frac{h^2}{2\sqrt{\pi\delta_{0h}^2}}\right).
     \label{eq8.2}
   \end{eqarray}
Therefore, we have
   \begin{eqarray*}
     &&\mbox{var}\left[\tilde S_{0n}(x)\right]
       =\frac{1}{n}\left\{E\phi^2(x;Z,\delta_{0h}^2)-[E\phi(x;Z,\delta_{0h}^2)]^2\right\} \\
     &=&\frac{4f_{0,\lambda/2}(x)+h^2f_{0,\lambda/2}''(x)}{8n\sqrt{\pi\delta_{0h}^2}}
        -\frac{f_{0,\lambda}^2(x)+f_{0,\lambda}(x)f_{0,\lambda}''(x)h^2}{n}+o\left(\frac{h^2}{2n\sqrt{\pi\delta_{0h}^2}}\right).\nonumber
   \end{eqarray*}
 This concludes the proof of Lemma \ref{lem3}.
  \end{proof}

 \begin{lem}\label{lem4}
   For $\tilde S_{n1}(x)$, we have
     \begin{eqarray*}
           E(\tilde S_{n1}(x))&=&\frac{h^2}{(\lambda+1)\sigma_u^2}
              f_{1,\lambda}(x)-\frac{xh^2}{(\lambda+1)\sigma_u^2}f_{0,\lambda}(x)+o\left(h^2\right),\quad \lambda\geq 0,\\[0.1in]
           \mbox{Var}(\tilde S_{n1}(x))&=&\left\{
            \begin{array}{l}
             \displaystyle\frac{h^4}{2n(\lambda+2)^2\sigma_u^4\sqrt{\pi\lambda\sigma_u^2}}
           \left[f_{2,\lambda/2}(x)-xf_{1,\lambda/2}(x)+x^2f_{0,\lambda/2}(x)\right]+\\[0.2in]
            \displaystyle\quad \frac{h^4}{2n\lambda(\lambda+2)\sigma_u^2\sqrt{\pi\lambda\sigma_u^2}}f_{0,\lambda/2}(x)
           -\frac{h^4}{n(\lambda+1)^2\sigma_u^4}[f_{1,\lambda}(x)-xf_{0,\lambda}(x)]^2,\quad \lambda>0\\[0.2in]
           \displaystyle \frac{h}{4n\sqrt{\pi}} f_{0,0}(x)+o\left(\frac{h}{n}\right), \quad\lambda=0.
           \end{array}
         \right.
         \end{eqarray*}
 \end{lem}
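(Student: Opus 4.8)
The plan is to exploit the fact that, thanks to the explicit conditional expectation (\ref{eq3.4}), $\tilde S_{n1}(x)$ is an ordinary i.i.d.\ average rather than a genuine $n$-fold integral. Since $\tilde S_{n1}(x)=\int S_{1n}(x,\bv)\bphi(\bv;0,\lambda\sigma_u^2)\,d\bv$ defined in (\ref{eq3.9}) is precisely the conditional expectation of $S_{1n}(x,\bv)$ given $\bD$, linearity together with (\ref{eq3.4}) gives
$$\tilde S_{n1}(x)=\frac{1}{n}\sum_{i=1}^n r(\lambda,h)(Z_i-x)\phi(x;Z_i,\delta_{0h}^2),\qquad r(\lambda,h)=\frac{h^2}{\delta_{0h}^2}.$$
Hence $E(\tilde S_{n1}(x))=r(\lambda,h)\,E[(Z-x)\phi(x;Z,\delta_{0h}^2)]$ and $\mathrm{Var}(\tilde S_{n1}(x))=n^{-1}r^2(\lambda,h)\,\mathrm{Var}[(Z-x)\phi(x;Z,\delta_{0h}^2)]$, so the whole problem reduces to computing one- and two-dimensional Gaussian moments of $(Z-x)\phi(x;Z,\delta_{0h}^2)$ with $Z=X+U$.

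For the mean I would first integrate out $U$. Reading $\phi(x;t+u,\delta_{0h}^2)$ as a normal density in $u$ with mean $x-t$ and variance $\delta_{0h}^2$, the product rule in Lemma \ref{lem2} collapses $\phi(u;x-t,\delta_{0h}^2)\phi(u;0,\sigma_u^2)$ into $\phi(x-t;0,\delta_{1h}^2)$ times a Gaussian in $u$ with mean $\sigma_u^2(x-t)/\delta_{1h}^2$. Taking the first moment and using $\delta_{1h}^2-\sigma_u^2=\delta_{0h}^2$ yields
$$E[(Z-x)\phi(x;Z,\delta_{0h}^2)]=\frac{\delta_{0h}^2}{\delta_{1h}^2}\int (t-x)\phi(t;x,\delta_{1h}^2)f_X(t)\,dt.$$
Multiplying by $r(\lambda,h)=h^2/\delta_{0h}^2$ cancels the $\delta_{0h}^2$, and Lemma \ref{lem1} with $m(t)=(t-x)f_X(t)$, $a=1$, $c=(\lambda+1)\sigma_u^2$, together with $h^2/\delta_{1h}^2=h^2/[(\lambda+1)\sigma_u^2]+O(h^4)$, delivers $E(\tilde S_{n1}(x))=\{h^2/[(\lambda+1)\sigma_u^2]\}[f_{1,\lambda}(x)-xf_{0,\lambda}(x)]+o(h^2)$ for all $\lambda\ge 0$.

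For the variance the decisive tool is the squaring identity $\phi^2(x;Z,\delta_{0h}^2)=(2\sqrt{\pi\delta_{0h}^2})^{-1}\phi(x;Z,\delta_{0h}^2/2)$ from Lemma \ref{lem2}, which turns $E[(Z-x)^2\phi^2(x;Z,\delta_{0h}^2)]$ into $(2\sqrt{\pi\delta_{0h}^2})^{-1}E[(Z-x)^2\phi(x;Z,\delta_{0h}^2/2)]$. Integrating out $U$ as above, the product rule now produces $\phi(x-t;0,\delta_{2h}^2/2)$ (using $\delta_{0h}^2/2+\sigma_u^2=\delta_{2h}^2/2$) times a Gaussian in $u$; completing the square in $E[(u+(t-x))^2]$ gives the conditional variance $\delta_{0h}^2\sigma_u^2/\delta_{2h}^2$ plus the squared shift $(t-x)^2\delta_{0h}^4/\delta_{2h}^4$, the crucial simplification being $\delta_{2h}^2-2\sigma_u^2=\delta_{0h}^2$. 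Expanding $(t-x)^2=t^2-2xt+x^2$ against $\phi(t;x,\delta_{2h}^2/2)$ and invoking Lemma \ref{lem1} (with $a=1/2$, $c=(\lambda+2)\sigma_u^2/2$, so that the leading integrals are the $f_{j,\lambda/2}(x)$) then maps the $\delta_{0h}^2\sigma_u^2/\delta_{2h}^2$ piece to the term carrying $f_{0,\lambda/2}(x)$ with denominator $\lambda(\lambda+2)$, and the $(t-x)^2\delta_{0h}^4/\delta_{2h}^4$ piece to the term involving $f_{2,\lambda/2}(x),f_{1,\lambda/2}(x),f_{0,\lambda/2}(x)$ with denominator $(\lambda+2)^2$; the square of the mean already computed supplies the subtracted term $-h^4[f_{1,\lambda}(x)-xf_{0,\lambda}(x)]^2/[n(\lambda+1)^2\sigma_u^4]$.

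The main subtlety, and where I would be most careful, is the bookkeeping of the powers of $\delta_{0h}^2,\delta_{1h}^2,\delta_{2h}^2$ after multiplying through by $r^2(\lambda,h)=h^4/\delta_{0h}^4$, and especially the change of rate at $\lambda=0$. For $\lambda>0$ all three $\delta_{jh}^2$ tend to positive constants, the prefactor $h^4/\delta_{0h}^4$ stays $O(h^4)$, and the stated $O(h^4/n)$ expansion with the $\sqrt{\pi\lambda\sigma_u^2}$ denominators falls out. When $\lambda=0$, however, $\delta_{0h}^2=h^2\to 0$, so $h^4/\delta_{0h}^4=1$ and $\sqrt{\pi\delta_{0h}^2}=h\sqrt{\pi}$; the piece carrying $\delta_{0h}^2\sigma_u^2/\delta_{2h}^2$ then degenerates to $f_{0,0}(x)h/(4n\sqrt{\pi})$, which is of the larger order $h/n$, while the remaining contributions are $O(h^3/n)$ and $O(h^4/n)$ and are absorbed into $o(h/n)$. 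Cleanly separating these two regimes, and verifying that the $o(\cdot)$ remainders produced by Lemma \ref{lem1} are uniformly of the advertised order, is the only genuinely delicate part; everything else is the routine Gaussian algebra enabled by Lemmas \ref{lem1} and \ref{lem2}.
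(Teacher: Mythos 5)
Your proposal is correct and follows essentially the same route as the paper's proof: reduce $\tilde S_{n1}(x)$ via (\ref{eq3.4}) to the i.i.d.\ average $r(\lambda,h)n^{-1}\sum_i(Z_i-x)\phi(x;Z_i,h^2+\lambda\sigma_u^2)$, integrate out $U$ with the Gaussian product identities of Lemma \ref{lem2} (exploiting $\delta_{1h}^2=\delta_{0h}^2+\sigma_u^2$, $\delta_{2h}^2/2=\delta_{0h}^2/2+\sigma_u^2$ and $\delta_{2h}^2-2\sigma_u^2=\delta_{0h}^2$, exactly as the paper does), then expand with Lemma \ref{lem1} and separate the $\lambda>0$ and $\lambda=0$ rates, where the $\sigma_u^2\delta_{0h}^2/\delta_{2h}^2$ piece degenerates to $hf_{0,0}(x)/(4n\sqrt{\pi})$. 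The one discrepancy is immaterial to your argument and in your favor: your (correct) expansion $(t-x)^2=t^2-2xt+x^2$ yields $-2xf_{1,\lambda/2}(x)$, whereas the lemma as stated, and the paper's own proof, print $-xf_{1,\lambda/2}(x)$ --- an apparent typographical slip in the paper (compare the analogous, correctly written $-2xf_{1,\lambda}(x)$ term in the proof of Lemma \ref{lem5}).
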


 \begin{proof}[Proof of Lemma \ref{lem4}.]
 Applying Lemma \ref{lem1} with $a=1$, $c=(\lambda+1)\sigma_u^2$, $m(t)=f_X(t)$ and $tf_X(t)$, and Lemma \ref{lem2}, we have
   \begin{eqarray*}
    &&E(Z-x)\phi(x,Z,\delta_{0h}^2)=\iint (t+u-x)\phi(x;t+u,\delta_{0h}^2)\phi(u;0,\sigma_u^2)f_X(t)dudt\nonumber\\
       &=&\int (t-x)\left[\int \phi(u; x-t,\delta_{0h}^2)\phi(u,0,\sigma_u^2)du\right]f_X(t)dt\\
    &&   +\int\left[\int u\phi(u; x-t,\lambda\sigma_u^2)\phi(u; 0,\sigma_u^2)du\right]f_X(t)dt \nonumber\\
     &=&\int\phi(t-x;0,\delta_{1h}^2)(t-x)f_X(t)dt-\frac{\sigma_u^2}{\delta_{1h}^2}\int
         \phi(t-x;0,\delta_{1h}^2)(t-x)f_X(t)dt\nonumber\\
     &=&\frac{\delta_{0h}^2}{\delta_{1h}^2}\int\phi(t-x;0,\delta_{1h}^2)tf_X(t)dt-\frac{\delta_{0h}^2x}{\delta_{1h}^2}\int\phi(t-x;0,\delta_{1h}^2)f_X(t)dt\\
     &=&\frac{\delta_{0h}^2}{\delta_{1h}^2}
     \left[f_{1,\lambda}(x)+\frac{h^2}{2}f_{1,\lambda}''(x)\right]-\frac{x\delta_{0h}^2}{\delta_{1h}^2}
     \left[f_{0,\lambda}(x)+\frac{h^2}{2}f_{0,\lambda}''(x)\right]+o\left(\frac{\delta_{0h}^2h^2}{2\delta_{1h}^2}\right).
   \end{eqarray*}\\ \vskip -0.3in \noindent
 We also have
   \begin{eqarray*}
    &&E(Z-x)^2\phi^2(x,Z,\delta_{0h}^2)=\iint (t+u-x)^2\phi^2(x;t+u,\delta_{0h}^2)\phi(u;0,\sigma_u^2)f_X(t)dudt\nonumber\\
    &=&\frac{1}{2\sqrt{\pi\delta_{0h}^2}}\iint (t+u-x)^2\phi\left(u;x-t,\frac{\delta_{0h}^2}{2}\right)\phi(u;0,\sigma_u^2)f_X(t)dudt\nonumber\\
    &=&\frac{1}{2\sqrt{\pi\delta_{0h}^2}}\iint (t+u-x)^2\phi\left(t,x,\frac{\delta_{2h}^2}{2}\right)\phi\left(u;
    \frac{2\sigma_u^2(x-t)}{\delta_{2h}^2}, \frac{\sigma_u^2\delta_{0h}^2}{\delta_{2h}^2}\right)f_X(t)dudt\nonumber\\
    &=&\left(\frac{1}{2\sqrt{\pi\delta_{0h}^2}}-\frac{1}{\sqrt{\pi\delta_{0h}^2}}\cdot\frac{2\sigma_u^2}{\delta_{2h}^2}\right)
    \int (t-x)^2\phi\left(t,x,\frac{\delta_{2h}^2}{2}\right)f_X(t)dt\nonumber\\
    &&+\frac{1}{2\sqrt{\pi\delta_{0h}^2}}\int
    \left(\frac{\sigma_u^2\delta_{0h}^2}{\delta_{2h}^2}+
    \left[\frac{2\sigma_u^2(x-t)}{\delta_{2h}^2}\right]^2
    \right)
    \phi\left(t,x,\frac{\delta_{2h}^2}{2}\right)f_X(t)dt\nonumber\\
    &=&\frac{1}{2\sqrt{\pi\delta_{0h}^2}}\left(1-\frac{2\sigma_u^2}{\delta_{2h}^2}\right)^2
     \int (t-x)^2\phi\left(t,x,\frac{\delta_{2h}^2}{2}\right)f_X(t)dt+\frac{\sigma_u^2\delta_{0h}}{2\delta_{2h}^2\sqrt{\pi}}
    \int\phi\left(t,x,\frac{\delta_{2h}^2}{2}\right)f_X(t)dt\\
    &=&\frac{\delta_{0h}^3}{2\delta_{2h}^4\sqrt{\pi}}
       \int (t-x)^2\phi\left(t,x,\frac{\delta_{2h}^2}{2}\right)f_X(t)dt+
       \frac{\sigma_u^2\delta_{0h}^2}{2\delta_{2h}^2\sqrt{\pi\delta_{0h}^2}}\int
    \phi\left(t,x,\frac{\delta_{2h}^2}{2}\right)f_X(t)dt\\
    &=&\frac{\delta_{0h}^3}{2\delta_{2h}^4\sqrt{\pi}}\left(
    \left[f_{2,\lambda/2}(x)+\frac{h^2}{4}f_{2,\lambda/2}''(x)+o(h^2)\right]
     -x
    \left[f_{1,\lambda/2}(x)+\frac{h^2}{4}f_{1,\lambda/2}''(x)+o(h^2)\right]\right)\\
    &&+\left(\frac{x^2\delta_{0h}^3}{2\delta_{2h}^4\sqrt{\pi}}+\frac{\sigma_u^2\delta_{0h}}{2\delta_{2h}^2\sqrt{\pi}}\right)
      \left[f_{0,\lambda/2}(x)+\frac{h^2}{4}f_{0,\lambda/2}''(x)+o(h^2)\right].
   \end{eqarray*}\\ \vskip -0.3in \noindent
 Therefore, we have
  \begin{eqarray*}
    &&\mbox{var}\left[\tilde S_{1n}(x)\right]=\frac{h^4\left\{E(Z-x)^2\phi^2(x;Z,\delta_{0h}^2)-[E(Z-x)\phi(x;Z,\delta_{0h}^2)]^2\right\}}{n\delta_{0h}^4}=\\
    &&\hskip 0.1in \frac{h^4}{2n\delta_{2h}^4\sqrt{\pi\delta_{0h}^2}}\left(
    \left[f_{2,\lambda/2}(x)+\frac{h^2}{4}f_{2,\lambda/2}''(x)+o(h^2)\right]-x\left[f_{1,\lambda/2}(x)+\frac{h^2}{4}f_{1,\lambda/2}''(x)+o(h^2)\right]\right)\\
    &&\hskip 0.1in +\left(\frac{h^4x^2}{2\delta_{2h}^4 n\sqrt{\pi\delta_{0h}^2}}+\frac{\sigma_u^2 h^4}{2n\delta_{2h}^2\sqrt{\pi}\delta_{0h}^3}\right)
      \left[f_{0,\lambda/2}(x)+\frac{h^2}{4}f_{0,\lambda/2}''(x)+o(h^2)\right]\\
   &&\hskip 0.1in -\frac{1}{n}\Bigg(\frac{h^2}{\delta_{1h}^2}
     \left[f_{1,\lambda}(x)+\frac{h^2}{2}f_{1,\lambda}''(x)\right]-\frac{xh^2}{\delta_{1h}^2}
     \left[f_{0,\lambda}(x)+\frac{h^2}{2}f_{0,\lambda}''(x)\right]+o\left(\frac{h^2}{2\delta_{1h}^2}\right)\Bigg)^2.
  \end{eqarray*}\\ \vskip -0.3in \noindent
 This concludes the proof of Lemma \ref{lem4}.
 \end{proof}

 \begin{lem}\label{lem5}
   For $\tilde S_{n2}(x)$, we have
    \begin{eqarray*}
      E(\tilde S_{n2}(x))&=&h^2f_{0,\lambda}(x)+o(h^2),\quad \lambda\geq 0,\\[0.1in]
      \mbox{Var}(\tilde S_{n2}(x))&=&\left\{
       \begin{array}{l}
        \displaystyle\frac{h^4}{n}\left[\frac{f_{0,\lambda/2}(x)}{2\sqrt{\pi\lambda\sigma_u^2}}-f_{0,\lambda}^2(x)\right]+o\left(\frac{h^4}{n}\right),\quad
        \lambda>0,\\[0.2in]
       \displaystyle\frac{3h^3f_{0,0}(x)}{8n\sqrt{\pi}}+o\left(\frac{h^3}{n}\right), \quad \lambda=0.
       \end{array}
       \right.
    \end{eqarray*}
 \end{lem}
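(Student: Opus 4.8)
The plan is to treat $\tilde S_{n2}(x)$ as an i.i.d.\ average and read off its two building blocks directly from the conditional-expectation identity (\ref{eq3.5}). With $r(\lambda,h)=h^2/\delta_{0h}^2$, that formula writes $\tilde S_{n2}(x)=n^{-1}\sum_{i=1}^n W_i$ where
$$W_i=\frac{h^4}{\delta_{0h}^4}(Z_i-x)^2\phi(x;Z_i,\delta_{0h}^2)+\frac{\lambda\sigma_u^2h^2}{\delta_{0h}^2}\phi(x;Z_i,\delta_{0h}^2),$$
so that $E\tilde S_{n2}(x)=EW_1$ and $\mathrm{Var}(\tilde S_{n2}(x))=n^{-1}(EW_1^2-(EW_1)^2)$. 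Every expectation I need is then of the schematic form $E[(Z-x)^k\phi^m(x;Z,\delta_{0h}^2)]$ with $k\in\{0,1,2,4\}$ and $m\in\{1,2\}$, exactly the objects handled in Lemmas \ref{lem1}--\ref{lem3}.

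For the mean, I would first integrate out $U$ in $E[(Z-x)^2\phi(x;Z,\delta_{0h}^2)]$ by the Gaussian product rule of Lemma \ref{lem2} (combining the variances $\delta_{0h}^2$ and $\sigma_u^2$ into $\delta_{1h}^2$), which collapses the double integral to $\int[\delta_{0h}^2\sigma_u^2/\delta_{1h}^2+(\delta_{0h}^4/\delta_{1h}^4)(t-x)^2]\phi(t;x,\delta_{1h}^2)f_X(t)\,dt$, and then expand each piece in $h^2$ via Lemma \ref{lem1} with $a=1$, $c=(\lambda+1)\sigma_u^2$. For $\lambda>0$ the prefactor $h^4/\delta_{0h}^4$ makes the whole first block $O(h^4)$, and the leading $h^2f_{0,\lambda}(x)$ comes from the additive second block, since $\lambda\sigma_u^2/\delta_{0h}^2=1-h^2/(\lambda\sigma_u^2)+O(h^4)$ and $E\phi(x;Z,\delta_{0h}^2)=f_{0,\lambda}(x)+O(h^2)$ by Lemma \ref{lem3}. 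In the degenerate case $\lambda=0$ the additive block vanishes but $h^4/\delta_{0h}^4=1$, and the $\delta_{0h}^2\sigma_u^2/\delta_{1h}^2$ piece then supplies the same $h^2f_{0,0}(x)$; either way $E\tilde S_{n2}(x)=h^2f_{0,\lambda}(x)+o(h^2)$.

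For the variance I would square $W_1$ into three pieces and rewrite every $\phi^2(x;Z,\delta_{0h}^2)$ as $(2\sqrt{\pi\delta_{0h}^2})^{-1}\phi(x;Z,\delta_{0h}^2/2)$ using Lemma \ref{lem2}, then evaluate each $E[(Z-x)^k\phi^2]$ by the same integrate-out-$U$/Lemma \ref{lem1} route already deployed in the proof of Lemma \ref{lem4}. The crux is an order count in $h$, and the two regimes separate here. When $\lambda>0$ we have $\delta_{0h}^2\to\lambda\sigma_u^2=O(1)$, so the three squared pieces carry prefactors $h^8/\delta_{0h}^8$, $h^6/\delta_{0h}^6$ and $\lambda^2\sigma_u^4h^4/\delta_{0h}^4$; only the last survives at order $h^4$, contributing $h^4 f_{0,\lambda/2}(x)/(2\sqrt{\pi\lambda\sigma_u^2})$ after invoking the $E\phi^2$ expansion (\ref{eq8.2}), and subtracting $(EW_1)^2=h^4f_{0,\lambda}^2(x)+o(h^4)$ gives the stated $\lambda>0$ formula. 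When $\lambda=0$ the additive block vanishes and $\delta_{0h}^2=h^2\to0$, so the single remaining term is $E[(Z-x)^4\phi^2(x;Z,h^2)]=(2h\sqrt{\pi})^{-1}E[(Z-x)^4\phi(x;Z,h^2/2)]$; a change-of-variable expansion of the concentrating kernel gives $E[(Z-x)^4\phi(x;Z,h^2/2)]=\tfrac34 h^4 f_{0,0}(x)(1+o(1))$, hence $\tfrac{3}{8}h^3 f_{0,0}(x)/\sqrt\pi$, which dominates $(EW_1)^2=O(h^4)$ and produces the $\lambda=0$ line.

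The main obstacle is precisely this bookkeeping together with the degeneracy at $\lambda=0$: for $\lambda>0$ the factor $\delta_{0h}^{-2}$ stays bounded and the variance scales like $h^4/n$, whereas at $\lambda=0$ the $\phi^2$ identity injects a $\delta_{0h}^{-1}=h^{-1}$ blow-up that changes the rate to $h^3/n$. Keeping the $h$-powers straight across the three cross terms---so that genuinely $o(h^4/n)$ (resp.\ $o(h^3/n)$) contributions are discarded without prematurely dropping a leading term---is the delicate part; the Gaussian algebra itself is routine once Lemmas \ref{lem1} and \ref{lem2} are in hand.
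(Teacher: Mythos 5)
Your proposal is correct and follows essentially the same route as the paper: the identical decomposition of $\tilde S_{n2}(x)$ into the two blocks of (\ref{eq3.5}), the same use of Lemmas \ref{lem1}--\ref{lem2} after integrating out $U$, and the same three-piece variance expansion with prefactors $h^8/\delta_{0h}^8$, $h^6/\delta_{0h}^6$, $\lambda^2\sigma_u^4h^4/\delta_{0h}^4$ and the attendant order count separating $\lambda>0$ from $\lambda=0$. Your only deviation is a harmless shortcut at $\lambda=0$, where you evaluate $E[(Z-x)^4\phi(x;Z,h^2/2)]$ by directly expanding the concentrating kernel against the smooth density of $Z$ (yielding $3(h^2/2)^2 f_{0,0}(x)$, hence $\tfrac{3}{8}h^3f_{0,0}(x)/\sqrt{\pi}$), whereas the paper runs the general $A(h;\lambda)$, $B(h;\lambda)$, $C(h;\lambda)$ Gaussian-moment bookkeeping for all $\lambda\geq 0$ and then specializes.
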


 \begin{proof}[Proof of Lemma \ref{lem5}.] 
 Note that
   \begin{eqarray*}
    &&E(Z-x)^2\phi(x,Z,\delta_{0h}^2)=\iint (t+u-x)^2\phi(x;t+u,\delta_{0h}^2)\phi(u;0,\sigma_u^2)f_X(t)dudt\nonumber\\
    &=&\iint (t+u-x)^2\phi\left(t,x,\delta_{1h}^2\right)\phi\left(u;
    \frac{\sigma_u^2(x-t)}{\delta_{1h}^2}, \frac{\sigma_u^2\delta_{0h}^2}{\delta_{1h}^2}\right)f_X(t)dudt\nonumber\\
    &=&\int (t-x)^2\phi\left(t,x,\delta_{1h}^2\right)f_X(t)dt-\frac{2\sigma_u^2}{\delta_{1h}^2}\int (t-x)^2\phi\left(t,x,\delta_{1h}^2\right)f_X(t)dt\nonumber\\
    &&+\int
    \left(\frac{\sigma_u^2\delta_{0h}^2}{\delta_{1h}^2}+
    \left[\frac{\sigma_u^2(x-t)}{\delta_{1h}^2}\right]^2
    \right)
    \phi\left(t,x,\delta_{1h}^2\right)f_X(t)dt\nonumber\\
    &=&\frac{\delta_{0h}^4}{\delta_{1h}^4}
       \int (t-x)^2\phi\left(t,x,\delta_{1h}^2\right)f_X(t)dt+\frac{\sigma_u^2\delta_{0h}^2}{\delta_{1h}^2}\int
    \phi\left(t,x,\delta_{1h}^2\right)f_X(t)dt\\
    &=&\frac{\delta_{0h}^4}{\delta_{1h}^4}
    \left[f_{2,\lambda}(x)+\frac{h^2}{2}f_{2,\lambda}''(x)+o(h^2)-2x
    \left(f_{1,\lambda}(x)+\frac{h^2}{2}f_{1,\lambda}''(x)+o(h^2)\right)\right.\\
    &&\left.+x^2   \left(f_{0,\lambda}(x)+\frac{h^2}{2}f_{0,\lambda}''(x)+o(h^2)\right)\right]+ \frac{\sigma_u^2\delta_{0h}^2}{\delta_{1h}^2}\left[f_{0,\lambda}(x)+\frac{h^2}{2}f_{0,\lambda}''(x)+o(h^2)\right].
   \end{eqarray*}\\ \vskip -0.3in \noindent
 Therefore, we get
  \begin{eqarray*}
    &&E[\tilde S_{2n}(x)]=\frac{h^4}{\delta_{0h}^4}\Bigg(\left[\frac{\delta_{0h}^2}{\delta_{1h}^2}\right]^2
    \left[f_{2,\lambda}(x)+\frac{h^2}{2}f_{2,\lambda}''(x)+o(h^2)
    -2x \left(f_{1,\lambda}(x)+\frac{h^2}{2}f_{1,\lambda}''(x)+o(h^2)\right)\right.\\
    &&\hskip 0.5in \left.+x^2
      \left(f_{0,\lambda}(x)+\frac{h^2}{2}f_{0,\lambda}''(x)+o(h^2)\right)\right]+ \frac{\sigma_u^2\delta_{0h}^2}{\delta_{1h}^2}\left[f_{0,\lambda}(x)+\frac{h^2}{2}f_{0,\lambda}''(x)+o(h^2)\right]\Bigg)\\
    &&\hskip 0.5in +\frac{\lambda\sigma_u^2h^2}{\delta_{0h}^2} \left[f_{0,\lambda}(x)+\frac{h^2}{2}f_{0,\lambda}''(x)+o(h^2)\right].
  \end{eqarray*}\\ \vskip -0.3in \noindent
 Then we have to calculate $E(Z-x)^4\phi^2(x,Z,\delta_{0h}^2)$. Recall that for $u\sim N(\mu,\sigma_u^2)$, we have
 $Eu^3=3\mu\sigma_u^2+\mu^3$, $Eu^4=3\sigma^4+6\mu^2\sigma_u^2+\mu^4$. So,
 \begin{eqarray*}
    &&E(Z-x)^4\phi^2(x,Z,\delta_{0h}^2)=\iint (t+u-x)^4\phi^2(u; x-t,\delta_{0h}^2)\phi(u;0,\sigma_u^2)f_X(t)dudt\nonumber\\
    &=&\frac{1}{2\sqrt{\pi\delta_{0h}^2}}\iint (t+u-x)^4\phi\left(u;x-t,\frac{\delta_{0h}^2}{2}\right)\phi(u;0,\sigma_u^2)f_X(t)dudt\nonumber\\
    &=&\frac{1}{2\sqrt{\pi\delta_{0h}^2}}\iint (t+u-x)^4\phi\left(t,x,\frac{\delta_{2h}^2}{2}\right)\phi\left(u;
    \frac{2\sigma_u^2(x-t)}{\delta_{2h}^2}, \frac{\sigma_u^2\delta_{0h}^2}{\delta_{2h}^2}\right)f_X(t)dudt\nonumber\\
    &=&\frac{1}{2\sqrt{\pi\delta_{0h}^2}}\int (t-x)^4\phi\left(t,x,\frac{\delta_{2h}^2}{2}\right)f_X(t)dt-
    \frac{4\sigma_u^2}{\delta_{2h}^2\sqrt{\pi\delta_{0h}^2}}\int (t-x)^4\phi\left(t,x,\frac{\delta_{2h}^2}{2}\right)f_X(t)dt\nonumber\\
    &&+\frac{6}{2\sqrt{\pi\delta_{0h}^2}}\int (t-x)^2\phi\left(t,x,\frac{\delta_{2h}^2}{2}\right)\left[
    \left(\frac{2\sigma_u^2(x-t)}{\delta_{2h}^2}\right)^2+ \frac{\sigma_u^2\delta_{0h}^2}{\delta_{2h}^2}\right]f_X(t)dt\nonumber\\
    &&+\frac{4}{2\sqrt{\pi\delta_{0h}^2}}\int (t-x)\phi\left(t,x,\frac{\delta_{2h}^2}{2}\right)\left[
    3\frac{\sigma_u^2\delta_{0h}^2}{\delta_{2h}^2}\frac{2\sigma_u^2(x-t)}{\delta_{2h}^2}+ \left[\frac{2\sigma_u^2(x-t)}{\delta_{2h}^2}\right]^3\right]f_X(t)dt\nonumber\\
    &&+\frac{1}{2\sqrt{\pi\delta_{0h}^2}}\int \phi\left(t,x,\frac{\delta_{2h}^2}{2}\right)\left[
      3\left(\frac{\sigma_u^2\delta_{0h}^2}{\delta_{2h}^2}+\left(\frac{2\sigma_u^2(x-t)}
      {\delta_{2h}^2}\right)^2\right)^2-
      2\left(\frac{2\sigma_u^2(x-t)}{\delta_{2h}^2}\right)^4\right]dt
   \end{eqarray*}\\ \vskip -0.3in \noindent
  It can be further written as
   \begin{eqarray*}
    &&\left(\frac{1}{2\sqrt{\pi\delta_{0h}^2}}-\frac{4\sigma_u^2}{\delta_{2h}^2\sqrt{\pi\delta_{0h}^2}}\right)\int (t-x)^4\phi\left(t,x,\frac{\delta_{2h}^2}{2}\right)f_X(t)dt\nonumber\\
    &&\hskip 0.3in +\frac{6}{2\sqrt{\pi\delta_{0h}^2}}\int (t-x)^2\phi\left(t,x,\frac{\delta_{2h}^2}{2}\right)\left[
    \left(\frac{2\sigma_u^2(x-t)}{\delta_{2h}^2}\right)^2+ \frac{\sigma_u^2\delta_{0h}^2}{\delta_{2h}^2}\right]f_X(t)dt\nonumber\\
    &&\hskip 0.3in +\frac{4}{2\sqrt{\pi\delta_{0h}^2}}\int (t-x)\phi\left(t,x,\frac{\delta_{2h}^2}{2}\right)\left[
    3\frac{\sigma_u^2\delta_{0h}^2}{\delta_{2h}^2}\frac{2\sigma_u^2(x-t)}{\delta_{2h}^2}+ \left[\frac{2\sigma_u^2(x-t)}{\delta_{2h}^2}\right]^3\right]f_X(t)dt\nonumber\\
    &&\hskip 0.3in+\frac{1}{2\sqrt{\pi\delta_{0h}^2}}\int \phi\left(t,x,\frac{\delta_{2h}^2}{2}\right)\left[
      3\left(\frac{\sigma_u^2\delta_{0h}^2}{\delta_{2h}^2}+\left(\frac{2\sigma_u^2(x-t)}{\delta_{2h}^2}\right)^2\right)^2
      -2
      \left(\frac{2\sigma_u^2(x-t)}{\delta_{2h}^2}\right)^4\right]dt
   \end{eqarray*}\\ \vskip -0.3in \noindent
 Let
  \begin{eqarray*}
    A(h;\lambda)&=&\frac{1}{2\sqrt{\pi\delta_{0h}^2}}
    \left[\frac{\delta_{0h}^2}{\delta_{2h}^2}\right]^4, \quad B(h;\lambda)=\frac{3\sigma_u^2\sqrt{\delta_{0h}^2}}{\delta_{2h}^2}\left[1-
     \frac{4\sigma_u^2}{\delta_{2h}^2}+\frac{4\sigma_u^4}{\delta_{2h}^4}\right]
                    =\frac{3\sigma_u^2\delta_{0h}^5}{\sqrt{\pi}\delta_{2h}^6},\\
    C(h;\lambda)&=&\frac{3\sigma_u^4\delta_{0h}^2\sqrt{\delta_{0h}^2}}{2\sqrt{\pi}\delta_{2h}^4}.
  \end{eqarray*}\\ \vskip -0.3in \noindent
 Then,
  \begin{eqarray*}
    &&E(Z-x)^4\phi^2(x,Z,\delta_{0h}^2)=A(h;\lambda)\int (t-x)^4\phi\left(t,x,\frac{\delta_{2h}^2}{2}\right)f_X(t)dt\\
    &&\hskip 0.3in +B(h;\lambda)\int (t-x)^2\phi\left(t,x,\frac{\delta_{2h}^2}{2}\right)f_X(t)dt+C(h;\lambda)\int \phi\left(t,x,\frac{\delta_{2h}^2}{2}\right)f_X(t)dt\\
    &=&A(h;\lambda)\int t^4\phi\left(t,x,\frac{\delta_{2h}^2}{2}\right)f_X(t)dt-4xA(h;\lambda)\int t^3\phi\left(t,x,\frac{\delta_{2h}^2}{2}\right)f_X(t)dt\\
    &&+(6x^2A(h;\lambda)+B(h;\lambda))\int t^2\phi\left(t,x,\frac{\delta_{2h}^2}{2}\right)f_X(t)dt\\
    &&-(4x^3A(h;\lambda)+2xB(h;\lambda))\int t\phi\left(t,x,\frac{\delta_{2h}^2}{2}\right)f_X(t)dt\\
    &&+[x^4A(h;\lambda)+x^2B(h;\lambda)+C(h;\lambda)]\int \phi\left(t,x,\frac{\delta_{2h}^2}{2}\right)f_X(t)dt\\
    &=&A(h;\lambda)\left[f_{4,\lambda/2}(x)+\frac{h^2}{4}f''_{4,\lambda/2}(x)+o(h^2)\right]-4xA(h;\lambda)\left[f_{3,\lambda/2}(x)+\frac{h^2}{4}f''_{3,\lambda/2}(x)+o(h^2)\right]\\
    &&+(6x^2A(h;\lambda)+B(h;\lambda))\left[f_{2,\lambda/2}(x)+\frac{h^2}{4}f''_{2,\lambda/2}(x)+o(h^2)\right]\\
    &&-(4x^3A(h;\lambda)+2xB(h;\lambda))\left[f_{1,\lambda/2}(x)+\frac{h^2}{4}f''_{1,\lambda/2}(x)+o(h^2)\right]\\
    &&+[x^4A(h;\lambda)+x^2B(h;\lambda)+C(h;\lambda)]\left[f_{0,\lambda/2}(x)+\frac{h^2}{4}f''_{0,\lambda/2}(x)+o(h^2)\right].
  \end{eqarray*}\\ \vskip -0.3in \noindent
 Summarizing above derivations eventually leads to
  \begin{eqarray*}
    \mbox{var}[\tilde S_{2n}(x)]&=&\frac{h^8}{n\delta_{0h}^8}\left[E(Z-x)^4\phi^2(x; Z,\delta_{0h}^2)
    -(E(Z-x)^2\phi(x; Z,\delta_{0h}^2))^2\right]\\
    && +\frac{\lambda^2\sigma_u^4h^4}{n\delta_{0h}^4}\left[E\phi^2(x; Z,\delta_{0h}^2)
    -(E\phi(x; Z,\delta_{0h}^2))^2\right]+\frac{2\lambda\sigma_u^2h^6}{n\delta_{0h}^6}\cdot\\
    &&\hskip 0.2in\left[E(Z-x)^2\phi^2(x; Z,\delta_{0h}^2)
    -E(Z-x)^2\phi(x; Z,\delta_{0h}^2)E\phi(x; Z,\delta_{0h}^2)\right].
  \end{eqarray*}\\ \vskip -0.3in \noindent
 This concludes the proof of Lemma \ref{lem5}.
 \end{proof}

 \begin{lem}\label{lem6}
   For $\tilde T_{n0}(x)$, we have
     \begin{eqarray*}
        E(\tilde T_{n0}(x))&=&g_{0,\lambda}(x)+\frac{h^2}{2}g''_{0,\lambda}(x)+o(h^2), \quad \lambda>0,\\
        \mbox{Var}(\tilde T_{n0}(x))&=& \left\{
        \begin{array}{l}
        \displaystyle\frac{1}{n}\left[\frac{1}{2\sqrt{\lambda\pi\sigma_u^2}}[G_{0,\lambda/2}(x)+H_{0,\lambda/2}(x)]-g_{0,\lambda}^2(x)\right]
                   +O\left(\frac{h^2}{n}\right),\quad\lambda>0,\\
        \displaystyle\frac{1}{2nh\sqrt{\pi}}[G_{0,0}(x)+H_{0,0}(x)]+o\left(\frac{1}{nh}\right),\quad\lambda=0.
         \end{array}
         \right.
     \end{eqarray*}
 \end{lem}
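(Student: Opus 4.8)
The plan is to exploit the fact that $\tilde T_{n0}(x)=n^{-1}\sum_{i=1}^n Y_i\phi(x;Z_i,\delta_{0h}^2)$ is an average of i.i.d.\ summands, so that $E(\tilde T_{n0}(x))=E[Y\phi(x;Z,\delta_{0h}^2)]$ and $\mathrm{Var}(\tilde T_{n0}(x))=n^{-1}\mathrm{Var}(Y\phi(x;Z,\delta_{0h}^2))$. The whole lemma then reduces to evaluating one first moment and one second moment of a single summand, and the key observation is that each is structurally identical to a computation already performed in Lemma~\ref{lem3}, with the weight $f_X$ replaced by a suitably weighted version of $f_X$.

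For the mean, I would first integrate out the regression error. Writing $Y=g(X)+\vep$ and using $E(\vep\mid X)=0$ together with the independence of $\vep$ and the measurement error $U$, the $\vep$-contribution vanishes and $E[Y\phi(x;Z,\delta_{0h}^2)]=E[g(X)\phi(x;X+U,\delta_{0h}^2)]$. This is precisely the integral treated in Lemma~\ref{lem3}, but with $f_X$ replaced by $gf_X$: integrating out $U$ against $\phi(\cdot;0,\sigma_u^2)$ merges the two Gaussians into one of variance $\delta_{0h}^2+\sigma_u^2=\delta_{1h}^2$, and an application of Lemma~\ref{lem1} with $a=1$, $c=(\lambda+1)\sigma_u^2$ and $m=gf_X$ delivers $g_{0,\lambda}(x)+\frac{h^2}{2}g_{0,\lambda}''(x)+o(h^2)$.

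For the variance I would compute the second moment $E[Y^2\phi^2(x;Z,\delta_{0h}^2)]$. Conditioning on $X$ and using $E(Y^2\mid X)=g^2(X)+\tau^2(X)$ replaces $Y^2$ by $g^2(X)+\tau^2(X)$ inside the expectation. The first identity in Lemma~\ref{lem2} lets me rewrite $\phi^2(x;Z,\delta_{0h}^2)=(2\sqrt{\pi\delta_{0h}^2})^{-1}\phi(x;Z,\delta_{0h}^2/2)$; integrating out $U$ then produces a Gaussian of variance $\delta_{0h}^2/2+\sigma_u^2=\delta_{2h}^2/2$, so that by Lemma~\ref{lem1},
\[
 E[Y^2\phi^2(x;Z,\delta_{0h}^2)]=\frac{G_{0,\lambda/2}(x)+H_{0,\lambda/2}(x)}{2\sqrt{\pi\delta_{0h}^2}}+O\!\left(\frac{h^2}{\sqrt{\delta_{0h}^2}}\right).
\]
Subtracting $(E[Y\phi(x;Z,\delta_{0h}^2)])^2=g_{0,\lambda}^2(x)+O(h^2)$ and dividing by $n$ gives the variance. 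It remains to split into the two regimes. For $\lambda>0$ one has $\delta_{0h}^2\to\lambda\sigma_u^2$, the prefactor tends to $(2\sqrt{\pi\lambda\sigma_u^2})^{-1}$, and both terms are $O(1)$, yielding the $O(1/n)$ expression. For $\lambda=0$ one has $\delta_{0h}^2=h^2$, so the second-moment term behaves like $(2h\sqrt{\pi})^{-1}[G_{0,0}(x)+H_{0,0}(x)]$ and dominates the bounded squared-mean term, producing the stated $1/(2nh\sqrt{\pi})$ leading behaviour.

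The genuinely new ingredient compared with Lemma~\ref{lem3} is the disentangling of the regression error $\vep$ from the measurement error $U$: justifying $E(Y\mid X)=g(X)$ and $E(Y^2\mid X)=g^2(X)+\tau^2(X)$ inside the Gaussian-weighted integrals relies on the conditional independence of $\vep$ and $U$ given $X$ and on the moment and smoothness conditions in C1 applied to $gf_X$ and $(g^2+\tau^2)f_X$. I expect the main obstacle to be the bookkeeping in the variance: one must track carefully which remainder terms are $o(h^2)$ and which are $o(1/h)$, since the leading rate changes between $\lambda>0$ (rate $1/n$) and $\lambda=0$ (rate $1/nh$), and only then do the correct constants $G_{0,\lambda/2}(x)+H_{0,\lambda/2}(x)$ and $G_{0,0}(x)+H_{0,0}(x)$ emerge.
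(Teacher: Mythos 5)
Your proposal is correct and follows essentially the same route as the paper's proof: reduce to moments of a single summand, condition to replace $Y$ by $g(X)$ and $Y^2$ by $g^2(X)+\tau^2(X)$, apply Lemma~\ref{lem2} to write $\phi^2(x;Z,\delta_{0h}^2)=(2\sqrt{\pi\delta_{0h}^2})^{-1}\phi(x;Z,\delta_{0h}^2/2)$, merge Gaussians to variances $\delta_{1h}^2$ and $\delta_{2h}^2/2$, and expand via Lemma~\ref{lem1} before splitting into the $\lambda>0$ and $\lambda=0$ regimes. Your treatment of the remainder orders and of the disentangling of $\vep$ from $U$ matches the paper's argument, so there is nothing to add.
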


 \begin{proof}[Proof of Lemma \ref{lem6}.]  Note that
  \begin{eqarray*}
    &&E[Y\phi(x;Z,\delta_{0h}^2)]= E[(g(X)+\vep)\phi(x; Z,\delta_{0h}^2)]\\
    &=&E\left(E[(g(X)+\vep)\phi(x; Z,\delta_{0h}^2)\Big|X,U]\right)= E[g(X)\phi(x; Z,\delta_{0h}^2)] \\
    &=&\int g(t)f_X(t)\phi(x-t; 0, (\lambda+1)\sigma^2)dt
       +\frac{h^2}{2}\frac{\partial^2}{\partial x^2}\int g(t)f_X(t)\phi(x-t; 0, (\lambda+1)\sigma_u^2)dt+o(h^2).
  \end{eqarray*}\\ \vskip -0.3in \noindent
 Note that $\tau^2(X)=E(\vep^2|X)$, we also have
  \begin{eqarray*}
    &&E[Y^2\phi^2(x;Z,\delta_{0h}^2)]=E\left(E[(g(X)+\vep)^2\phi^2(x; Z,\delta_{0h}^2)\Big|X,U]\right)\\
    &=& E[(g^2(X)+\tau^2(X))\phi^2(x; Z,\delta_{0h}^2)]=\int [g^2(t)+\tau^2(t)]\phi^2(u; x-t, \delta_{0h}^2)\phi(u;0,\sigma_u^2)f_X(t)dt\\
    &=& \frac{1}{2\sqrt{\pi\delta_{0h}^2}}\int [g^2(t)+\tau^2(t)]\phi(x-t; 0, (\lambda+2)\sigma_u^2/2)f_X(t)dt\\
    && +\frac{h^2}{8\sqrt{\pi\delta_{0h}^2}}\frac{\partial^2}{\partial x^2}\int [g^2(t)+\tau^2(t)]\phi(x-t; 0, (\lambda+2)\sigma_u^2/2)f_X(t)dt+o\left(\frac{h^2}{2\sqrt{\pi\delta_{0h}^2}}\right).
  \end{eqarray*}\\ \vskip -0.3in \noindent
Therefore,
     \begin{eqarray*}
     \mbox{var}\left[\tilde T_{0n}(x)\right]
       &=&\frac{1}{n}\left\{E[Y^2\phi^2(x;Z,\delta_{0h}^2)]-(E[Y\phi(x;Z,\delta_{0h}^2)])^2\right\} \\
     &=&\frac{1}{2n\sqrt{\pi\delta_{0h}^2}}\int [g^2(t)+\tau^2(t)]\phi(x-t; 0, (\lambda+2)\sigma_u^2/2)f_X(t)dt\\
    && +\frac{h^2}{8n\sqrt{\pi\delta_{0h}^2}}\frac{\partial^2}{\partial x^2}\int [g^2(t)+\tau^2(t)]\phi(x-t; 0, (\lambda+2)\sigma_u^2/2)f_X(t)dt\\
    &&+o\left(\frac{h^2}{2n\sqrt{\pi\delta_{0h}^2}}\right) -\frac{1}{n}\bigg{[}\int g(t)f_X(t)\phi(x-t; 0, (\lambda+1)\sigma_u^2)dt\\
     &&
       +\frac{h^2}{2}\frac{\partial^2}{\partial x^2}\int g(t)f_X(t)\phi(x-t; 0, (\lambda+1)\sigma_u^2)dt+o(h^2)\bigg{]}^2.
   \end{eqarray*}\\ \vskip -0.3in \noindent
 This implies the result in Lemma \ref{lem6}.
 \end{proof}

 \begin{lem}\label{lem7}
   For $\tilde T_{n1}(x)$, we have
    \begin{eqarray*}
       E(\tilde T_{n1}(x))&=&\frac{h^2}{(\lambda+1)\sigma_u^2}
         g_{1,\lambda}(x)-\frac{xh^2}{(\lambda+1)\sigma_u^2}g_{0,\lambda}(x)+o\left(h^2\right),\quad \lambda\geq 0, \\[0.1in]
       \mbox{Var}(\tilde T_{n1}(x))&=&\\[0.2in]
         &&\hskip -0.7in\left\{
         \begin{array}{l}
         \displaystyle\frac{h^4}{n(\lambda+2)^2\sqrt{\lambda\pi}\sigma_u^5}\left[\frac{1}{2}[G_{2,\lambda/2}(x)+H_{2,\lambda/2}(x)]-x[G_{1,\lambda/2}(x)+H_{1,\lambda/2}(x)]\right]+\\[0.2in]
          \displaystyle\hskip 0.4in \frac{h^4}{n(\lambda+2)\sqrt{\lambda\pi}\sigma_u^3}\left[\frac{x^2}{(\lambda+2)\sigma_u^2}+\frac{1}{\lambda}\right]
            [G_{0,\lambda/2}(x)+H_{0,\lambda/2}(x)]-\\[0.2in]
           \displaystyle\hskip 0.4in
           \frac{h^4}{n(\lambda+1)^2\sigma_u^4}[g_{1,\lambda}(x)-xg_{0,\lambda}(x)]^2,\quad \lambda>0,\\[0.2in]
         \displaystyle\frac{h}{4n\sqrt{\pi}}\left[G_{0,0}(x)+H_{0,0}(x)\right]+o\left(\frac{h}{n}\right),\quad \lambda=0.
         \end{array}
         \right.
     \end{eqarray*}
 \end{lem}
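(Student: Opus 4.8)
The plan is to mirror the treatment of $\tilde S_{n1}(x)$ in Lemma~\ref{lem4}, introducing only the two modifications forced by the presence of the response $Y$. Starting from the conditional expectation identity (\ref{eq3.4}), I would first write
$$
 \tilde T_{n1}(x)=\frac{h^2}{h^2+\lambda\sigma_u^2}\cdot\frac{1}{n}\sum_{i=1}^n Y_i(Z_i-x)\phi(x;Z_i,\delta_{0h}^2),
$$
so that $\tilde T_{n1}(x)$ is $r(\lambda,h)=h^2/\delta_{0h}^2$ times an i.i.d.\ average. Its mean and variance are therefore governed by the single- and double-moment quantities $E[Y(Z-x)\phi(x;Z,\delta_{0h}^2)]$ and $E[Y^2(Z-x)^2\phi^2(x;Z,\delta_{0h}^2)]$, which I would evaluate and then rescale by $r(\lambda,h)$ and $r^2(\lambda,h)$ respectively.

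For the mean I would condition on $X$ first. Since $E(\vep\mid X)=0$ and $U$ is independent of $(X,\vep)$, we have $E[Y(Z-x)\phi(x;Z,\delta_{0h}^2)]=E[g(X)(Z-x)\phi(x;Z,\delta_{0h}^2)]$. This is precisely the integral computed in Lemma~\ref{lem4}, with the weight $f_X(t)$ replaced by $g(t)f_X(t)$. Applying Lemma~\ref{lem1} with $a=1$, $c=(\lambda+1)\sigma_u^2$, and $m(t)=g(t)f_X(t)$ and $tg(t)f_X(t)$, together with the Gaussian product identities of Lemma~\ref{lem2}, would reproduce the expansion of that proof with $f_{1,\lambda}$ and $f_{0,\lambda}$ everywhere replaced by $g_{1,\lambda}$ and $g_{0,\lambda}$; the emerging factor $\delta_{0h}^2/\delta_{1h}^2$ cancels against $r(\lambda,h)$ to leave $h^2/\delta_{1h}^2$, and $\delta_{1h}^2\to(\lambda+1)\sigma_u^2$ as $h\to0$ gives the stated mean for all $\lambda\geq0$.

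For the variance I would use $E(Y^2\mid X)=g^2(X)+\tau^2(X)$, so that the double moment equals $E[(g^2(X)+\tau^2(X))(Z-x)^2\phi^2(x;Z,\delta_{0h}^2)]$. This is exactly the second-moment integral of Lemma~\ref{lem4} with $f_X(t)$ replaced by $[g^2(t)+\tau^2(t)]f_X(t)$, which turns every $f_{j,\lambda/2}$ into $G_{j,\lambda/2}+H_{j,\lambda/2}$, precisely as in the passage from Lemma~\ref{lem3} to Lemma~\ref{lem6}. The collapse $\phi^2(x;Z,\delta_{0h}^2)=(2\sqrt{\pi\delta_{0h}^2})^{-1}\phi(x;Z,\delta_{0h}^2/2)$ from Lemma~\ref{lem2}, followed by the Gaussian product formula for the $u$-integration and the expansion of Lemma~\ref{lem1}, would produce the same $\delta_{0h},\delta_{1h},\delta_{2h}$ bookkeeping as before. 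Scaling by $r^2(\lambda,h)=h^4/\delta_{0h}^4$, subtracting $(E\tilde T_{n1}(x))^2=O(h^4)$, and passing to the limits $\delta_{0h}^2\to\lambda\sigma_u^2$ and $\delta_{2h}^2\to(\lambda+2)\sigma_u^2$ for $\lambda>0$ would yield the displayed variance.

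The step I expect to be the main obstacle is the $\lambda=0$ case, where the regime is degenerate: $r(0,h)=1$, $\delta_{0h}^2=h^2\to0$, and the terms carrying $\lambda$ in the denominator (valid for $\lambda>0$) no longer apply. Here the leading contribution is of order $h/n$ rather than $h^4/n$: the self-convolution prefactor $(2\sqrt{\pi}h)^{-1}$ multiplies the $(Z-x)^2$ weight, whose conditional second moment under the concentrating density $\phi(x;\cdot,h^2/2)$ contributes a factor of order $h^2$, leaving the constant $\tfrac14$ and the term $(h/4n\sqrt{\pi})[G_{0,0}(x)+H_{0,0}(x)]$. Extracting the correct order and constant in this degenerate limit, while confirming that the $O(h^4)$ squared-mean term is negligible against it, is the delicate part; the remainder is the routine but lengthy moment bookkeeping inherited from Lemma~\ref{lem4}.
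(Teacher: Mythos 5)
Your plan coincides with the paper's own proof: the paper likewise reduces the mean and variance to the moments $E[g(X)(Z-x)\phi(x;Z,\delta_{0h}^2)]$ and $E[(g^2(X)+\tau^2(X))(Z-x)^2\phi^2(x;Z,\delta_{0h}^2)]$ via $E(\vep\mid X)=0$ and $E(Y^2\mid X)=g^2(X)+\tau^2(X)$, then runs the Lemma~\ref{lem4} bookkeeping with $f_X$ replaced by $gf_X$ and $(g^2+\tau^2)f_X$ (so $f_{j,\lambda}\mapsto g_{j,\lambda}$ and $f_{j,\lambda/2}\mapsto G_{j,\lambda/2}+H_{j,\lambda/2}$), using Lemmas~\ref{lem1} and~\ref{lem2} and the scaling $r(\lambda,h)=h^2/\delta_{0h}^2$ exactly as you describe. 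Your analysis of the degenerate $\lambda=0$ regime---leading order $h/n$ from the prefactor $(2\sqrt{\pi}h)^{-1}$ times the $O(h^2)$ second moment of the concentrating Gaussian, with the $O(h^4)$ squared mean negligible---also matches the paper's result, so the proposal is correct and essentially identical in approach.
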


  \begin{proof}[Proof of Lemma \ref{lem7}.]  Note that
\begin{eqarray*}
&&E\left[Y(Z-x)\phi(x;Z,\delta_{0h}^2)\right] =E\left[(g(X)+\vep)(Z-x)\phi(x; Z,\delta_{0h}^2)\right]\\
&=& E\left(E[(g(X)+\vep)(Z-x)\phi(x; Z,\delta_{0h}^2)\Big|X,U]\right)=E\left[g(X)(Z-x)\phi(x; Z,\delta_{0h}^2)\right] \\
&=& \iint g(t)(t+u-x)\phi(x;t+u,\delta_{0h}^2)\phi(u;0,\sigma_u^2)f_X(t)dudt\\
&=&\int g(t)(t-x)\left[\int \phi(u; x-t,\delta_{0h}^2)\phi(u,0,\sigma_u^2)du\right]f_X(t)dt\\
&&   +\int g(t)\left[\int u\phi(u; x-t,\lambda\sigma_u^2)\phi(u; 0,\sigma_u^2)du\right]f_X(t)dt   \\
&=&\int\phi(t,x,\delta_{1h}^2)g(t)(t-x)f_X(t)dt-\frac{\sigma_u^2}{\delta_{1h}^2}\int \phi(t,x,\delta_{1h}^2)g(t)(t-x)f_X(t)dt\\
&=&\frac{\delta_{0h}^2}{\delta_{1h}^2}\int\phi(t,x,\delta_{1h}^2)tg(t)f_X(t)dt-\frac{\delta_{0h}^2x}{\delta_{1h}^2}
 \int\phi(t,x,\delta_{1h}^2)g(t)f_X(t)dt\\
&=&\frac{\delta_{0h}^2}{\delta_{1h}^2}\Bigg[\int tg(t)f_X(t)\phi(t; x, (\lambda+1)\sigma_u^2)dt+\frac{h^2}{2}\frac{\partial^2}{\partial x^2}\int tg(t)f_X(t)\phi(t; x, (\lambda+1)\sigma_u^2)dt+o(h^2)\Bigg]\\
      && -\frac{\delta_{0h}^2x}{\delta_{1h}^2}\Bigg{[}\int g(t)f_X(t)\phi(t; x, (\lambda+1)\sigma_u^2)dt+\frac{h^2}{2}\frac{\partial^2}{\partial x^2}\int g(t)f_X(t)\phi(t; x, (\lambda+1)\sigma_u^2)dt+o(h^2)\Bigg]
\end{eqarray*}\\ \vskip -0.3in \noindent
Next, we see that
  \begin{eqarray*}
    &&E[Y^2(Z-x)^2\phi^2(x;Z,\delta_{0h}^2)]= E\left(E[(g(X)+\vep)^2(Z-x)^2\phi^2(x; Z,\delta_{0h}^2)\Big|X,U]\right)\\
    &=& E[(g^2(X)+\tau^2(X))(Z-x)^2\phi^2(x; Z,\delta_{0h}^2)]\\
    &=&\int [g^2(t)+\tau^2(t)]\int(t+u-x)^2\phi^2(x;t+u,\delta_{0h}^2)\phi(u;0,\sigma_u^2)f_X(t)dudt\nonumber\\
    &=& \frac{\delta_{0h}^3}{2\delta_{2h}^4\sqrt{\pi}}\bigg[\int [g^2(t)+\tau^2(t)]t^2f_X(t)\phi(x-t; 0, (\lambda+2)\sigma_u^2/2)dt + o(h^2)\bigg]\\
    && +\frac{h^2\delta_{0h}^3}{8\delta_{2h}^4\sqrt{\pi}}\frac{\partial^2}{\partial x^2}\int [g^2(t)+\tau^2(t)]t^2f_X(t)\phi(x-t; 0, (\lambda+2)\sigma_u^2/2)dt\\
    &&- \frac{x\delta_{0h}^3}{\delta_{2h}^4\sqrt{\pi}}\bigg[\int [g^2(t)+\tau^2(t)]tf_X(t)\phi(x-t; 0, (\lambda+2)\sigma_u^2/2)dt +o(h^2)\bigg]\\
    && -\frac{xh^2\delta_{0h}^3}{4\delta_{2h}^4\sqrt{\pi}}\frac{\partial^2}{\partial x^2}\int [g^2(t)+\tau^2(t)]tf_X(t)\phi(x-t; 0, (\lambda+2)\sigma_u^2/2)dt\\
    && + \frac{x^2\delta_{0h}^3}{2\delta_{2h}^4\sqrt{\pi}}\bigg[\int [g^2(t)+\tau^2(t)]f_X(t)\phi(x-t; 0, (\lambda+2)\sigma_u^2/2)dt + o(h^2)\bigg]\\
    && +\frac{x^2h^2\delta_{0h}^3}{8\delta_{2h}^4\sqrt{\pi}}\frac{\partial^2}{\partial x^2}\int [g^2(t)+\tau^2(t)]f_X(t)\phi(x-t; 0, (\lambda+2)\sigma_u^2/2)dt\\
&& + \frac{\sigma_u^2\delta_{0h}}{2\delta_{2h}^2\sqrt{\pi}}\int [g^2(t)+\tau^2(t)]f_X(t)\phi(x-t; 0, (\lambda+2)\sigma_u^2/2)dt\\
    && +\frac{\sigma_u^2h^2\delta_{0h}}{8\delta_{2h}^2\sqrt{\pi}}\frac{\partial^2}{\partial x^2}\int [g^2(t)+\tau^2(t)]f_X(t)\phi(x-t; 0, (\lambda+2)\sigma_u^2/2)dt+o\left(\frac{h^2\delta_{0h}}{2\delta_{2h}^2\sqrt{\pi}}\right)
  \end{eqarray*}\\ \vskip -0.3in \noindent
 Therefore,
  \begin{eqarray*}
     &&\mbox{var}\left[\tilde T_{1n}(x)\right]=
       \frac{h^4}{n\delta_{0h}^4}\left\{E[Y^2(Z-x)^2\phi^2(x;Z,\delta_{0h}^2)]-\bigg(E[Y(Z-x)\phi(x;Z,\delta_{0h}^2)]\bigg)^2\right\} \\
        &=& \frac{h^4}{2n\delta_{2h}^4\sqrt{\pi\delta_{0h}^2}}\bigg[\int [g^2(t)+\tau^2(t)]t^2f_X(t)\phi(x-t; 0, (\lambda+2)\sigma_u^2/2)dt +o(h^2)\bigg]\\
    &&\hskip -0.1in +\frac{h^6}{8n\delta_{2h}^4\sqrt{\pi\delta_{0h}^2}}\frac{\partial^2}{\partial x^2}\int [g^2(t)+\tau^2(t)]t^2f_X(t)\phi(x-t; 0, (\lambda+2)\sigma_u^2/2)dt\\
    &&\hskip -0.1in - \frac{xh^4}{n\delta_{2h}^4\sqrt{\pi\delta_{0h}^2}}\bigg[\int [g^2(t)+\tau^2(t)]tf_X(t)\phi(x-t; 0, (\lambda+2)\sigma_u^2/2)dt + o(h^2)\bigg]\\
    &&\hskip -0.1in  -\frac{xh^6}{4n\delta_{2h}^4\sqrt{\pi\delta_{0h}^2}}\frac{\partial^2}{\partial x^2}\int [g^2(t)+\tau^2(t)]tf_X(t)\phi(x-t; 0, (\lambda+2)\sigma_u^2/2)dt\\
    &&\hskip -0.1in  + \frac{x^2h^4}{2n\delta_{2h}^4\sqrt{\pi\delta_{0h}^2}}\bigg[\int [g^2(t)+\tau^2(t)]f_X(t)\phi(x-t; 0, (\lambda+2)\sigma_u^2/2)dt + o(h^2) \bigg]\\
    &&\hskip -0.1in  +\frac{x^2h^6}{8n\delta_{2h}^4\sqrt{\pi\delta_{0h}^2}}\frac{\partial^2}{\partial x^2}\int [g^2(t)+\tau^2(t)]f_X(t)\phi(x-t; 0, (\lambda+2)\sigma_u^2/2)dt\\
&&\hskip -0.1in  + \frac{\sigma_u^2h^4}{2n\delta_{2h}^2\delta_{0h}^3\sqrt{\pi}}\bigg[\int [g^2(t)+\tau^2(t)]f_X(t)\phi(x-t; 0, (\lambda+2)\sigma_u^2/2)dt + o(h^2)\bigg]\\
    &&\hskip -0.1in  +\frac{\sigma_u^2h^6}{8n\delta_{2h}^2\delta_{0h}^3\sqrt{\pi}}\frac{\partial^2}{\partial x^2}\int [g^2(t)+\tau^2(t)]f_X(t)\phi(x-t; 0, (\lambda+2)\sigma_u^2/2)dt\\
    &&\hskip -0.1in  -\frac{h^4}{n\delta_{1h}^4}\Bigg[\int tg(t)f_X(t)\phi(x-t; 0, (\lambda+1)\sigma_u^2)dt +\frac{h^2}{2}\frac{\partial^2}{\partial x^2}\int tg(t)f_X(t)\phi(x-t; 0, (\lambda+1)\sigma_u^2)dt\\
 &&\hskip -0.1in - x\int g(t)f_X(t)\phi(t; x, (\lambda+1)\sigma_u^2)dt -\frac{xh^2}{2}\frac{\partial^2}{\partial x^2}\int g(t)f_X(t)\phi(t; x, (\lambda+1)\sigma_u^2)dt+o(h^2)\Bigg]^2.
  \end{eqarray*}\\ \vskip -0.3in \noindent
  This concludes the proof of Lemma \ref{lem7}.
 \end{proof}

 \begin{proof}[Proof of Theorem \ref{thm1}.] 
  To verify the Lyapunov condition, we have to find out the asymptotic expansions of $Ev^2(x)$, and an upper bound for $E|v^3(x)|$. Note that
  \begin{eqarray*}
    Ev^2(x)&=& c_{0\lambda}^2E\xi_{0\lambda}^2(x)+c_{1\lambda}^2E\xi_{1\lambda}^2(x)+c_{2\lambda}^2E\xi_{2\lambda}^2(x)
              +d_{0\lambda}^2E\eta_{0\lambda}^2(x)+d_{1\lambda}^2E\eta_{1\lambda}^2(x)\\
           && +  2c_{0\lambda}c_{1\lambda}E[\xi_{0\lambda}(x)\xi_{1\lambda}(x)]
              +  2c_{0\lambda}c_{2\lambda}E[\xi_{0\lambda}(x)\xi_{2\lambda}(x)]
              +  2c_{0\lambda}d_{0\lambda}E[\xi_{0\lambda}(x)\eta_{0\lambda}(x)]\\
           && +  2c_{0\lambda}d_{1\lambda}E[\xi_{0\lambda}(x)\eta_{1\lambda}(x)]
              +  2c_{1\lambda}c_{2\lambda}E[\xi_{1\lambda}(x)\xi_{2\lambda}(x)]
              +  2c_{1\lambda}d_{0\lambda}E[\xi_{1\lambda}(x)\eta_{0\lambda}(x)]\\
           && +  2c_{1\lambda}d_{1\lambda}E[\xi_{1\lambda}(x)\eta_{1\lambda}(x)]
              +  2c_{2\lambda}d_{0\lambda}E[\xi_{2\lambda}(x)\eta_{0\lambda}(x)]
              +  2c_{2\lambda}d_{1\lambda}E[\xi_{2\lambda}(x)\eta_{1\lambda}(x)]\\
           && +  2d_{0\lambda}d_{1\lambda}E[\eta_{0\lambda}(x)\eta_{1\lambda}(x)].
  \end{eqarray*}\\ \vskip -0.3in \noindent
 Routine and tedious calculations show that when $\lambda>0$, except for
  \begin{eqarray*}
     E[\xi_{0\lambda}(x)\eta_{0\lambda}(x)]&=& \frac{1}{2\sqrt{\pi\lambda\sigma_u^2}}g_{0,\lambda/2}(x)-g_{0,\lambda}(x)f_{0,\lambda}(x)+O(h^2),
  \end{eqarray*}\\ \vskip -0.3in \noindent
 all other expectations of the cross products are of the order $O(h^2)$, which, together with the previous derivations with respect to $E\xi_{0\lambda}^2(x)$, $E\xi_{1\lambda}^2(x)$, $E\xi_{2\lambda}^2(x)$, $E\eta_{0\lambda}^2(x)$, $E\eta_{1\lambda}^2(x)$, we can obtain
  \begin{eqarray*}
    Ev^2(x)&=&c_{0\lambda}^2\left[\frac{1}{2\sqrt{\pi\lambda\sigma_u^2}}f_{0,\lambda/2}(x)-f_{0,\lambda}^2(x)\right]
       +d_{0\lambda}^2\left[\frac{1}{2\sqrt{\pi\lambda\sigma_u^2}}\{G_{0,\lambda/2}(x)+H_{0,\lambda/2}(x)\}-g_{0,\lambda}^2\right]\\
           && +2c_{0\lambda}d_{0\lambda}\left[\frac{1}{2\sqrt{\pi\lambda\sigma_u^2}}g_{0,\lambda/2}(x)-g_{0,\lambda}(x)f_{0,\lambda}(x)\right]
           +O(h^2).
  \end{eqarray*}\\ \vskip -0.3in \noindent
 When $\lambda=0$, except for $E[\xi_{00}(x)\eta_{00}(x)]=\frac{1}{2h\sqrt{\pi}}g_{0,0}(x)+O(h)$,
 all other expectations of the cross products are of the order $O(h)$, which, together with the previous derivations with respect to $E\xi_{00}^2(x)$, $E\xi_{10}^2(x)$, $E\xi_{20}^2(x)$, $E\eta_{00}^2(x)$, $E\eta_{10}^2(x)$, leads to
  \begin{eqarray*}
    Ev^2(x)&=&\frac{1}{2h\sqrt{\pi}}\left[c_{00}^2f_{00}(x)+d_{00}^2\{G_{00}(x)+H_{00}(x)\}+2c_{00}d_{00}g_{00}(x)\right]+O(h)\\
           &=&\frac{1}{2h\sqrt{\pi}}\left[\frac{G_{00}(x)+H_{00}(x)}{f_{00}^2(x)}-\frac{g_{00}^2(x)}{f_{00}^3(x)}\right]+O(h).
  \end{eqarray*}\\ \vskip -0.3in \noindent
 To find a proper order for $E|v(x)|^3$, we have to find the orders for the expectations
    \begin{eqarray*}
     && E(Z-x)\phi^2(x,Z,\delta_{0h}^2),\quad EY\phi^2(x,Z,\delta_{0h}^2),\quad EY(Z-x)\phi^2(x,Z,\delta_{0h}^2),\\
     && EY(Z-x)^2\phi^2(x,Z,\delta_{0h}^2),\quad EY^2(Z-x)\phi^2(x,Z,\delta_{0h}^2),\\
     && E\phi^3(x,Z,\delta_{0h}^2),\quad E|Z-x|^3\phi^3(x,Z,\delta_{0h}^2), \quad
     E|Z-x|^6\phi^3(x,Z,\delta_{0h}^2),\\
     && E|Y|^3\phi^3(x,Z,\delta_{0h}^2),\quad E|Y|^3|Z-x|^3\phi^3(x,Z,\delta_{0h}^2).
   \end{eqarray*}\\ \vskip -0.3in \noindent
 More complicated calculations show that
   \begin{eqarray*}
    E\phi^3(x,Z,\delta_{0h}^2)=\frac{1}{2\pi\delta_{0h}^2\sqrt{3}}
         \left[f_{0,\lambda/3}(x)+\frac{h^2}{6}f_{0,\lambda/3}''(x)+o(h^2)\right],
   \end{eqarray*}\\ \vskip -0.3in \noindent
 which is $O(1)$ when $\lambda>0$ and $O(1/h^2)$ when $\lambda=0$,
 and
   \begin{eqarray*}
   &&E|Z-x|^3\phi^3(x,Z,\delta_{0h}^2)\\
   &\leq&
   \frac{2}{\pi\sqrt{3}\delta_{0h}^2}\cdot \frac{8\sqrt{2}}{\sqrt{\pi}} \left(\frac{\sigma^2_u\delta_{0h}^2}{h^2+(\lambda+3)\sigma_u^2}\right)^\frac{3}{2}\int \phi\left(t,x,\frac{h^2+(\lambda+3)\sigma_u^2}{3}\right)f_X(t)dt\\
    && + \frac{2}{\pi\sqrt{3}\delta_{0h}^2} \cdot 4\left(\frac{3\sigma^2_u}{h^2+(\lambda+3)\sigma_u^2}\right)^3\int |x-t|^3\phi\left(t,x,\frac{h^2+(\lambda+3)\sigma_u^2}{3}\right)f_X(t)dt\\
 && + \frac{2}{\pi\sqrt{3}\delta_{0h}^2}\int |x-t|^3\phi\left(t,x,\frac{h^2+(\lambda+3)\sigma_u^2}{3}\right)f_X(t)dt
   \end{eqarray*}\\ \vskip -0.3in \noindent
 which is $O(1)$ when $\lambda>0$ and $O(1/h^2)$ when $\lambda=0$. We also have
 \begin{eqarray*}
     &&E|Z-x|^6\phi^3(x;Z,\delta_{0h}^2)\leq\frac{7680\sigma_u^6}{\pi\sqrt{3}\delta_{0h}^2} \left(\frac{\sigma^2_u\delta_{0h}^2}{h^2+(\lambda+3)\sigma_u^2}\right)^3\int \phi\left(t,x,\frac{h^2+(\lambda+3)\sigma_u^2}{3}\right)f_X(t)dt\\
     &&\hskip 0.3in + \frac{512}{\pi\sqrt{3}\delta_{0h}^2} \left(\frac{3\sigma^2_u}{h^2+(\lambda+3)\sigma_u^2}\right)^6\int |x-t|^6\phi\left(t,x,\frac{h^2+(\lambda+3)\sigma_u^2}{3}\right)f_X(t)dt\\
     &&\hskip 0.3in + \frac{16}{\pi\sqrt{3}\delta_{0h}^2}\int |x-t|^6\phi\left(t,x,\frac{h^2+(\lambda+3)\sigma_u^2}{3}\right)f_X(t)dt
	\end{eqarray*}\\ \vskip -0.3in \noindent
 which is $O(1)$ when $\lambda>0$ and $O(1/h^2)$ when $\lambda=0$. Denote $\delta(X)=E\left(|\epsilon|^3\Big|X\right)$,
then,
\begin{eqarray*}
 E|Y|^3\phi^3(x;Z,\delta_{0h}^2)\leq 4E|g(X)|^3 \phi^3(x;Z,\delta_{0h}^2) + 4E\delta(X)\phi^3(x;Z,\delta_{0h}^2).
\end{eqarray*}\\ \vskip -0.3in \noindent
 Eventually, we can show that
\begin{eqarray*}
     &&E|Y|^3\phi^3(x;Z,\delta_{0h}^2)\leq \frac{2}{\pi\sqrt{3}\delta_{0h}^2}\int \left[|g(t)|^3 + \delta(t)\right] \phi\left(t;x,\frac{h^2+(\lambda+3)\sigma_u^2}{3}\right)f_X(t)dt\nonumber
\end{eqarray*}\\ \vskip -0.3in \noindent
which is $O(1)$ for $\lambda>0$ and $O(1/h^2)$ for $\lambda=0$. Finally, for
For $E|Y|^3|Z-x|^3\phi^3(x;Z,\delta_{0h}^2)$, we can show that
\begin{eqarray*}
     &&E|Y|^3|Z-x|^3\phi^3(x;Z,\delta_{0h}^2)\nonumber\\
     &\leq&\frac{64\sqrt{2}}{\pi\sqrt{3\pi}\delta_{0h}^2} \left(\frac{\sigma^2_u\delta_{0h}^2}{h^2+(\lambda+3)\sigma_u^2}\right)^\frac{3}{2}\int \left[|g(t)|^3 + \delta(t)\right] \phi\left(t,x,\frac{h^2+(\lambda+3)\sigma_u^2}{3}\right)f_X(t)dt\\
     && +\left[ \frac{32}{\pi\sqrt{3}\delta_{0h}^2} \left(\frac{3\sigma^2_u}{h^2+(\lambda+3)\sigma_u^2}\right)^3+ \frac{8}{\pi\sqrt{3}\delta_{0h}^2}\right]\cdot\\
     && \int \left[|g(t)|^3 + \delta(t)\right]  |x-t|^3\phi\left(t,x,\frac{h^2+(\lambda+3)\sigma_u^2}{3}\right)f_X(t)dt,
	\end{eqarray*}\\ \vskip -0.3in \noindent
which is $O(1)$ when $\lambda>0$ and $O(1/h^2)$ when $\lambda=0$.

Therefore, when $\lambda>0$,
  $$
    \frac{\sum_{i=1}^n E|v_{i\lambda}(x)|^3}{\left(\sum_{i=1}^n Ev_{i\lambda}^2(x)\right)^{3/2}}=\frac{O(n)}{O(n^{3/2})}\to 0
  $$
as $n\to\infty$, and when $\lambda=0$,
  $$
    \frac{\sum_{i=1}^n E|v_{i\lambda}(x)|^3}{\left(\sum_{i=1}^n Ev_{i\lambda}^2(x)\right)^{3/2}}=\frac{O(n/h^2)}{O((n/h)^{3/2})}=O\left(\frac{1}{\sqrt{nh}}\right)\to 0.
  $$
So, by Lyapunov central limit theorem, we proved Theorem \ref{thm1}.
\end{proof}

 \begin{proof}[Proof of (\ref{eq3.3})-(\ref{eq3.5}).] By the normality assumption of $V$ and its independence from other random variables in the model, and the kernel function $K$ being the standard normal density, from Lemma \ref{lem2}, we have
   \begin{eqarray*}
     E[K_h(Z(\lambda)-x)|Y,Z]=\int \phi(v; x-Z,h^2)\phi(v; 0,\lambda\sigma_u^2)dv=\phi(x; Z, \delta_{0h}^2),
   \end{eqarray*}\\ \vskip -0.3in \noindent
 which is (\ref{eq3.3}). (\ref{eq3.4}) can be derived from the following algebra,
   \begin{eqarray*}
     && E[(Z(\lambda)-x)K_h(Z(\lambda)-x)|Y,Z]=\int (Z+v-x)\phi(v; x-Z,h^2)\phi(v; 0,\lambda\sigma_u^2)dv\\
     &=&(Z-x)\phi(x-Z; 0, \delta_{0h}^2)+\frac{\lambda\sigma_u^2(x-Z)}{\delta_{0h}^2}\phi(x-Z; 0, \delta_{0h}^2).
   \end{eqarray*}\\ \vskip -0.3in \noindent
 Finally, note that
   \begin{eqarray*}
     && E[(Z(\lambda)-x)^2K_h(Z(\lambda)-x)|Y,Z]=\int (Z+v-x)^2\phi(v; x-Z,h^2)\phi(v; 0,\lambda\sigma_u^2)dv\\
     &=&(Z-x)^2\int \phi(v; x-Z,h^2)\phi(v; 0,\lambda\sigma_u^2)dv
        +2(Z-x)\int v\phi(v; x-Z,h^2)\phi(v; 0,\lambda\sigma_u^2)dv\\
     && +\int v^2\phi(v; x-Z,h^2)\phi(v; 0,\lambda\sigma_u^2)dv\\
     &=&(Z-x)^2\phi(x-Z; 0, \delta_{0h}^2)-\frac{2\lambda\sigma_u^2(Z-x)^2}{\delta_{0h}^2}\phi(x-Z; 0, \delta_{0h}^2)\\
     && + \left[\frac{\lambda\sigma_u^2h^2}{\lambda\sigma_u^2+h^2}+\left(\frac{\lambda\sigma_u^2(x-Z)}{\lambda\sigma_u^2+h^2}\right)^2\right]
     \phi(x-Z; 0, \delta_{0h}^2),
   \end{eqarray*}\\ \vskip -0.3in \noindent
 this is exactly (\ref{eq3.5}).
 \end{proof}\vskip 0.2in

 \noindent{\it Acknowledgement:} Jianhong Shi's research is supported by the National Natural Science Foundation of China (Grant No. 12071267).

\bibliographystyle{elsarticle-harv}
\bibliography{simex}

\end{document}